\newtheorem{theorem}{Theorem}[section]
\newtheorem{corollary}[theorem]{Corollary}
\newtheorem{lemma}[theorem]{Lemma}
\newtheorem{claim}{Claim}[section]
\theoremstyle{definition}
\newtheorem{definition}{Definition}[section]
\theoremstyle{definition}
\newcommand{\tO}{\widetilde{O}}
\newcommand{\Oish}{\widetilde{O}}
\newcommand{\sss}{\textsc{SSS}}
\newcommand{\yes}{\texttt{YES}}
\newcommand{\no}{\texttt{NO}}
\newcommand{\dist}{\texttt{dist}}
\newcommand{\eps}{\varepsilon}
\title{Spanning Adjacency Oracles in Sublinear Time
\thanks{This work was supported by NSF:AF 2153680.}
}
\author[1]{Greg Bodwin}
\author[2]{Henry Fleischmann}
\date{}
\affil[1]{University of Michigan EECS. \texttt{bodwin@umich.edu}}
\affil[2]{University of Cambridge DPMMS. \texttt{hfleischmann3@gmail.com}}
\date{}
\begin{document}
\pagenumbering{gobble}

\maketitle

\begin{abstract}
Suppose we are given an $n$-node, $m$-edge input graph $G$, and the goal is to compute a spanning subgraph $H$ on $O(n)$ edges.
This can be achieved in linear $O(m + n)$ time via breadth-first search.
But can we hope for \emph{sublinear} runtime in some range of parameters---for example, perhaps $O(n^{1.9})$ worst-case runtime, even when the input graph has $n^2$ edges?

If the goal is to return $H$ as an adjacency list, there are simple lower bounds showing that $\Omega(m + n)$ runtime is necessary.
If the goal is to return $H$ as an adjacency matrix, then we need $\Omega(n^2)$ time just to write down the entries of the output matrix.
However, we show that neither of these lower bounds still apply if instead the goal is to return $H$ as an \emph{implicit} adjacency matrix, which we call an \emph{adjacency oracle}.
An adjacency oracle is a data structure that gives a user the illusion that an adjacency matrix has been computed: it accepts edge queries $(u, v)$, and it returns in near-constant time a bit indicating whether or not $(u, v) \in E(H)$.

Our main result is that, for any $0 < \eps < 1$, one can construct an adjacency oracle for a spanning subgraph on at most $(1+\eps)n$ edges, in $\Oish(n \eps^{-1})$ time (hence sublinear time on input graphs with $m \gg n$ edges), and that this construction time is near-optimal.
Additional results include constructions of adjacency oracles for $k$-connectivity certificates and spanners, which are similarly sublinear on dense-enough input graphs.

Our adjacency oracles are closely related to Local Computation Algorithms (LCAs) for graph sparsifiers; they can be viewed as LCAs with some computation moved to a preprocessing step, in order to speed up queries.
Our oracles imply the first LCAs for computing sparse spanning subgraphs of general input graphs in $\Oish(n)$ query time, which works by constructing our adjacency oracle, querying it once, and then throwing the rest of the oracle away.
This addresses an open problem of Rubinfeld [CSR '17].
\end{abstract}

\clearpage
\tableofcontents

\clearpage
\pagenumbering{arabic}
\section{Introduction}


A \emph{sparsifier} of a graph $G$ is a smaller graph $H$ that approximately preserves some important structural properties of $G$.
Examples include spectral sparsifiers, flow/cut sparsifiers, spanners, preservers, etc.
Let us focus for now on the computation of a particularly simple kind of sparsifier, which we call a \emph{sparse spanning subgraph}:
\begin{mdframed}[backgroundcolor=gray!20]
\texttt{Sparse Spanning Subgraph} (\sss{}):
\begin{itemize}
\item \textbf{Input:} An $n$-node, $m$-edge, undirected graph $G = (V, E)$ and $\eps > 0$.
\item \textbf{Output:} An edge-subgraph $H$ on at most $(1+\eps) n$ edges that spans $G$.
\end{itemize}
\end{mdframed}

An \sss{} is a slightly relaxed version of a spanning forest,\footnote{An edge-subgraph $H$ \emph{spans} a graph $G$ if it has the same connected components as $G$.  A \textit{spanning forest} of $G$ is any forest that spans $G$, i.e., it is the union of spanning trees for each connected component of $G$.} and so this problem can be solved in linear $O(m + n)$ time via breadth-first search (BFS).
We might wonder for a moment whether linear runtime for this basic algorithm is \emph{optimal}.
Especially for denser input graphs, we might dream of \emph{sublinear-time} algorithms, which try to discover one of the many valid spanning subgraphs without even reading most of the input graph.
Can we solve \sss{} in, say, $O(n^{1.9})$ worst-case time, even when the input graph has $\Theta(n^2)$ edges?

Unfortunately, the canonical answer is \emph{no}: we cannot hope to solve \sss{} without reading at least a constant fraction of the input graph.
The counterexamples generally work by planting a cut edge into an otherwise-random graph.
For example, in a lower bound construction from \cite{LRR20, PRVY19}, we construct $G$ from two node-disjoint random graphs $G_1, G_2$ on $n/2$ nodes each, which include each possible edge independently with probability $1/2$, plus a single ``cut edge'' $e$ connecting a random node from $G_1$ to a random node from $G_2$.
We absolutely must take $e$ in the spanning subgraph $H$, but we need to scan essentially the entire input graph just to find $e$.

\begin{figure} [h]
\begin{center}
\begin{tikzpicture}[scale=1]
    \node (a) at (0,0) [circle,draw=black,fill=blue!40] {};
    \node (b) at (1,0) [circle,draw=black,fill=blue!40] {};
    \node (c) at (0.5,0.87) [circle,draw=black,fill=blue!40] {};
    \node (d) at (0.5,-0.87) [circle,draw=black,fill=blue!40] {};
    
    \draw [ultra thick, gray!50!white] (a) -- (b) -- (c) -- (d) -- (a) -- cycle;
    \draw [ultra thick, gray!50!white] (d) -- (b);
    \draw [ultra thick, gray!50!white] (c) -- (a);
    \node [gray, fill=white] at (0.5, 0) {\bf \large ?};

    \node (e) at (3,0) [circle,draw=black,fill=blue!40] {};
    \node (f) at (4,0) [circle,draw=black,fill=blue!40] {};
    \node (g) at (3.5,0.87) [circle,draw=black,fill=blue!40] {};
    \node (h) at (3.5,-0.87) [circle,draw=black,fill=blue!40] {};
    
    \draw [ultra thick, gray!50!white] (e) -- (f) -- (g) -- (h) -- (e);
    \draw [ultra thick, gray!50!white] (e) -- (g);
    \draw [ultra thick, gray!50!white] (f) -- (h);
    \node [gray, fill=white] at (3.5, 0) {\bf \large ?};

    \draw[ultra thick] (b) -- (e) node [midway, above] {$e$};
\end{tikzpicture}
\end{center}
\end{figure}

The starting point of this paper is that the cut-edge lower bound, while formidable, is actually a bit restricted in scope: it does not quite apply in all graph representation models.
We explain this next.

\subsection{Adjacency Oracles and Algorithms for \sss{}}

There are two popular ways to represent graphs:
\begin{enumerate}
\item First, the \textbf{adjacency list representation} is an array of arrays.
If we index into an adjacency list $L$ with a node $v$, then $L[v]$ returns a list of the neighbors of $v$ in some arbitrary order.\footnote{We assume for this discussion that the list is given in a form where we can check its length (corresponding to $\deg(v)$) and where we can query a random neighbor, both in $\Oish(1)$ time.}

\item Second, the \textbf{adjacency matrix representation} is an $n \times n$ matrix, which holds a $1$ or $0$ in each position $(u, v)$ to represent whether $(u, v)$ is or is not an edge in the graph.
\end{enumerate}

The cut-edge lower bound for \sss{} shows that we cannot hope to return a spanning subgraph $H$ \emph{as an adjacency list} in sublinear time.
For adjacency matrices, the situation is even worse: it takes $\Omega(n^2)$ time just to fill out the entries of the matrix, so no sublinear algorithms are possible.
However, this paper will consider a tweak on the adjacency matrix model:

\begin{definition} [Adjacency Oracles]
An implicit adjacency matrix for a graph $G$, which we will call an \textbf{adjacency oracle}, is a data structure that, on query $(u, v)$, deterministically returns a bit in $\Oish(1)$ time\footnote{Here and throughout the paper, $\Oish(\cdot)$ notation hides $\text{polylog}(n)$ factors.} indicating whether or not $(u, v) \in E(G)$.
\end{definition}

The determinism in queries is essential to force the oracle to be \emph{history-independent}; that is, the graph $G$ that the oracle represents must be fixed at the end of the construction phase, and it may not depend on the order in which the oracle receives queries.
The point of the adjacency oracle model is that it is not constrained by the $\Omega(n^2)$ output size lower bound---the data structure could, in principle, have $\ll n^2$ bits---and for more subtle reasons, it escapes the cut-edge lower bound as well.
The hard part of the cut-edge lower bound is scanning to find the cut edge $e$, but when we build an adjacency oracle it is the \emph{user} who does the hard work of pointing out the edge $e$.
In other words, an adjacency oracle needs to quickly \emph{recognize} that the cut-edge $e$ is necessary for the spanning subgraph when it is received at query time, but this is potentially an easier task than proactively \emph{discovering} $e$ at preprocessing time.

Given this, we can now reopen the question of whether BFS is the most efficient algorithm to solve \sss{}, when we allow the output subgraph to be represented by an adjacency oracle.
Our first main result is that, in fact, it is not.

\begin{theorem} [Adjacency Oracles for \sss{}] \label{thm:introsss} ~
\begin{itemize}
\item \textbf{(Upper Bound)} For any $0 < \eps < 1$ and $n$-node input graph, there is a randomized algorithm that solves \sss{} with high probability in $\Oish(n\varepsilon^{-1})$ time, where the output subgraph $H$ is returned as an adjacency oracle.

\item \textbf{(Lower Bound)} However, no algorithm as above can run in $O(n^{1-\delta})$ time, for any constant $\delta>0$.
\end{itemize}
\end{theorem}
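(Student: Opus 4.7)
The plan is to handle the two parts of the theorem separately, concentrating the real work on the upper bound.

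For the upper bound, the idea is to recast a Local Computation Algorithm (LCA) for \sss{} so that its per-vertex work is amortized into a preprocessing phase and each edge query reduces to a pair of hash-table lookups. I would assign i.i.d.\ uniform random priorities $\pi_v\in[0,1]$ and implicitly define $H$ as (a) a set of \emph{parent edges} obtained by pointing each non-maximum vertex $v$ to a designated higher-priority neighbor, plus (b) a controlled number of \emph{fix-up} edges that repair the places where the parent rule fails to span a connected component of $G$. During preprocessing, for each vertex $v$ I would spend $\Oish(1/\eps)$ neighbor/adjacency queries to choose its parent: if $\deg(v)\le 1/\eps$ I enumerate all neighbors exactly, and otherwise a uniform sample of $\Oish(1/\eps)$ neighbors w.h.p.\ contains a neighbor of priority in the top-$\eps$ fraction, which suffices. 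Summed over the $n$ vertices this gives $\Oish(n/\eps)$ preprocessing time, and a $(u,v)$ query is answered by comparing the two stored parents in $\Oish(1)$; since the random priorities are fixed at preprocessing, the oracle is deterministic and history-independent.

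For the lower bound I would invoke Yao's minimax principle and amplify the cut-edge construction of \cite{LRR20,PRVY19}. Partition the vertex set into $k=n^{\delta}$ clusters of size $n^{1-\delta}$, let each cluster be an independent Erd\H{o}s--R\'enyi random graph of density $1/2$, and glue the clusters together with $k-1$ uniformly random cross-cluster ``inter-cluster tree'' edges. Each inter-cluster edge is a unique bridge in $G$ and therefore must appear in any valid $H$, so an oracle that answers \no{} on even one of them is incorrect. The main step is to show that an algorithm making fewer than $n^{1-\delta}$ neighbor/adjacency queries cannot, with constant probability, locate more than a small fraction of the $k-1$ hidden cut edges: any particular cross-cluster pair is distinguishable from the $\Theta(n^{2(1-\delta)})$ other candidates only by a single-bit adjacency probe on that exact pair, and the random intra-cluster edges leak essentially no information about the inter-cluster tree. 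A union bound over the missed cut edges then forces the oracle to be incorrect with high probability.

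The obstacle I expect to take the most effort is not the sparsity bound or the $\Oish(1)$ query time, but verifying that the implicitly defined $H$ really spans $G$ using at most $(1+\eps)n$ edges. Parent-pointer sampling can fail either by choosing a ``wrong'' parent (so two neighboring vertices' decisions diverge in an inconvenient way) or by leaving an entire small sub-component of $G$ with no parent edges out of it. The fix-up edges must be defined purely as a function of the same priority seed so that history-independence of the oracle is preserved, and their total count must stay inside the $\eps n$ slack; I would likely handle this through a logarithmic-depth hierarchical decomposition in which each level contributes a geometric share of the edge budget, recursing only into the ``stuck'' portions of the graph.
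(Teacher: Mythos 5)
Both halves of your plan have a genuine gap. For the upper bound, the crux you defer to the end---making the implicitly defined $H$ span $G$ within the $\eps n$ slack---is exactly the part your scheme does not solve, and it is where the paper's main idea lives. A single parent pointer per vertex (even chosen among $\Oish(1/\eps)$ sampled neighbors with a priority tie-break) fragments a connected component of $G$ into many trees: every local maximum of the priority becomes a root, and on bounded-degree parts of the graph there are $\Theta(n)$ of them, while a single high-degree root in a dense graph can have $\Theta(n)$ edges leaving its tree. So if your ``fix-up'' rule is to accept cross-tree edges at query time, the size bound $(1+\eps)n$ fails badly; and if instead the fix-up edges are to be identified during preprocessing ``purely as a function of the seed,'' you are back to discovering connecting/bridge edges in $\Oish(n/\eps)$ time, which is precisely the cut-edge barrier. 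The paper's resolution is different in kind: the oracle answers \yes{} to \emph{every} queried edge of $E(G)$ whose endpoints lie in different components of the forest built in preprocessing (so spanning is automatic and needs no proof), and all the work goes into ensuring that at most $\eps n$ such cross-component edges exist. That is achieved by a bucketed merging process---repeatedly sampling a uniform edge incident to components of size in $[2^b,2^{b+1})$ (this uniform sampling itself needs a dedicated data structure), merging on success, and advancing buckets only after $c\eps^{-1}2^b\log^2 n$ consecutive failures, which certifies w.h.p.\ that at most $\eps n/\log n$ useful edges remain per bucket---or, alternatively, by the random $\Oish(1)$-out theorem of \cite{HKTZZ19}. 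Your ``logarithmic-depth hierarchical decomposition'' gestures at this but contains neither the query-time spanning trick nor the sampling/stopping mechanism that makes the per-level edge budget provable.

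The lower bound argument is also not sound as stated. You force incorrectness by claiming the oracle must answer \no{} on a hidden bridge it failed to locate, but a valid oracle is never obliged to locate bridges: since it may consult (or be promised) membership in $E(G)$, it can always answer \yes{} on any queried edge it is unsure about, and spanning is then never violated---the only thing such liberal answering can violate is the $O(n)$ size bound. In your construction the bridges are the only cross-cluster edges of $G$, so the real hardness is not ``finding the inter-cluster tree'' but the impossibility of \emph{sparsifying inside the dense random clusters} with $\Oish(1)$ probes per query: the oracle must say \no{} to almost all intra-cluster edges of unexplored clusters while keeping them connected, and ruling that out is a nontrivial LCA lower bound, not a statement that ``intra-cluster edges leak no information about the bridges.'' The paper handles this by taking $n^{1-\delta/2}$ disjoint copies of the hard instances of \cite{PRVY19} (random regular expanders with an $\Omega(n^{\delta/4})$-probe LCA lower bound for sparse spanning subgraphs) and arguing that any $o(n^{1-\delta})$-time preprocessing leaves half the components untouched, so queries there are an $\Oish(1)$-probe LCA and the size bound must blow up; that black-box import is doing the work your sketch omits.
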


As a point of clarification, we assume that the input graph $G$ is received in all useful forms, i.e., as both an adjacency list and an adjacency matrix/oracle.
The adjacency oracle for $H$ is allowed access to the adjacency oracle for $G$, so that if queried with an edge $(u, v) \notin E(G)$ it can quickly say \texttt{NO}.
However, we note that this is the \emph{only} way in which our adjacency oracles use access to $G$, and so they still works in a slightly stronger model where the adjacency oracle for $H$ may not access $G$ in any way but the user promises a priori to only query the oracle with edges from $E(G)$.


This model, in which we obtain sublinear algorithms by providing near-constant-time oracle query access to the output (and where these queries may access the input), is the typical one in sublinear algorithms.
It is sometimes called a \emph{solution oracle}, and it has been used previously as a paradigm for sublinear algorithms for vertex cover, dominating set, maximum matching, independent set, and others \cite{VY21, HKNO09, NO08, YYI12}.
It is also analogous to the model used for Local sparsifier algorithms, which we will discuss in detail shortly.
Indeed, the lower bound part of Theorem \ref{thm:introsss} uses a graph construction and analysis developed in the context of Local sparsifier algorithms for most of its heavy lifting \cite{PRVY19, LRR20}.

\subsection{Adjacency Oracles for $k$-Connectivity Certificates}

A natural generalization of a sparse spanning subgraph is a \emph{$k$-connectivity certificate}:
\begin{definition} [$k$-Connectivity Certificates]
Given a graph $G$, a subgraph $H$ is a $k$-connectivity certificate if, for any edge set $F \subseteq E(G), |F| \le k-1$, the connected components of $G \setminus F$ and $H \setminus F$ are identical.
\end{definition}

The following are two equivalent definitions of $k$-connectivity certificates often used in the literature.
We will use these instead of the above definition where convenient (e.g., Lemma \ref{lem:kcccorrect2}).
\begin{itemize}
\item For any integer $r$ and any cut $C$ in $G$ of size $r$, the size of $C$ in $H$ is at least $\min\{k, r\}$.
\item For any integer $r$ and nodes $s, t$ that are $r$-connected in $G$, they are at least $\min\{k, r\}$-connected in $H$.
\end{itemize}
A $1$-connectivity certificate is the same as a spanning subgraph, and in this sense the problem of constructing sparse $k$-connectivity certificates generalizes \sss{}. For this reason, we refer to the problem as $k$-\sss{}, with $1$-\sss{} and \sss{} identical.
The size bounds for $k$-connectivity certificates also extend those for spanning forests:
\begin{theorem} [\cite{NagamochiI92}]
Every $n$-node graph $G$ has a $k$-connectivity certificate $H$ of size $|E(H)| \le k (n-1)$.
\end{theorem}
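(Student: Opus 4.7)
The plan is to construct $H$ via the classical Nagamochi--Ibaraki scan: iteratively peel off edge-disjoint spanning forests. Set $G_1 := G$, and for $i = 1, \ldots, k$, let $F_i$ be any spanning forest of $G_i$ (i.e., a forest in $G_i$ with the same connected components as $G_i$), and define $G_{i+1} := G_i \setminus F_i$. Take $H := F_1 \cup F_2 \cup \cdots \cup F_k$. Since each $F_i$ is a forest on at most $n$ vertices, $|E(F_i)| \le n-1$, and so $|E(H)| \le k(n-1)$, matching the claimed size bound.

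The main work is to show that $H$ is a $k$-connectivity certificate. I would use the cut-based equivalent definition given in the excerpt: for every cut $C = \delta(S)$ in $G$ of size $r$, I want to verify $|C \cap E(H)| \ge \min\{k, r\}$. Fix such a cut and let $C_i := C \cap E(G_i)$. The key claim is that for each $i \in \{1, \ldots, \min\{k, r\}\}$, the forest $F_i$ contains at least one edge of $C$. Summed over $i$, this yields exactly $|C \cap E(H)| \ge \min\{k,r\}$, since the $F_i$ are edge-disjoint.

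To prove the claim, first observe by induction on $i$ that $|C_i| \ge r - (i-1)$: at each step we remove an entire forest $F_{i-1}$, which can destroy at most all of the edges in $C \cap F_{i-1}$, but the inductive bound combined with $i - 1 < \min\{k,r\} \le r$ guarantees $|C_i| \ge 1$. Now, because $C_i$ contains an edge between $S$ and $V \setminus S$, the two endpoints of that edge lie in the same connected component of $G_i$; since $F_i$ is a spanning forest of $G_i$, those endpoints must also be connected in $F_i$, and any path between them in $F_i$ must use at least one edge of $\delta(S) = C$. Hence $F_i$ contributes at least one edge of $C$ to $H$, proving the claim.

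The main (and only mildly subtle) step is the last one: using the spanning property of $F_i$ to force a $C$-edge into $F_i$ whenever $C_i$ is nonempty. Everything else, including the size bound and the telescoping, is immediate. No randomization or additional machinery is needed; this construction is entirely deterministic and structural.
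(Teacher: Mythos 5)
Your construction and size bound coincide with the paper's (peel off $k$ edge-disjoint spanning forests, each with at most $n-1$ edges), and working with the cut-based equivalent definition is legitimate since the paper declares the definitions equivalent. The problem is the bookkeeping step. The inductive bound $|C_i| \ge r-(i-1)$ is false: a single spanning forest $F_{i-1}$ of $G_{i-1}$ may contain many edges of the cut $C$, not just one. For instance, if $S$ consists of a single vertex $v$, a spanning tree of $G_{i-1}$ can contain several (even all) edges incident to $v$; if $G_{i-1}$ is bipartite with parts $S$ and $V\setminus S$, a spanning tree consists entirely of cut edges. For the same reason your key claim as stated --- that each of $F_1,\dots,F_{\min\{k,r\}}$ contains at least one edge of $C$ --- can fail: if $F_1$ already contains all of $C$, then $C_2=\emptyset$ and $F_2$ contains no edge of $C$ at all. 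So the per-round ``exactly one new cut edge per forest'' accounting does not go through.

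The theorem is still true, and the repair is precisely the dichotomy the paper uses in its Claim on the certificate construction: an edge of $C$ leaves the residual graph only by being placed into $H$, so after round $i$ one has $|C\cap E(H)| = r - |C_{i+1}|$. Hence either some $C_i$ becomes empty, in which case \emph{all} $r$ edges of $C$ have already been added to $H$, or $C_i\neq\emptyset$ in every one of the $k$ rounds, in which case each $F_i$ contributes at least one edge of $C$ (your spanning-forest argument for this step is correct), and these contributions are distinct because the forests are edge-disjoint, giving at least $k$. In both cases $|C\cap E(H)|\ge\min\{k,r\}$, i.e., ``at least $k$ cut edges or all of them,'' which is exactly the paper's formulation. (The paper then additionally deduces the component-preservation form of the definition from this cut property; you may skip that deduction only because the paper records the equivalence.)
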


We show:
\begin{theorem} [Adjacency Oracles for $k$-Connectivity Certificates]
For any $\eps > 0$, $n$-node input graph $G$, and $k = \Oish(1)$, there is a randomized algorithm that with high probability computes an adjacency oracle for a $k$-connectivity certificate $H \subseteq G$ of size $|E(H)| \le (1+\eps)kn$, and which runs in $\Oish(n \eps^{-1})$ time.
\end{theorem}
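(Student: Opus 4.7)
\emph{Proof plan.} My plan is to reduce to Theorem~\ref{thm:introsss} by applying the sparse spanning subgraph construction $k$ times in series, peeling off each round's output before the next. Concretely, set $G_1 := G$ and invoke Theorem~\ref{thm:introsss} to obtain an adjacency oracle $\mathcal{O}_1$ for a spanning subgraph $H_1 \subseteq G_1$ with $|E(H_1)| \le (1+\eps)n$; then set $G_2 := G \setminus H_1$ and repeat, producing oracles $\mathcal{O}_2, \ldots, \mathcal{O}_k$ for spanning subgraphs $H_i$ of $G_i := G \setminus (H_1 \cup \cdots \cup H_{i-1})$. The final output oracle represents $H := H_1 \cup \cdots \cup H_k$ and answers a query $(u,v)$ by returning $\bigvee_{i=1}^{k} \mathcal{O}_i(u,v)$ in $\Oish(k) = \Oish(1)$ time. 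The edge budget $|E(H)| \le k(1+\eps)n$ is immediate, and high-probability correctness across all $k = \Oish(1)$ rounds follows by a union bound.

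To confirm that $H$ is a $k$-connectivity certificate, I would use the cut formulation listed in the excerpt. Fix a cut $C \subseteq E(G)$ of size $r$. In each round $i \le k$, the residual cut $C \cap E(G_i)$ is either already empty --- which forces $|C \cap H| = r \ge \min(k, r)$ and we are done --- or still a nonempty cut of $G_i$, in which case the spanning subgraph $H_i$ must contain at least one new edge of $C$. Iterating across all $k$ rounds yields $|C \cap H| \ge \min(k, r)$, matching the $k$-connectivity certificate condition.

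The main obstacle is that Theorem~\ref{thm:introsss} assumes direct access to its input in both representations, yet in round $i$ we have such access only to $G$, not $G_i$. Simulating the adjacency-oracle view of $G_i$ is immediate: test $(u,v) \in E(G_i)$ by querying $G$'s oracle and negating against each $\mathcal{O}_j$ with $j<i$, in $\Oish(k)=\Oish(1)$ time. Simulating the adjacency-\emph{list} view --- particularly sampling a uniform random neighbor of a node $v$ in $G_i$ --- is the delicate step, since plain rejection sampling on $v$'s $G$-list costs an expected $\deg_G(v)/\deg_{G_i}(v)$ per sample, which can blow up once many of $v$'s edges have been peeled. I would address this by having each round additionally expose an explicit edge list for $H_j$ of size $\Oish(n)$ (which the round's $\Oish(n\eps^{-1})$-time construction must already enumerate internally), aggregate these into per-node ``removed-neighbor'' hash sets of total size $\Oish(kn)=\Oish(n)$, and combine short rejection-sampling attempts with a fallback to explicit enumeration of $v$'s $G_i$-neighbors whenever $\deg_{G_i}(v)$ drops far below $\deg_G(v)$. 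A careful amortized analysis --- charging enumeration cost to the removed-edge budget and using the $\Oish(n\eps^{-1})$ bound on the number of nodes touched per round --- should keep each round at $\Oish(n\eps^{-1})$ time, for a total of $\Oish(k n \eps^{-1}) = \Oish(n \eps^{-1})$.
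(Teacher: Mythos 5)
Your high-level architecture (build $k$ oracles sequentially, answer with the OR, argue via cuts that every cut of $G$ ends up with all of its edges or at least $k$ of them) is the same as the paper's, and your cut argument and edge count are fine. The genuine gap is in the peeling step. You assume each round can "expose an explicit edge list for $H_j$ of size $\Oish(n)$ (which the round's construction must already enumerate internally)." It cannot: the subgraph represented by the \sss{} oracle of Theorem~\ref{thm:introsss} consists of the $\le n-1$ explicitly recorded forest edges \emph{plus all edges of $G$ running between different connected components of that forest}. The preprocessing never discovers these up to $\eps n$ inter-component edges --- it only certifies (Lemma~\ref{lem:upedgecount}) that there are few of them --- and no sublinear-time procedure could enumerate them, since that would amount to outputting the spanning subgraph as an adjacency list and would contradict the cut-edge lower bound discussed in the introduction. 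Consequently your "removed-neighbor" hash sets cannot be built, and the residual graph $G_i$ is only membership-testable, not list-accessible: you cannot compute $\deg_{G_i}(v)$ or run the degree-bucketed sampler of Lemma~\ref{lem:edgesample} on $G_i$ without scanning adjacency lists of $G$, and the fallback "explicit enumeration of $v$'s $G_i$-neighbors" costs $\Theta(\deg_G(v))$ per node, which destroys sublinearity (and for nodes whose $G_i$-degree collapses to zero, e.g.\ a degree-one vertex whose bridge went into $H_1$, rejection sampling alone never terminates).

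The paper avoids this obstacle by \emph{not} peeling. Each oracle $A_i$ reruns the same bucketed sampling over $G$ itself (so degrees and the data structure of Lemma~\ref{lem:edgesample} never change meaning); a sampled edge to which any of $A_1,\dots,A_{i-1}$ answers \yes{} is simply reclassified as a failure, and the consecutive-failure threshold is inflated to $c\, i\, \eps^{-1} 2^b \log^2 n$. The size analysis of Lemma~\ref{lem:upedgecount} is then redone with a new failure category ("deleted edges," at most $i(1+\eps)n$ of them), keeping the per-sample success probability at $\Theta\bigl(\eps/(i \log n)\bigr)$ whenever $\ge \eps n/\log n$ success edges remain. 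If you want to salvage your reduction-style write-up, you would need to replace the explicit-edge-list step with this kind of internal modification (or some other mechanism that works with membership-only access to the previous rounds' subgraphs); as stated, the runtime claim does not go through.
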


\subsection{Adjacency Oracles vs.\ Local Sparsifier Algorithms}

We will temporarily pause discussion of our results to discuss their relationship to Local Computation Algorithms (LCAs) for graph sparsifiers, which are the conceptually closest prior work to ours.
LCAs were introduced in a classic paper by Rubinfeld, Tamir, Vardi, and Xie \cite{RTVX11}.
The high-level goal is to design an algorithm which, on input $(X, i)$, computes the $i^{th}$ part of the output associated to input $X$, ideally in sublinear time.
Interesting LCAs have been discovered for many problems, such as graph coloring, SAT solving, graph sparsifiers, etc.~\cite{ARVX12, Solomon21, VY21, EMR18, HMV16, LRY15, MPV18, RV16, MV13, MRVX12}; for more, see the survey of Levi and Medina \cite{LM17}.

An LCA algorithm for \sss{} would have input $(G, \eps, e)$, where $e \in E(G)$, and it would return either \yes{} or \no{} in such a way that for any fixed $(G, \eps)$ the set of edges $e$ to which the algorithm says \yes{} form a valid solution to \sss{}.
As in our setting, the input graph $G$ is typically given to an LCA as both an adjacency list and an adjacency oracle.
The main technical difference between adjacency oracles and LCAs for \sss{} is essentially whether the focus is placed on preprocessing or query time.
That is:
\begin{itemize}
\item LCAs do not allow any centralized preprocessing before answering queries $(G, \eps, e)$ (with the minor exception that they typically allow shared randomness across queries).
On the other hand, adjacency oracles for \sss{} allow centralized preprocessing, and the goal of the problem is to minimize this preprocessing time.

\item Adjacency Oracles insist on $\Oish(1)$ query time in order to provide the illusion that we are indexing into an adjacency matrix.
On the other hand, for LCAs the query time may be much larger, and the goal of the problem is to minimize this query time.
\end{itemize}

Thus, adjacency oracles might be viewed as an investigation of the extent to which a preprocessing phase can improve the query time for LCAs.
These differences are enough to formally separate the two models.
Levi, Ron, and Rubinfeld \cite{LRR20} proved that any LCA for \sss{} requires $\Omega(n^{1/2})$ query time.
Together with Theorem \ref{thm:introsss}, this implies a strong separation, i.e., LCAs cannot achieve $\Oish(1)$ query time as in the adjacency oracles of Theorem \ref{thm:introsss} (which bypass the lower bound of \cite{LRR20} due to their use of centralized preprocessing).
An even stronger form of this lower bound was later proved by Parter, Rubinfeld, Vakilian, and Yodpinyanee \cite{PRVY19}.
There is also a natural way to interpolate between the adjacency oracle and LCA models---allowing super-constant preprocessing and query time and investigating the tradeoff between them---but we will leave this as a possible direction for future work.

The optimal runtime of LCAs for \sss{} is a fascinating open problem; we refer to the excellent survey of Rubinfeld \cite{Rubinfeld17} for an in-depth presentation.
In particular, this survey highlights the following two open questions:

\paragraph{Open Question (c.f.~\cite{Rubinfeld17}, Problem 1).} It is known that every graph with constant maximum degree or high expansion has an LCA for \sss{} in $O(n)$ query time \cite{LRR20, LMRRS17, LL18, LRR16}.
Is there an LCA algorithm for \sss{} that works on any sparse input graph ($O(n)$ edges), and which runs in $o(n)$ query time?

\paragraph{Open Question (c.f.~\cite{Rubinfeld17}, Section 4).} Rubinfeld writes that \emph{``nothing is known [about LCAs for \sss{}] when there is no bound on the maximum degree of the input graph.''}
Subsequent work \cite{ACLP23, PRVY19, LL18} discovered algorithms for general graphs of maximum degree $\Delta$ with query complexity $\Oish(n^{2/3} \cdot \text{poly} \Delta )$, but this is still nontrivial only when $\Delta$ is a small-enough polynomial in $n$.

\vspace{0.5cm}

\noindent Our new adjacency oracles partially address these questions:
\begin{corollary} \label{cor:introlocal}
There is an LCA for \sss{} that runs in $\Oish(n \eps^{-1})$ query time.
(This algorithm works for any input graph.)
\end{corollary}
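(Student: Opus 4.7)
The plan is to reduce a single LCA query for \sss{} directly to one execution of the adjacency-oracle construction from Theorem \ref{thm:introsss}, exploiting the shared-randomness convention standard in the LCA model. On an LCA query $(G, \eps, e)$ with $e \in E(G)$, I would first use the shared random tape to run the construction algorithm from Theorem \ref{thm:introsss} on input $(G, \eps)$, producing an adjacency oracle for some spanning subgraph $H$ of $G$ with at most $(1+\eps)n$ edges. This step costs $\Oish(n\eps^{-1})$ time. Then I would issue a single query $e$ to the oracle, which returns a bit in $\Oish(1)$ time, and I would output the corresponding \yes{}/\no{} answer. The rest of the oracle's state can be discarded; only one bit is ever used.

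The per-query running time is $\Oish(n\eps^{-1}) + \Oish(1) = \Oish(n\eps^{-1})$, matching the claimed bound. Correctness rests on two essentially immediate points. The first is \emph{history-independence}, which is part of the very definition of an adjacency oracle: once the construction randomness is fixed, the oracle deterministically represents a single fixed subgraph $H \subseteq G$, and its answer on any queried edge depends only on that edge. The second is the shared-randomness convention for LCAs: every invocation of the procedure on inputs sharing the same $(G, \eps)$ reads the same random tape, hence rebuilds the same oracle, and therefore the same $H$. Consequently, the set of edges $e$ that receive the answer \yes{} across all possible queries is exactly $E(H)$, which by Theorem \ref{thm:introsss} is with high probability a valid solution to \sss{}.

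The main ``obstacle'' is conceptual rather than technical: one must be comfortable paying the oracle's entire preprocessing cost inside each LCA query, which is acceptable here precisely because that cost is already $\Oish(n\eps^{-1})$, the target query-time bound. Nothing in the argument restricts the input graph---Theorem \ref{thm:introsss} is stated for arbitrary $n$-node input graphs---so the resulting LCA works in the fully general regime highlighted by \cite{Rubinfeld17}, without any assumption on maximum degree, sparsity, or expansion. In short, the corollary is obtained by viewing the adjacency oracle as a black-box simulator of an LCA: construct it, query once, throw the rest away.
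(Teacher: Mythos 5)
Your proposal matches the paper's own proof: on each LCA query, rebuild the adjacency oracle of Theorem \ref{thm:introsss} using the shared random tape (so every query reconstructs the same subgraph $H$), query it once on the given edge, and discard the rest, giving $\Oish(n\eps^{-1})$ query time. The correctness points you highlight---history-independence of the oracle and consistency via shared randomness---are exactly the ingredients the paper relies on, so the argument is correct and essentially identical.
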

\begin{proof}
On query $(u, v)$, use Theorem \ref{thm:introsss} to construct an adjacency oracle for \sss{} in $\Oish(n \eps^{-1})$ time (using a shared tape of random bits across all possible queries).
Then query the adjacency oracle on $(u, v)$ and return the result.
\end{proof}

Corollary \ref{cor:introlocal} positively addresses the latter open question.
Although it does not resolve the former open question, it may shed light on a way forward: (1) our algorithm does not use the assumption that the input graph is sparse, and (2) our algorithm performs \emph{the same} $\Oish(n)$ work for each query, to compute the adjacency oracle, and then only the last $\Oish(1)$ steps where the adjacency oracle is accessed differ between queries.
It seems unlikely that the optimal LCA for \sss{} would be so query-independent, and so perhaps improved LCAs could be achieved by exploiting knowledge of the query $(u, v)$ at an earlier stage of the construction.

Our adjacency oracle upper bounds for \sss{} and $k$-\sss{} are not technically similar to prior work on LCAs.
However, the rest of this work connects to the LCA literature at a technical level as well, and draws on several clever constructions and techniques developed in the context of LCAs.
In particular, our lower bound for \sss{} adjacency oracles uses the lower bound construction from \cite{LRR20} as a black box, and our results on adjacency oracles for spanners (which we discuss next) draw on ideas used in LCAs for graph spanners \cite{ACLP23}.





\subsection{Adjacency Oracles for Graph Spanners}

Besides sparse spanning subgraphs, one can more strongly ask for \emph{spanners}, which preserve approximate distances rather than just connectivity.
\begin{definition} [Spanners \cite{PU89jacm, PU89sicomp}]
Given a graph $G$, a $k$-spanner is an edge-subgraph $H$ satisfying $\dist_H(s, t) \le k \cdot \dist_G(s, t)$ for all nodes $s, t \in V$.
\end{definition}

We show the following two results for computing spanners:

\begin{theorem}[Adjacency Oracles for $3$-Spanners]
For any $n$-node input graph, there is a randomized algorithm that with high probability computes an adjacency oracle for a $3$-spanner $H$ of size $|E(H)| = \Oish(n^{3/2})$, and which runs in $\Oish(n^{3/2})$ time.
\end{theorem}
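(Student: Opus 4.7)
The plan is to adapt the Baswana-Sen $3$-spanner construction, whose output has size $\Oish(n^{3/2})$, to our adjacency oracle framework using ideas from \cite{ACLP23}. Sample a set $S \subseteq V$ of centers by including each vertex independently with probability $p = \Theta(\log(n)/\sqrt n)$, so whp $|S| = \Oish(\sqrt n)$, and fix a uniform random rank $r\colon V \to [0,1]$. For each $v \in V$, query the input adjacency oracle on every pair $(v,s)$ with $s \in S$ to compute $N(v) \cap S$; in total time $\Oish(n^{3/2})$, this determines $c(v) := \arg\min_{s \in N(v) \cap S} r(s)$, or marks $v$ as unclustered if $N(v) \cap S = \emptyset$. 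Invert to obtain the cluster sets $C_s = c^{-1}(s)$. A standard Chernoff argument shows that whp every unclustered $v$ has $\deg(v) = \Oish(\sqrt n)$, so one can enumerate their adjacency lists and add all incident edges to $H$ within the $\Oish(n^{3/2})$ budget.

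The spanner $H$ additionally contains every cluster edge $(v, c(v))$ and, for each clustered $v$ and each cluster $C_s$ with $s \ne c(v)$ adjacent to $v$, one representative edge $\operatorname{rep}(v, C_s) := \arg\min_{w \in N(v) \cap C_s} r(w)$. The standard Baswana-Sen analysis yields both $|E(H)| = \Oish(n^{3/2})$ and the $3$-spanner property: for any missing edge $(u,v) \in E(G) \setminus E(H)$ between clustered endpoints, the walk $u \to \operatorname{rep}(u, C_{c(v)}) \to c(v) \to v$ is a length-$3$ detour in $H$. All enumerated spanner edges will be stored in a perfect hash table, so that each adjacency oracle query returns in $\Oish(1)$ time.

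The main obstacle is enumerating the representative edges within the $\Oish(n^{3/2})$ budget. My plan is a degree-based case split, following ideas from \cite{ACLP23}. For each $v$ with $\deg(v) \le \sqrt n \cdot \polylog(n)$, iterate $N(v)$ via the adjacency list and, for each neighbor $w$, update the current best representative for cluster $C_{c(w)}$; summing over such $v$ contributes $\Oish(n^{3/2})$ work. For each high-degree $v$, I would instead iterate over $s \in S$: if $|C_s|$ is small, enumerate $C_s$ and test adjacency with $v$ via the input oracle; if $|C_s|$ is large, invoke a Chernoff bound to argue that $v$ is adjacent to $C_s$ whp and locate the representative by bounded sampling from either the top-rank prefix of $C_s$ or from $N(v)$. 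The work is amortized using $\sum_s |C_s| \le n$ and the observation that there are few high-degree vertices; carefully balancing the degree and cluster-size thresholds so that every sub-case fits inside the $\Oish(n^{3/2})$ budget while maintaining correctness with high probability is the most delicate part of the argument.
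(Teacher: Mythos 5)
There is a genuine gap, and it sits exactly where you flag ``the most delicate part'': your plan commits to materializing, for every clustered vertex $v$ and every cluster adjacent to $v$, an explicit representative edge, i.e.\ to computing a bona fide Baswana--Sen $3$-spanner during preprocessing. That cannot be done in $\Oish(n^{3/2})$ time in the worst case. Consider two cliques $A,B$ on $n/2$ vertices each, joined by a perfect matching. Every vertex has degree $\Theta(n)$, so your low-degree case is empty, and each $v \in A$ is adjacent to the cluster of its matched partner via exactly one edge, so the only possible representative for that vertex--cluster pair is the matching edge itself. Locating a single such edge hidden in an adjacency list (or matrix row) of length $\Theta(n)$ requires $\Omega(n)$ probes, and the $n/2$ searches are independent, so any algorithm of your form needs $\Omega(n^2)$ time. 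The specific devices you propose do not escape this: the claim that $v$ is adjacent to $C_s$ with high probability once $|C_s|$ is large has no basis (adjacency is adversarial, not random---in the example $v$ has exactly one edge into a cluster of size $\Theta(\sqrt n)$); bounded sampling from $N(v)$ or from $C_s$ needs $\Omega(\deg(v))$ resp.\ $\Omega(|C_s|)$ samples to hit a unique edge; and the amortization ``there are few high-degree vertices'' fails in dense graphs, where all vertices are high-degree.

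The paper's construction avoids this by \emph{not} insisting on a representative for every adjacent vertex--cluster pair. Each vertex samples only $\Oish(r)$ incident edges (with $r = \sqrt n$) and records whatever representatives the sample happens to reveal; at query time the oracle answers \yes{} to any edge whose endpoints lie in different clusters with no recorded edge from either endpoint to the other endpoint's cluster. Stretch $3$ only needs recorded edges in the \no{} cases, and the size is controlled probabilistically: a vertex--cluster pair joined by more than about $\deg(v)/r$ edges is discovered by sampling with high probability, so the undiscovered pairs contribute only $\Oish(n^2/r) = \Oish(n^{3/2})$ extra edges (and the paper additionally buckets by the minimum endpoint degree, running one such structure per bucket, to make this count work at all degree scales). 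In the matching example the oracle simply answers \yes{} to the undiscovered matching edges, which is harmless for the size bound. Your center sampling, clustering, and handling of unclustered/low-degree vertices are fine; the missing idea is this lazy, query-time inclusion of undiscovered inter-cluster edges together with its probabilistic size bound.
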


\begin{theorem}[Adjacency Oracles for $5$-Spanners]
For any $n$-node input graph, there is a randomized algorithm that with high probability computes an adjacency oracle for a $5$-spanner $H$ of size $|E(H)| = \Oish(n^{4/3})$, and which runs in $\Oish(n^{3/2})$ time.
\end{theorem}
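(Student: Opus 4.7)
The plan is to construct a $5$-spanner via a two-level Baswana-Sen-style hierarchical clustering, then wrap the whole pipeline in a preprocessing phase that stores the spanner edges in a global hash table for $\Oish(1)$-time queries.

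First, I sample nested subsets $S_2 \subseteq S_1 \subseteq V$, where $S_1$ contains each vertex independently with probability $n^{-1/3}$ and $S_2$ contains each element of $S_1$ with probability $n^{-1/3}$; w.h.p.\ $|S_1| = \Oish(n^{2/3})$ and $|S_2| = \Oish(n^{1/3})$. For each $v \notin S_i$, I assign a cluster center $c_i(v) \in S_i$ (or mark $v$ unclustered at level $i$) by sampling $\Oish(n^{1/3} \log n)$ random neighbors from its adjacency list and recording any landing in $S_i$, breaking ties by a shared hash value. This step takes $\Oish(n^{4/3})$ total time, and w.h.p.\ every vertex unclustered at level $1$ has degree $\Oish(n^{1/3})$, so including all incident edges of such vertices contributes only $\Oish(n^{4/3})$ edges to the spanner.

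Second, I declare the spanner $H$ to contain: every edge incident to a vertex unclustered at level $1$; the edges $(v, c_1(v))$ and $(v, c_2(v))$ for clustered $v$; and for each $v \in V$ and each level-$2$ cluster $C$ adjacent to $v$, a canonical inter-cluster edge $(v, u)$ with $c_2(u) \in C$, chosen as the neighbor of $v$ in $C$ maximizing a shared hash. The Baswana-Sen analysis then gives $|E(H)| = \Oish(n^{4/3})$ and stretch at most $5$ w.h.p. To enumerate these canonical inter-cluster edges I sample $\Oish(n^{1/2})$ random neighbors per vertex (total time $\Oish(n^{3/2})$), bucket them by their level-$2$ cluster (read off from the precomputed $c_2$), and pick the maximum-hash representative per bucket. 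All selected spanner edges are inserted into a global hash table. A query $(u,v)$ is answered by first verifying $(u,v) \in E(G)$ using the input adjacency oracle, then performing a single hash-table lookup, together in $\Oish(1)$ time.

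The main obstacle is arguing that $\Oish(n^{1/2})$ random neighbor samples per vertex suffice to preserve the $5$-spanner property, since some edges into ``rare'' adjacent level-$2$ clusters may be missed by sampling. The resolution is a Baswana-Sen-style stretch argument: if an edge $(v, u)$ with $c_2(u) \in C$ is omitted, then $C$ must contain few of $v$'s neighbors, and by combining the sampled well-represented clusters around $v$ and $u$ with the intra-cluster star edges $(w, c_i(w))$, a length-$5$ replacement path in $H$ can be exhibited through the level-$1$ and level-$2$ centers. This is exactly the type of analysis used in the LCA spanner construction of \cite{ACLP23}, from which the $\Oish(n^{3/2})$ preprocessing bound inherits its tightness.
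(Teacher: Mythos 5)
There is a genuine gap, and it is exactly at the point your proposal flags as ``the main obstacle.'' Because you commit to storing the spanner edges explicitly in a hash table and answering queries by pure lookup, your preprocessing must enumerate \emph{every} edge of $H$, so the spanner property has to hold for the explicitly stored set. But when an edge $(v,u)$ into a ``rare'' adjacent cluster $C$ is missed by your $\Oish(n^{1/2})$ neighbor samples, the fact that $v$ has few neighbors in $C$ gives you nothing: the only guaranteed short connections in your construction are star edges from each vertex to its own centers, and there is no recorded edge between $v$ (or $v$'s cluster) and $C$, so no length-$5$ replacement path for $(v,u)$ is exhibited or, in general, exists. The Baswana--Sen stretch argument needs one edge to \emph{every} adjacent cluster of a non-promoted vertex, not just the well-represented ones, and enumerating all adjacent clusters requires scanning full adjacency lists ($\Omega(m)$ time) -- this is precisely why the paper's own higher-stretch construction, which does sample adjacency lists, pays an extra $\rho^{2}$ factor in size. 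The paper's $5$-spanner avoids the dilemma differently: the oracle's edge set is \emph{implicit}. Preprocessing records only star edges and one sampled edge per pair of adjacent clusters, and at query time the oracle answers \yes{} to any unrecorded edge whose endpoints lie in distinct clusters with no recorded edge between them (and to edges with a low-degree endpoint); correctness is then immediate, and the size bound is a probabilistic counting argument showing that any cluster pair joined by many edges has a recorded edge w.h.p., so few such \yes{}-edges exist. Your design forecloses this move by insisting the query be a hash-table membership test.

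Two further issues, more easily patched but still real. First, your two-level hierarchy leaves uncovered the vertices that are clustered at level $1$ but have no neighbor in $S_2$: their incident edges are neither all included nor reachable via level-$2$ canonical edges, which is the middle Baswana--Sen phase your description drops. Second, finding $c_2(v)$ by $\Oish(n^{1/3}\log n)$ random neighbor samples fails for vertices of degree between roughly $n^{1/3}$ and $n^{2/3}$ (a random neighbor lies in $S_2$ with probability about $n^{-2/3}$), and fixing this naively costs $\Oish(n^{5/3})$ time. The paper sidesteps both problems by using a \emph{single} level of clustering with cluster-to-cluster adjacencies (\`a la Baswana--Sen, Section 4.3) together with $O(\log n)$ degree-bucketed copies of the data structure keyed to the minimum endpoint degree, which is also what drives its general-graph preprocessing bound of $\Oish(n^{3/2})$ rather than $\Oish(n^{4/3})$.
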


Both of these results are proved only for unweighted input graphs.
The sizes of these $3$- and $5$-spanners are optimal, up to hidden log factors, and can be viewed as sublinear computation for input graphs on $\gg n^{3/2}$ edges. 

The reason that we have results for $3$- and $5$-spanners, but not spanners of higher stretch, is due to a common technical barrier.
These results follow a construction strategy of Baswana and Sen \cite{BS07}, based on hierarchical clustering; $3$- and $5$-spanners can be achieved using only one level of clustering, while higher-stretch spanners of optimal size require two or more levels of clustering.
We are able to optimize the first cluster assignment step, but it is unclear how to achieve sublinear time cluster assignment for depth $2$ and beyond. 

Nonetheless, by introducing edge-sampling to the algorithm of Baswana and Sen, we show a construction for higher-stretch spanners, with a parameter $\rho$ controlling the tradeoff between  size and preprocessing time. Our algorithm runs in a factor of $1/\rho$ less than linear time $\Oish(m)$ time by incurring a factor of $\rho^2$ edges over the optimal size (as usual, this optimality is conditional on the girth conjecture \cite{girth}).

\begin{theorem}[Adjacency Oracles for $(2k-1)$-spanners]
For any $n$-node, $m$-edge input graph and $k = \Oish(1)$,  there exists a randomized algorithm that with high probability computes an adjacency oracle for a $(2k-1)$-spanner $H$ of size $|E(H)| = \Oish(n^{1 + 1/k}\rho^2)$, and which runs in $\Oish(n + m/\rho)$ time.
\end{theorem}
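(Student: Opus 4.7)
The plan is to start from the randomized Baswana–Sen $(2k-1)$-spanner construction \cite{BS07} and accelerate it with edge sampling. Recall that Baswana–Sen constructs a spanner in $k$ rounds of hierarchical clustering: at round $0$ every vertex is its own cluster, at round $i$ each surviving cluster from round $i-1$ is re-selected with probability $n^{-1/k}$, and each vertex $v$ whose round-$(i-1)$ cluster fails to survive contributes at most one edge to the spanner per neighboring round-$(i-1)$ cluster and then joins a neighboring surviving cluster at round $i$ if one exists. The expected size is $\Oish(n^{1+1/k})$ and stretch is $2k-1$. The bottleneck in the runtime of $\Oish(km)$ is that at each round each vertex scans its entire adjacency list in order to decide cluster membership and to select representative edges.

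I would accelerate this by edge-sampling. As a preprocessing step, sample each edge of $G$ independently with probability $1/\rho$; the sampled subgraph $G'$ has $\Oish(n + m/\rho)$ edges, yielding the desired time bound. The hierarchical clustering and the per-vertex edge-selection are then executed using only $G'$: a vertex $v$ joins a surviving round-$i$ cluster iff one of its \emph{sampled} neighbors lies in such a cluster, and $v$ contributes at most one representative edge to each neighboring cluster \emph{among its sampled incident edges}. All cluster IDs and representative-edge pointers are computed and stored during preprocessing, so an adjacency query $(u,v)$ can be answered in $\Oish(1)$ time by (i) verifying $(u,v)\in E(G)$ via the given adjacency oracle for $G$ and (ii) checking whether $(u,v)$ appears in either endpoint's table of representative edges at some level.

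The main obstacle is recovering both the stretch guarantee and the size guarantee after sampling. For stretch, the Baswana–Sen analysis argues that if a vertex $v$'s round-$i$ cluster fails to survive, then any edge $(v,w)\in E(G)$ can be routed through $v$'s chosen representative to $w$'s cluster center and back, paying stretch $2k-1$ in total. Under edge sampling, many edges $(v,w)\in E(G)\setminus E(G')$ are never examined, so one must show that representatives picked out of $G'$ still induce short replacement paths for these missing edges. The key claim I would prove is that, with high probability, every vertex $v$ has at least one \emph{sampled} neighbor inside each neighboring cluster whose intersection with $N(v)$ has size $\Omega(\rho \log n)$, and that the remaining ``small'' clusters can be handled by a direct counting argument against the spanner's size budget. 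To compensate for the slower cluster-growth induced by sampling, I would inflate the surviving-cluster rate by a factor of $\rho$; this is the source of the $\rho^2$ blow-up in size, where one factor of $\rho$ comes from the inflated rate at cluster formation and the second factor comes from the additional representative edges needed to dominate non-sampled edges.

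The remaining arguments are routine. Stretch is verified by induction on the level index as in \cite{BS07}, with a Chernoff plus union bound over vertices and levels to guarantee that the sampled edges suffice whenever they are invoked; the preprocessing time is $\Oish(n + m/\rho)$ because all per-vertex work at each of the $k=\Oish(1)$ levels is proportional to the vertex's sampled degree; and the adjacency-oracle query time is $\Oish(1)$ because after the size analysis each vertex's representative-edge table is polylogarithmic. I expect the technically delicate step to be the fine-tuning between sampling rate and surviving-cluster rate so that the size/time tradeoff comes out to exactly $\rho^2$ versus $1/\rho$, rather than the stretch proof itself, which should follow the original Baswana–Sen template once the right concentration events are set up.
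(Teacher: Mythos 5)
Your overall plan (Baswana--Sen plus edge sampling to get $\Oish(n+m/\rho)$ time) is the right starting point and matches the paper's high-level strategy, but two of your specific design choices break down. The first gap is the treatment of unsampled edges. In your query algorithm, $H$ consists only of representative edges stored in tables during preprocessing, so $H\subseteq G'$. You correctly identify that edges $(v,w)\in E(G)\setminus E(G')$ whose cluster has small intersection with $N(v)$ (fewer than $\Omega(\rho\log n)$ edges) are the problem, and you wave at "a direct counting argument against the spanner's size budget" --- but counting them is not enough: to charge them to the size budget they must actually be \emph{in} $H$, and you cannot put them into any table at preprocessing time because the algorithm never reads them. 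The only way to include them is to make the oracle answer \yes{} \emph{by default} at query time, answering \no{} only when it can certify coverage (the queried endpoints were in the same cluster at the round the earlier-finalized endpoint was finalized, or that endpoint recorded an edge to the other's cluster). This lazy, default-\yes{} query rule is exactly what the paper does; with it the stretch bound becomes deterministic and the randomness is only needed to bound $|E(H)|$. As written, your oracle answers \no{} to every non-representative edge, and for a vertex finalized at an early round with many singly-adjacent clusters there is simply no length-$(2k-1)$ replacement path among representative edges, so the construction is not a $(2k-1)$-spanner.

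The second gap is the parameterization. You propose inflating the cluster-survival rate by a factor of $\rho$ (to $\rho/n^{1/k}$ per round) and attribute one factor of $\rho$ in the size to this. But the inflation compounds over the $k-1$ rounds: the number of clusters surviving to the last round becomes $\approx n\,(\rho n^{-1/k})^{k-1}=n^{1/k}\rho^{k-1}$, and since in the last round every remaining vertex must account for one edge (or up to $\rho$ uncovered edges) per surviving adjacent cluster, the last-round contribution alone is $\Oish(n^{1+1/k}\rho^{k-1})$, which exceeds the claimed $\Oish(n^{1+1/k}\rho^2)$ for every $k\ge 4$. The paper does not change the selection rate at all: it keeps $1/n^{1/k}$, and the $\rho^2$ arises because (i) a vertex that fails to recluster can, with high probability, be adjacent to at most $\rho$ \emph{selected} clusters it failed to detect, hence to at most $\Oish(\rho n^{1/k})$ clusters in total at that round, and (ii) each adjacent cluster reached by fewer than $\rho$ edges may contribute up to $\rho$ uncovered edges added lazily at query time. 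You would need to adopt both that accounting and the default-\yes{} query rule to recover the stated size and stretch bounds. (Minor: your claim that each vertex's representative table is polylogarithmic is also not needed and not true in general --- tables can have size $\Oish(\rho n^{1/k})$; $\Oish(1)$ query time comes from set/dictionary lookups, not from small tables.)
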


Our construction leaves open the question of whether adjacency oracles of optimal size spanners can be computed in sublinear time.

\paragraph{Open Question.} For $k > 3$, is there a randomized algorithm that computes the adjacency oracle of an $\Oish(n^{1 + 1/k})$ size $(2k-1)$-spanner in polynomially subquadratic time?\\

This roughly mirrors a difficulty faced in the corresponding local spanner algorithms, where optimal results were recently achieved for $3$- and $5$-spanners in a nice paper by Arviv, Chung, Levi, and Pyne \cite{ACLP23}, but where results for higher-stretch spanners are considerably more restricted in scope.
Indeed, \cite{ACLP23} is also based on Baswana-Sen clustering, and it introduces a helpful degree-bucketing technique that we adapt and refine for our setting.

\section{Spanning Adjacency Oracles}

\subsection{Adjacency Oracles for Sparse Spanning Subgraphs} \label{sec: near-linear time SSS}


We now construct our adjacency oracle for the problem \sss{}, defined in the introduction.



\paragraph{Preprocessing Algorithm.} We first describe the data structures that we create in the construction of the adjacency oracle.
We will use three data structures:
\begin{itemize}
\item Any kind of set data structure, to represent the edges added to the subgraph $H$ during preprocessing.
This data structure only needs to support insertions and queries (checking whether or not an edge $(u, v)$ has been previously inserted to $E(H)$) in $\Oish(1)$ time each.
Many data structures are available that achieve this behavior, e.g., a self-balancing binary search tree suffices.

\item We use a union-find data structure to maintain connected components $\{C_i\}$.
Initially, each node is in its own connected component, but we will iteratively merge components throughout the algorithm.
This requries $\Oish(n)$ total time.

\item Another data structure is used to perform a particular edge-sampling step in preprocessing; it will be easier to describe this in Lemma \ref{lem:edgesample} following our description of the algorithm itself.
\end{itemize}

We will say that a node $v$ is ``in bucket $b$" at a moment in the algorithm if the connected component $C_v$ containing $v$ currently has size $2^b \le |C_v| < 2^{b+1}$.
So, initially every node is in bucket $0$, and nodes may be promoted to higher buckets when their components are merged with other components.

\begin{mdframed}[backgroundcolor=gray!20]
\textbf{\sss{} Adjacency Oracle Preprocessing}

\begin{itemize}
\item Let $b \gets 0$.
This is an incremental counter that marks the current bucket we are processing.

\item While $b < \log n$:
\begin{itemize}
\item Let $E_b \subseteq E(G)$ be the set of edges for which at least one of the two endpoints is in bucket $b$.
Choose an edge $(u, v) \in E_b$ uniformly at random.
(Taking a uniform-random sample from $E_b$ in $\Oish(1)$ time is nontrivial; we address this in Lemma \ref{lem:edgesample} below.)

\item Let $u$ be the endpoint of the sampled edge $(u, v)$ that is in bucket $b$.
If $v$ is currently in a different connected component than $u$, and also $v$ is in a bucket $b' \ge b$, then we call this sample a \emph{success}.
Otherwise, if $u, v$ are in the same component or if $b' < b$, then we call this sample a \emph{failure}.
\begin{itemize}
\item If the sample is a failure, do nothing.

\item If the sample is a success, add the edge $(u, v)$ to $H$, and merge the connected components containing $u$ and $v$.
Note that this increases the bucket of $v$ (and possibly also the bucket of $u$), and so it changes the set of nodes in bucket $b$, and therefore also the set of edges in $E_b$.
\end{itemize}

\item Repeat until we have $c \eps^{-1} 2^b \log^2 n$ failure events in a row (where $c>0$ is a sufficiently large absolute constant that we leave implicit).
Then set $b \gets b+1$, i.e., we move on to analyzing the next bucket.
\end{itemize}
\end{itemize}
\end{mdframed}

We will address the step of sampling uniformly from $E_b$, drawing from similar work on dynamic weighted sampling \cite{HMM93}.

\begin{lemma} \label{lem:edgesample}
For each bucket index $b$, we can maintain a data structure that allows us to sample uniformly from the edges currently in $E_b$ in $\Oish(1)$ time with high probability,\footnote{Here and throughout the paper, ``high probability'' means probability $\ge 1 - 1/n^c$ for an absolute constant $c>0$.} using $\Oish(n)$ total update time.
\end{lemma}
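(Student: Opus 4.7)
The plan is to maintain, for each bucket $b$, an augmented balanced binary search tree over the nodes currently assigned to bucket $b$, where each node $v$ is stored with weight $\deg_G(v)$ and each internal tree node caches the sum of weights in its subtree. Such a structure supports point insertions, deletions, and weighted-random leaf selection (proportional to stored weight) in $O(\log n) = \Oish(1)$ time per operation by a standard root-to-leaf walk. On a sampling query for $E_b$, I would: (i) draw a node $v$ from bucket $b$ with probability $\deg_G(v)/W_b$, where $W_b$ is the tree's total weight; (ii) draw a uniform-random neighbor $w$ of $v$ from $G$'s adjacency list (allowed in $\Oish(1)$ by the input representation assumption in footnote~2); and (iii) detect in $\Oish(1)$ time via union-find whether $w$ is also in bucket $b$, in which case I accept with probability $\tfrac12$ and otherwise reject, and if $w$ is outside bucket $b$, accept outright.

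A short calculation shows this produces the uniform distribution on $E_b$: each edge in $E_b$ with exactly one endpoint in bucket $b$ is output with probability $1/W_b$, and each edge with both endpoints in bucket $b$ is proposed twice (once per orientation) with probability $1/W_b$ each and then kept with probability $\tfrac12$, again for a net $1/W_b$. Because $|E_b| \le W_b \le 2|E_b|$, the per-attempt acceptance probability is at least $\tfrac12$, so $O(\log n)$ independent retries succeed with high probability, yielding $\Oish(1)$ query time. The degenerate case $W_b = 0$ corresponds to $E_b = \emptyset$ and can be detected in $O(1)$ so that the caller records an immediate failure.

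For the update cost, I would store each connected component's node set as a doubly-linked list that can be concatenated in $O(1)$ time at each union, alongside the union-find that tracks component sizes. After a union, I check in $O(1)$ whether the bucket index of each of the two prior components has changed; if so, I walk that component's list and relocate each affected node between the corresponding bucket trees in $O(\log n)$ time apiece. The amortization is charged not against the merge but against the nodes themselves: a node's bucket index is monotone nondecreasing and bounded by $\log n$, so it is relocated at most $\log n$ times over the entire algorithm. Summed over all $n$ nodes, this yields at most $O(n \log n)$ relocation events at $O(\log n)$ cost each, for $\Oish(n)$ aggregate update work.

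The main obstacle is justifying the re-bucketing cost: a naive per-union enumeration can cost $\Theta(n)$ per merge, and the classical union-by-size trick only charges the smaller side of a merge, which is not enough because a large component may also change buckets when a small one is absorbed into it. The per-node amortization described above sidesteps this by exploiting that any single node's bucket climbs occur $O(\log n)$ times in aggregate regardless of which merges caused them. A secondary subtlety worth verifying carefully is that the $\tfrac12$-rejection is exactly what corrects the over-counting of edges with both endpoints in bucket $b$ into a uniform distribution on $E_b$; without it, internal bucket-$b$ edges would be sampled at twice the rate of crossing edges.
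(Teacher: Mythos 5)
Your proof is correct and follows essentially the same route as the paper's: sample a bucket-$b$ node with probability proportional to its degree, draw a uniform incident edge, and apply the $\tfrac{1}{2}$-rejection (justified by the oriented-edge counting argument) to correct the double-counting of edges with both endpoints in bucket $b$, with acceptance probability at least $\tfrac{1}{2}$ giving $\Oish(1)$ query time with high probability. The only difference is bookkeeping: the paper implements degree-proportional node selection via $\log n$ degree groups with rejection sampling and simply rebuilds the structure in $\Oish(n)$ time when it starts processing each bucket index (so only deletions must be handled), whereas you maintain exact-weight augmented BSTs for all buckets and charge relocations to individual nodes via the monotonicity of bucket indices—both yield the stated $\Oish(n)$ total update time.
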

\begin{proof}
First we describe the creation and maintenance of the data structure.
When we begin analyzing bucket index $b$, we first scan all nodes and make a list of the nodes in bucket $b$ as well as their degrees.
We sort these nodes into $\log n$ \emph{groups} $\{\Gamma_1, \dots, \Gamma_{\log n}\}$ by their degrees: group $\Gamma_i$ contains the nodes in bucket $b$ whose degree falls in the range $[2^i, 2^{i+1})$.
We will also maintain the sum of node degrees in each group, which we will write as $\deg(\Gamma_i)$.
It takes $\Oish(n)$ time to create these groups at the beginning of our analysis of bucket $b$, and as nodes leave bucket $b$ due to connected component merges, we can straightforwardly remove them from the corresponding group and update our size and degree counts in $\Oish(1)$ time per node.

Now we describe the sampling algorithm.
We execute the following process:
\begin{itemize}
\item Choose a group $\Gamma_i$ with probability proportional to $\deg(\Gamma_i)$, i.e., $\deg(\Gamma_i)/ \sum_j \deg(\Gamma_j)$.

\item Choose a node $u \in \Gamma_i$ uniformly at random.
With probability $\deg(u) / 2^{i+1}$ we accept $u$ and move on to the next step; otherwise, repeat this step, selecting a new node $u \in \Gamma_i$ uniformly at random.

\item Choose an edge $(u, v)$ incident to the node $u$ selected in the previous step, uniformly at random.
Then, check whether the other endpoint $v$ is in bucket $b$.
If not, return $(u, v)$ as the sampled edge.
If so, then with probability $1/2$ return $(u, v)$ as the sampled edge, and with probability $1/2$ go back to the beginning of the entire sampling process and repeat from scratch.
\end{itemize}

Note that we accept our node sample in the second step with probability at least $1/2$, and so with high probability we repeat this step at most $\Oish(1)$ times.
Similarly, in the third step we return an edge with probability at least $1/2$, so with high probability we restart the entire sampling process only $\Oish(1)$ times.
Together, this implies that each sample runs in $\Oish(1)$ time with high probability.

Finally, we argue that this process selects a uniform-random edge from $E_b$.
Let $E^*_b$ be the set of \emph{oriented} edges $(u, v) \in E(G)$ such that the first node $u$ is in bucket $b$ (if $v$ is also in bucket $b$, then we have both $(u, v), (v, u) \in E^*_b$).
Notice that the first two steps of the sampling procedure, combined with the first part of the third step, select a uniform-random oriented edge from $E^*_b$. This is because we select a node $u$ with probability proportional to $\deg(u)$ (the number of edges in $E^*_b$ that start with $u$), and then we select a uniform-random edge incident to $u$.

This means that an edge $(u, v)$ with both $u, v$ in bucket $b$ is twice as likely to be selected in the third round as an edge with only $u$ in bucket $b$, since such an edge is represented twice (once with each orientation) in $E^*_b$.
The third step resamples these edges with probability $1/2$, so after this step is applied, all edges in $E_b$ are equally likely to be sampled.
\end{proof}

We are now ready to verify the runtime of the preprocessing algorithm:
\begin{lemma}
The above preprocessing algorithm can be implemented to run in $\Oish(n \eps^{-1})$ time (with high probability).
\end{lemma}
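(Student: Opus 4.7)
The plan is to decompose the runtime into three contributions: (i) the per-sample cost, (ii) the total number of samples taken across the algorithm, and (iii) the cost of initializing and maintaining the supporting data structures. Contribution (i) is already handled by Lemma~\ref{lem:edgesample} together with the union-find and edge-set guarantees: each sample, each component merge, and each query to $E(H)$ costs $\Oish(1)$ with high probability. So the main task is to bound (ii), the total number of samples.

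First I would bound the number of \emph{successes} per bucket. By definition, when a success occurs at bucket $b$, we merge a component of size at least $2^b$ with another of size at least $2^b$, producing a component of size at least $2^{b+1}$, which removes at least one component from bucket $b$. Since at most $n/2^b$ components can ever occupy bucket $b$, the number of successes while processing bucket $b$ is at most $n/2^b$. Next I would bound the number of \emph{failures} per bucket: the stopping rule terminates the bucket as soon as $c\eps^{-1} 2^b \log^2 n$ consecutive failures occur, and the counter resets after each success, so the total failures in bucket $b$ is at most $(s_b + 1)\cdot c\eps^{-1} 2^b \log^2 n$, where $s_b \le n/2^b$ is the success count. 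This gives failures per bucket at most $O(\eps^{-1} n \log^2 n) + O(\eps^{-1} 2^b \log^2 n)$.

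Summing over $b = 0, 1, \dots, \log n - 1$, the total failure count is $O(\eps^{-1} n \log^3 n)$ and the total success count is $O(n)$ (since every success strictly decreases the number of connected components), giving $\Oish(n \eps^{-1})$ total samples overall. Multiplying by the $\Oish(1)$ per-sample cost from Lemma~\ref{lem:edgesample} yields $\Oish(n\eps^{-1})$ runtime for the sampling work.

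Finally I would handle contribution (iii): the union-find supports all merges in $\Oish(n)$ total time; the edge-set supporting $H$ handles at most $n$ insertions at $\Oish(1)$ each; and the per-bucket sampling structure of Lemma~\ref{lem:edgesample} is (re)built in $\Oish(n)$ time when $b$ advances, for a total of $\Oish(n \log n)$ build cost across all buckets, with $\Oish(1)$ incremental updates as nodes leave bucket $b$. All auxiliary costs thus fit within $\Oish(n \eps^{-1})$, completing the bound. The only delicate step is the success-count argument, since one must confirm that the ``success'' definition --- requiring $v$ to lie in bucket $b' \ge b$, not just in a different component --- still forces the $u$-side component to leave bucket $b$; I would spell this out explicitly to justify the $n/2^b$ bound, which is the linchpin of the entire sum.
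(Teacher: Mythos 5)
Your proposal is correct and follows essentially the same route as the paper's proof: bound samples per bucket by (successes per bucket, at most $n/2^b$ since a success merges two components of size at least $2^b$ and hence removes one from bucket $b$) times the $\Oish(\eps^{-1}2^b)$ failures allowed between successes, sum over the $\log n$ buckets, and add the $\Oish(n)$-per-bucket cost of the sampling structure plus the union-find and edge-set overhead. The extra care you flag about the success definition forcing the $u$-side component out of bucket $b$ is exactly the observation the paper uses, so no gap remains.
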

\begin{proof}
The union-find data structure has total update time $\Oish(n)$.
We can record the edges added to $H$ in $\Oish(1)$ time each by tracking these edges using any kind of set data structure; we can have at most $n-1$ success events (since each success event causes two components to be merged), and so this takes $\Oish(n)$ time in total.

We next control the number of samples that we take for each of the $\log n$ choices of bucket index $b$.
At the beginning of a round with bucket index $b$, there are at most $n/2^b$ components in bucket $b$, since each component has size at least $2^b$.
Each success event causes at least one component to leave the bucket.
There are at most $\Oish(\eps^{-1} 2^b)$ failure events between success events.
Thus, we sample at most $\Oish(n \eps^{-1})$ times per bucket index.

Finally, by the previous lemma we can sample from $E_b$ in $\Oish(1)$ time with high probability, by paying $\Oish(n)$ additional runtime per bucket index.
\end{proof}

We next look ahead to the step where we bound the number of edges in the subgraph $H$ represented by our adjacency oracle.
The following property of the preprocessing algorithm will be helpful:

\begin{lemma} \label{lem:upedgecount}
With high probability, each time we finish processing a bucket $b$ (i.e.\ just before we set $b \gets b+1$), there are at most $\eps n / \log n$ edges $(u, v) \in E(G)$ with the property that $u$ is in bucket $b$, $v$ is in a different connected component than $u$, and the bucket $b'$ of $v$ satisfies $b' \ge b$.
\end{lemma}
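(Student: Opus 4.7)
The plan is to proceed by contradiction, with induction on $b$. Fix bucket $b$, assume the lemma holds for all $b' < b$, and suppose at the end of processing $b$ the number $S$ of edges satisfying the lemma's condition exceeds $\eps n/\log n$. Since failures never merge components, the partition and bucket assignments are frozen throughout the terminating run of $T := c\eps^{-1}2^b\log^2 n$ consecutive failures. Hence by Lemma~\ref{lem:edgesample}, these $T$ samples are i.i.d.\ uniform from $E_b$, each a failure with probability $1 - S/|E_b|$, and the probability they are all failures is at most $\exp(-TS/|E_b|)$. Driving this probability below $n^{-C}$ reduces to showing $|E_b| = O(n\cdot 2^b) + S + O(\eps n)$.

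To bound $|E_b|$, I decompose its edges by the bucket of the non-bucket-$b$ endpoint. Internal edges of bucket-$b$ components contribute at most $(n/2^b)\cdot \binom{2^{b+1}}{2} = O(n\cdot 2^b)$, using that there are at most $n/2^b$ bucket-$b$ components, each of size less than $2^{b+1}$. Edges between different components that both lie in bucket $\ge b$ are exactly the success-eligible edges and contribute $S$. The crucial step is bounding the remaining ``down-edges'' from bucket-$b$ to some lower bucket $< b$, for which I would use the following \emph{freezing property} of the algorithm: in round $B$, only components in bucket $\ge B$ participate in merges, so any component in bucket $<B$ at the end of round $B$ stays pinned in its bucket forever after. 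Applied to a down-edge $(u,v)$ at end of $b$ with $v$'s bucket $b' < b$: since $v$'s bucket is ultimately $b'$ yet grows monotonically, $v$ must lie in bucket $b'$ at end of round $b'$; and since $u$ ultimately reaches bucket $b > b'$, freezing forces $u$'s bucket at end of round $b'$ to be at least $b'$. Thus $(u,v)$ was success-eligible for $b'$ at end of round $b'$, so by the inductive hypothesis for $b'$ the count of such edges is at most $\eps n/\log n$; summing over $b' < b$ gives at most $\eps n$ down-edges in total.

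Plugging $|E_b| \le O(n\cdot 2^b) + S + \eps n$ into the probability bound and performing case analysis on whether $S \lesssim n\cdot 2^b$ yields $TS/|E_b| = \Omega(c\log n)$ in all cases, so for $c$ sufficiently large relative to $C$ the bad event for bucket $b$ has probability at most $n^{-C}$. A union bound over the $O(\log n)$ buckets and the inductive hypothesis failure probabilities gives the overall high-probability guarantee. The main conceptual obstacle is identifying and exploiting the freezing property to control down-edges; once that is in place, the rest of the argument is a routine probability and counting calculation.
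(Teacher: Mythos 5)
Your proposal is correct and follows essentially the same route as the paper's proof: the same decomposition of $E_b$ into internal, descending (down), and success edges, the same $O(n\cdot 2^b)$ internal bound and inductive $\eps n$ bound on down-edges, and the same argument that a run of $c\eps^{-1}2^b\log^2 n$ consecutive failures is unlikely (union-bounded over buckets and failure runs) while at least $\eps n/\log n$ success edges remain. Your explicit \emph{freezing property} is precisely what the paper uses implicitly when it asserts that each descending edge was a success edge for a smaller bucket index under the reversed orientation.
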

\begin{proof}
Let $Y_b$ be the event that the lemma statement holds for bucket index $b$.
Our first goal is to bound
$$\Pr\left[ Y_b \ \mid \ Y_1 \text{ and } \dots \text{ and } Y_{b-1}\right].$$
To do so, we again let $E^*_b$ be the set of oriented edges $(u, v)$ with $u$ in bucket $b$.
We sort these edges into three types:
\begin{itemize}
\item We say that $(u, v)$ is a \textbf{success edge} if $v$ is in a bucket $b' \ge b$, and $u, v$ are in different connected components.
(That is, sampling a success edge causes a success event, and the goal of this lemma is to bound the number of success edges.)

\item We say that $(u, v)$ is a \textbf{descending edge} if $v$ is in a bucket $b' < b$.

\item We say that $(u, v)$ is an \textbf{internal edge} if $u, v$ are in the same connected component.
\end{itemize}
The number of internal edges can be bounded as
\begin{align*}
\sum \limits_{X \text{ component in bucket } b} |X|^2 \le \ &\frac{n}{2^b} \cdot \left(2^{b+1}\right)^2\\
= \ &O\left(n \cdot 2^b \right).
\end{align*}
To bound the number of descending edges $(u, v)$, note that each such edge would be a success edge for a smaller bucket index if we considered it with reversed orientation $(v, u)$.
Since we have conditioned on $Y_1, \dots, Y_{b-1}$, we assume that the lemma statement holds for all bucket indices $i < b$, we can bound the number of descending edges as
$$\sum \limits_{i < b} \frac{\eps n}{\log n} \le \eps n.$$
So, if there are currently at least $\eps n / \log n$ success edges for bucket $b$, then each time we sample an edge $(u, v)$ the probability that it causes a success event is at least
$$ \Theta\left(\frac{\eps n}{2^b \cdot n \log n}\right) = \Theta\left( \frac{\eps}{2^b \cdot \log n} \right).$$
So by standard Chernoff bounds and by choice of large enough constant $c$, with probability at least $1-1/n^{100}$, if we sample $c \eps^{-1} 2^b \log^2 n$ edges we will have at least one success event.
Since (conservatively) we will have at most $n^2$ edges recorded over the course of the algorithm and hence $O(n^2)$ periods of samples between successes, by a union bound, with probability at least $1-1/n^{98}$ we will not have $c \eps^{-1} 2^b \log^2n$ failure events in a row while $\ge \eps n / \log n$ success edges for bucket $b$ still exist.
Assuming this event occurs, since we only move on to the next bucket $b+1$ following $c \eps^{-1} 2^b \log^2n$ failure events in a row, it must be that $< \eps n / \log n$ success edges remain.

To finish the proof, it now remains only to assemble the previous probability calculations.
We bound
\begin{align*}
\Pr[Y_1 \text{ and } \dots \text{ and } Y_{\log n}]
=&\Pr[Y_1] \cdot \Pr[Y_2 \mid Y_1] \cdot \dots \cdot \Pr[Y_{\log n} \mid Y_1 \text{ and } \dots \text{ and } Y_{\log n - 1}]\\
\ge& \left(1 - 1/n^{98} \right)^{\log n}\\
\ge& 1 - \Theta\left(1/n^{97}\right). \tag*{\qedhere}
\end{align*}
\end{proof}

\paragraph{The Query Algorithm.}

The query algorithm is relatively simple.
As discussed in the introduction, this algorithm is phrased in the stronger model where the query algorithm may not access the input graph $G$, but the user promises to only give edges $(u, v) \in E(G)$ as queries. When we receive an edge query $(u, v)$, we process it as follows.
\begin{mdframed}[backgroundcolor=gray!20]
\textbf{\sss{} Adjacency Oracle Query}
\begin{itemize}
\item If $(u, v)$ was recorded by the set data structure as one of the edges added to $E(H)$ during preprocessing, then answer \yes{}.

\item Else if $u, v$ are in different connected components according to the union-find data structure, then answer \yes{}.

\item Else answer \no{}.
\end{itemize}
\end{mdframed}

\begin{lemma}
The set of edges to which the adjacency oracle answers \yes{} spans the input graph $G$.
\end{lemma}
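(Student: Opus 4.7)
The plan is to argue the contrapositive structure directly: two nodes $s,t \in V$ lie in the same connected component of $G$ if and only if they lie in the same component of the subgraph $H$ defined by the oracle's YES-answers. Since every edge the oracle accepts lies in $E(G)$ (the Case~1 edges are inserted from $E(G)$ during preprocessing, and Case~2 only fires on user queries, which by assumption are edges of $G$), we immediately have $E(H) \subseteq E(G)$, so connectivity in $H$ implies connectivity in $G$ for free.

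For the reverse direction, I would reduce spanning to a purely local claim: it suffices to show that for every edge $(u,v) \in E(G)$, the endpoints $u$ and $v$ are connected in $H$. Once this is established, any $G$-path $s = v_0, v_1, \dots, v_k = t$ induces a walk in $H$ joining $s$ to $t$ by concatenating the $H$-paths guaranteed for each $(v_i, v_{i+1})$. So fix an edge $(u,v) \in E(G)$ and consider what the oracle would answer on input $(u,v)$. If it answers YES, then $(u,v) \in E(H)$ by definition, so $u$ and $v$ are trivially connected in $H$. If it answers NO, then by the query rule both the first condition (edge recorded during preprocessing) and the logical negation of the second (different components) must fail; in particular, $u$ and $v$ must lie in the same union-find component at the end of preprocessing.

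The key structural observation to close the gap is a union-find invariant: the components maintained by union-find at any point during preprocessing coincide exactly with the connected components of the subgraph consisting of the success-edges added to $H$ so far. This holds because the preprocessing algorithm performs a union operation if and only if it inserts the corresponding bridging edge into $H$, and it never does one without the other. Initially each node is its own component and the recorded edge set is empty, so the invariant holds; inductively it is preserved by every success event, and failure events do nothing to either structure. Consequently, if $u$ and $v$ end up in the same union-find component, there is a path from $u$ to $v$ using only edges inserted during preprocessing, all of which are in $E(H)$.

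The main thing to be careful about, rather than any conceptual obstacle, is distinguishing the two sources of $H$-edges (Case~1 recorded edges vs.\ Case~2 ``still-separated'' edges) and making sure the union-find invariant is stated against the recorded-edge subgraph, not against $H$ itself, since the Case~2 edges never cause merges. Once the invariant is in place, the case split above finishes the proof in a line.
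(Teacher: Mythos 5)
Your proposal is correct and follows essentially the same argument as the paper: the edges recorded during preprocessing span each union-find component (your union-find invariant makes this explicit), and every remaining edge of $G$ crossing two components is answered \yes{} at query time, so the \yes{}-edges have the same connected components as $G$. Your per-edge reduction is just a slightly more detailed write-up of the paper's two-case proof, with no substantive difference.
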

\begin{proof}
There are two kinds of edges to which the oracle answers \yes{}.
Some are the edges added to $E(H)$ during preprocessing.
By construction, these edges span each individual connected component in the data structure at the end of preprocessing.
The query algorithm will then answer \yes{} to all additional edges in $E(G)$ between these connected components, and the lemma follows.
\end{proof}

\begin{lemma}
With high probability, the query algorithm will only answer \yes{} to at most $(1+\eps)n$ edges in total.
\end{lemma}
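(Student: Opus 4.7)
The plan is to split the \yes{}-answered edges into two classes and bound each. The first class consists of edges inserted into $H$ during the preprocessing loop, i.e., those recorded in the set data structure. Each such insertion is triggered by a success event that merges two distinct components in the union-find, and since we begin with $n$ singleton components, there are at most $n-1$ such edges.

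The second class consists of edges $(u,v)\in E(G)$ whose endpoints lie in different union-find components at the end of preprocessing. I will show this class has size at most $\eps n$ with high probability, by charging each such edge to a single bucket and invoking Lemma \ref{lem:upedgecount}. For a class-two edge $(u,v)$, let $\beta_x^f$ denote the bucket of $x$ at the end of preprocessing, assume WLOG $\beta_u^f\le\beta_v^f$, and set $b:=\beta_u^f$. The goal is to argue that at the moment $t^*_b$ just before we increment $b\gets b+1$, the edge satisfies the hypotheses of Lemma \ref{lem:upedgecount} for bucket $b$: namely, $u$ lies in bucket exactly $b$, $v$ lies in a different component from $u$ in a bucket $\ge b$.

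The different-component condition is immediate, since components only merge. The more delicate part is the bucket condition, which relies on the following key observation: after $t^*_b$, the preprocessing only processes buckets $b''\ge b+1$, and a merge triggered by sampling an edge $(x,y)\in E_{b''}$ requires both endpoints to currently be in bucket $\ge b''$ (one by membership in $E_{b''}$, the other by the success criterion). Therefore any node whose bucket gets incremented after time $t^*_b$ must already have bucket $\ge b+1$ at the instant of that increment. By monotonicity of buckets, a node whose bucket is $<b$ at time $t^*_b$ must have bucket $<b$ forever after. Contrapositively, $\beta_u^f=b$ forces $\beta_u(t^*_b)=b$ exactly (combining with the upper bound $\beta_u(t^*_b)\le\beta_u^f$), and $\beta_v^f\ge b$ forces $\beta_v(t^*_b)\ge b$, establishing the charging.

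With the charging in hand, Lemma \ref{lem:upedgecount} applied to each bucket $b\in\{0,1,\dots,\log n-1\}$ bounds the number of class-two edges assigned to bucket $b$ by $\eps n/\log n$ with high probability. Summing over all $\log n$ buckets yields at most $\eps n$ class-two edges, and combining with the $n-1$ bound on class one gives a total of at most $(1+\eps)n$ \yes{}-edges. The main obstacle is the key observation above about monotone bucket evolution after $t^*_b$; once that is pinned down, the rest of the argument is a direct application of Lemma \ref{lem:upedgecount} and a union bound.
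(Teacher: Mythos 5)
Your proof is correct and takes essentially the same route as the paper's: at most $n-1$ edges recorded during preprocessing, plus at most $\eps n/\log n$ cross-component edges per bucket via Lemma \ref{lem:upedgecount}, summed over the $\log n$ buckets. The only difference is that you make explicit the charging step (that the minimum-final-bucket endpoint of a cross-component edge was still in exactly that bucket, and the other endpoint in a bucket at least as high, at the moment the bucket finished processing), which the paper leaves implicit.
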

\begin{proof}
First, the edges added to $E(H)$ during preprocessing form a forest, so there are at most $n-1$ such edges.
Second, by Lemma \ref{lem:upedgecount}, with high probability, for each bucket $b$ there are at most $\eps n / \log n$ edges between the connected components discovered in preprocessing that have one endpoint in bucket $b$ and the other endpoint in a bucket $b' \ge b$.
Summing over the $\log n$ buckets, the number of edges between connected components is at most $\log n \cdot \frac{\eps n }{\log n} = \eps n$.
Thus, the oracle says \yes{} to at most $(n-1) + \eps n \le (1+\eps) n$ edges in total. 
\end{proof}

\paragraph{Alternate Proof Using Random $k$-Out Orientations.} After this paper was initially released, we realized that our adjacency oracle for \sss{} can also be proved as a corollary of a nice recent paper by Holm, King, Thorup, Zamir, and Zwick \cite{HKTZZ19}.
Their main result is the following structural theorem:
\begin{theorem} [\cite{HKTZZ19}]
Let $G$ be any undirected unweighted graph, and let $H$ be a random edge-subgraph obtained by selecting $k$ edges incident to each node uniformly at random and including them in $H$.
Then, if $k \ge c \log n$ for a constant $c$, then the expected number of edges between connected components of $H$ is $O(n/k)$.
\end{theorem}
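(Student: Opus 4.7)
The plan is a per-vertex charging argument. For each vertex $v$, let $X_v$ denote the number of $G$-neighbors of $v$ that lie in a connected component of $H$ different from $v$'s. Every cross-component $G$-edge contributes $1$ to each of its endpoints, so the expected number of edges between $H$-components is $\tfrac12 \mathbb{E}\bigl[\sum_v X_v\bigr]$, and it suffices to show that this sum is $O(n/k)$.

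A first, deterministic, reduction: if $\deg_G(v) \le k$, then $v$'s sample includes every incident edge, so every $G$-neighbor of $v$ lies in the same $H$-component as $v$ and $X_v = 0$. Hence the sum is effectively only over \emph{high-degree} vertices with $\deg_G(v) > k$, and it is enough to show that the expected contribution from these vertices is $O(n/k)$.

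For such a vertex $v$ of degree $d_v > k$, the naive estimate---only requiring the edge $\{u,v\}$ itself to be absent from $H$---gives per-vertex expectation $O(d_v/k)$ and sums to $O(m/k)$, which is too weak when $m \gg n$. The improvement to $O(n/k)$ relies on the observation that once $k \ge c \log n$, a non-sampled neighbor $u$ of $v$ is connected to $v$ through $H$ with overwhelming probability via one of many alternative two-hop paths $v \to w \to u$, where $w$ ranges over the common neighbors of $u$ and $v$ that $v$ happened to sample. I would formalize this by first revealing $v$'s $k$-subset $S_v$, then bounding the conditional failure probability by a product of the form $\prod_{w \in S_v \cap N(u)} (1 - k/d_w)$, and finally using a Chernoff-type concentration on the number of ``useful'' $w$'s to push the unconditional failure probability down to $O(1/(k d_v))$ per neighbor. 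Summing over $v$'s $d_v$ neighbors, and then over all high-degree vertices, yields the desired $O(n/k)$.

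The main obstacle is that the ``useful $w$'s'' above depend delicately on the overlap structure of neighborhoods in $G$, which can be essentially arbitrary, and the events that different $w$'s do or do not connect $u$ to $v$ are correlated through each vertex's single $k$-subset draw. The cleanest route around this, which is the technical heart of \cite{HKTZZ19}, is a deferred-decisions argument combined with an expansion-style analysis of random $k$-out contractions: the $k$-out samples are revealed in a carefully chosen order, and one shows that the partial components formed during exposure grow geometrically until they stabilize at size $\Omega(k)$, so the final component structure has at most $O(n/k)$ components, each with $O(1)$ cross-boundary edges on average. I would invoke or adapt this structural analysis directly rather than reprove it from scratch.
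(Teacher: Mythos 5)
This statement is not proved in the paper at all: it is quoted verbatim from \cite{HKTZZ19} and used as a black box to give an alternate derivation of the \sss{} oracle (the paper explicitly says its own bucketed-sampling analysis is a \emph{different}, somewhat lossier argument). So the only fair comparison is whether your sketch would actually constitute a proof, and it would not. The one concrete mechanism you propose---repairing a non-sampled edge $\{u,v\}$ through two-hop paths $v \to w \to u$ with $w \in S_v \cap N(u)$, bounding failure by $\prod_{w \in S_v \cap N(u)}(1-k/d_w)$---fails on simple instances: in any triangle-free graph (e.g.\ $K_{n/2,n/2}$) adjacent vertices have no common neighbors, the product is empty, and the bound is vacuous, even though the theorem certainly holds there; connectivity in $k$-out subgraphs of such graphs is established only through longer paths, which is exactly the hard part. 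The quantitative waypoints are also off: the ``naive estimate'' for an edge being absent from $H$ is $(1-k/d_u)(1-k/d_v)=\Theta(1)$, not $O(d_v/k)$ per vertex, and the claimed per-neighbor failure probability $O(1/(k d_v))$ is asserted rather than derived. Finally, the summary of what \cite{HKTZZ19} proves (``$O(n/k)$ components, each with $O(1)$ boundary edges on average, via geometric growth of exposed components'') misstates their result, which bounds the total number of inter-component edges and does not follow from such a simple expansion argument.

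Your closing move---``invoke or adapt the structural analysis of \cite{HKTZZ19} directly''---is circular as a proof of this statement, since the statement \emph{is} that structural result. In the context of this paper that citation is of course acceptable (it is what the authors do), but then the preceding sketch should be dropped rather than presented as a proof outline. If you want an argument in the spirit of this paper, the route is the one in Section 2.1: the preprocessing analysis (Lemma \ref{lem:upedgecount}-style bucketing by component size, with $\Oish(1)$ sampled edges per vertex) yields a weaker version of the theorem with extra polylogarithmic factors; obtaining the clean $O(n/k)$ bound for $k \ge c\log n$ genuinely requires the machinery of \cite{HKTZZ19}.
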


It follows that we can construct an adjacency oracle that realizes Theorem \ref{thm:introsss} by sampling $k = \Oish(1)$ many edges incident to each node and memorizing a spanning forest of the resulting subgraph $H$.
There will be $O(n/k) \ll \eps n$ many edges between components, and the rest of the proof follows by the same analysis as above.
In fact, our proof above can be \emph{interpreted} as an alternate proof of a result similar to the theorem of \cite{HKTZZ19}: one can argue that with high probability, our algorithm will query only $\Oish(1)$ many edges incident to each vertex, and we have proved that $\le O(\eps n)$ edges then go between the discovered connected components.
Our proof is qualitatively different from the one in \cite{HKTZZ19}; subjectively it is a bit simpler, but it loses several log factors that are optimized out in this prior work.

\subsection{Adjacency Oracles for Sparse $k$-Connectivity Certificates}

We next extend our method to constructing adjacency oracles for sparse \textit{$k$-connectivity certificates}.
Recall that an edge subgraph $H \subseteq G$ is a $k$-(edge) connectivity certificate of $G$ if, after deleting any $(k-1)$ edges in $H$ from both $G$ and $H$, the connected components of $H$ and $G$ are the same. Then, there is a simple generalization of \sss.

\begin{mdframed}[backgroundcolor=gray!20]
\texttt{Sparse $k$-Connectivity Certificate} ($k$-\sss{}):
\begin{itemize}
\item \textbf{Input:} An $n$-node undirected graph $G = (V, E)$, and $\eps > 0$.
\item \textbf{Output:} An edge-subgraph $H$ on at most $(1+\eps)kn$ edges that forms a $k$-connectivity certificate of $G$.
\end{itemize}
\end{mdframed}

(We recall from the introduction that $1$-\sss{} and \sss{} are the same problem.)
The standard algorithm for computing a minimum-size $k$-connectivity certificate $H$ of an input graph $G$ is to repeat the following process $k$ times: compute a spanning forest of $G$, add it to $H$, and remove its edges from $G$.
We will show the analysis here, since it will provide helpful intuition for our adjacency oracle.
The following argument is from \cite{NagamochiI92}.

\begin{claim} \label{cla: k-connectivity certificate intuition}
The subgraph $H$ from the above process is a $k$-connectivity certificate of $G$.
\end{claim}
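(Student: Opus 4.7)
The plan is to work with the equivalent cut characterization of $k$-connectivity certificates stated earlier in the excerpt: it suffices to show that for every cut $C$ in $G$ of size $r$, at least $\min\{k,r\}$ edges of $C$ lie in $H$. So I would fix an arbitrary cut $(S, V \setminus S)$, let $r$ denote the number of edges of $G$ crossing it, and let $T_1, \dots, T_k$ be the spanning forests produced by the iterative process, so that $H = T_1 \cup \cdots \cup T_k$.

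The core lemma I would prove is this: for each iteration $i$, if the graph $G_i := G \setminus (T_1 \cup \cdots \cup T_{i-1})$ still contains at least one edge crossing $(S, V \setminus S)$, then $T_i$ must contain at least one such crossing edge. The reason is short: if $(s,t)$ is a crossing edge present in $G_i$, then $s$ and $t$ lie in the same connected component of $G_i$, so the spanning forest $T_i$ must include a path between them, and any such path must traverse at least one edge of the cut.

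From this lemma I would induct on $i$ to conclude that after $i$ iterations, the number of edges of the cut lying in $T_1 \cup \cdots \cup T_i$ is at least $\min\{i, r\}$. At each step either a brand-new crossing edge is added (incrementing the count), or the previous iterations have already absorbed all $r$ crossing edges of $G$ into $H$, in which case the count has already reached $r$. Setting $i = k$ gives $|C \cap E(H)| \ge \min\{k, r\}$, which was the goal.

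There is essentially no serious obstacle here; the argument is a clean induction built on the spanning-forest fact above. The only care needed is bookkeeping the transition between ``not yet exhausted'' and ``exhausted'' crossing edges in the induction, and to then invoke the cut-based characterization of $k$-connectivity certificates (noted in the introduction as equivalent to the components-after-deletion definition) to conclude that $H$ is a valid $k$-connectivity certificate of $G$.
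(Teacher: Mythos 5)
Your proposal is correct and follows essentially the same approach as the paper: the heart of both arguments is that in any iteration where a crossing edge of the cut remains, the spanning forest $T_i$ must include a (new) crossing edge, iterated $k$ times to get $\min\{k,r\}$ crossing edges in $H$. The only difference is that you finish by citing the cut-based equivalent characterization, whereas the paper additionally spells out the short contradiction argument deriving the components-after-deleting-$F$ definition from that cut bound; since the paper states this equivalence explicitly, your shortcut is fine.
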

\begin{proof}
First, note that for every cut $V_1 \sqcup V_2$ of $G$, either at least $k$ edges are added crossing the cut or all the edges crossing the cut are added. This is because, for each iteration $i$ such that there is still a remaining edge crossing the cut, $T_i$ must include an edge crossing the cut since it is a spanning forest. The fact that there are $k$ iterations implies the result.

Now, let $F \subseteq E(H)$ with $|F| = k - 1$. Let $C$ be a connected component in $G \setminus F$. Suppose for the sake of contradiction that $C$ is disconnected in $H \setminus F$. Then, there exists some cut $V_1 \sqcup V_2$ of $G$ such that the vertices in $C$ are split between the sides of the cut, $G \setminus F$ has edges across the cut, and $H \setminus F$ has no edges across the cut. But, from the above, $H$ contains either all or at least $k$ edges of every cut in $G$. Hence, if $V_1 \sqcup V_2$ had at least $k$ edges crossing the cut, then $H \setminus F$ would have at least one edge crossing the cut (since $|F| = k-1$). Then, since there must be at most $k - 1$ edges crossing the cut, $G$ and $H$ have the same edges crossing the cut, yielding a contradiction.
\end{proof}


Given this, a natural algorithm to compute an adjacency oracle for $k$-\sss{} is to somehow apply the \sss{} algorithm $k$ times in a row, iteratively removing the subgraphs from $G$ in each round, and then unioning the final spanning subgraphs.
With this intuition in mind, we have the following result.

\begin{theorem}
    Let $G$ be an $n$-node graph, $\varepsilon > 0$, and $k = \Oish(1)$. Then, there exists an $\Oish(n \varepsilon^{-1})$ time algorithm that with high probability solves $k$-\sss{}, outputting the edge-subgraph $H$ as an adjacency oracle.
\end{theorem}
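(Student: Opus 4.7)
My plan is to iterate the \sss{} adjacency oracle preprocessing of Section~\ref{sec: near-linear time SSS} a total of $k$ times. In round $i$, I would build an oracle for a sparse spanning subgraph of $G_i := G \setminus H_{<i}$, where $H_{<i} := H_1 \cup \cdots \cup H_{i-1}$ and $H_j$ is the set of edges on which the round-$j$ oracle answered \yes{}. The final adjacency oracle for $k$-\sss{}, on query $(u,v)$, answers \yes{} iff at least one of the $k$ sub-oracles does; since each sub-oracle query costs $\Oish(1)$ time and $k = \Oish(1)$, the overall query time remains $\Oish(1)$.

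Correctness will follow from the argument of Claim~\ref{cla: k-connectivity certificate intuition} combined with the characterization of a $k$-connectivity certificate as a subgraph that crosses every cut of size $r$ in at least $\min\{k, r\}$ edges. For any fixed cut of $G$, since $H_i$ spans $G_i$ (by the construction in Section~\ref{sec: near-linear time SSS}), whenever the cut still has an edge remaining in $G_i$, $H_i$ must include one. Iterating $k$ rounds yields $\min\{k, r\}$ cut edges in $H = \bigcup_{i=1}^k H_i$. For the size bound, Theorem~\ref{thm:introsss} applied per round gives $|H_i| \le (1+\varepsilon)n$, hence $|H| \le k(1+\varepsilon)n = (1+\varepsilon)kn$.

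The main obstacle I anticipate is efficiently simulating the round-$i$ preprocessing on $G_i$ without explicitly materializing $G_i$. I would modify the sampling data structure of Lemma~\ref{lem:edgesample} to use \emph{effective} degrees $\deg_{G_i}(u) := \deg_G(u) - |\{v : (u,v) \in H_{<i}\}|$, maintained via counters updated in $\Oish(1)$ time per edge insertion into $H$; a node's degree-group membership is refreshed whenever its effective degree crosses a power-of-two threshold. During sampling, if the procedure selects an edge already lying in $H_{<i}$, we reject and resample, testing membership in $\Oish(1)$ time against the set data structure. Because $|H_{<i}| \le k(1+\varepsilon)n = \Oish(n)$ is small relative to the sampling budget, rejection inflates per-sample cost by only an $\Oish(1)$ factor; in the corner case where $|E_b|$ itself is as small as $\Oish(n)$, one can afford to enumerate $E_b \setminus H_{<i}$ directly and still stay within budget. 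With these modifications, the analysis behind Lemmas~\ref{lem:edgesample} and~\ref{lem:upedgecount} carries through per round at cost $\Oish(n\varepsilon^{-1})$, and summing over the $k = \Oish(1)$ rounds yields total preprocessing time $\Oish(n\varepsilon^{-1})$.
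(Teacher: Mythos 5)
Your high-level plan (build $k$ oracles sequentially, answer the \texttt{OR} at query time, argue correctness cut-by-cut as in Claim~\ref{cla: k-connectivity certificate intuition}, and charge $(1+\eps)n$ edges per round) is exactly the paper's structure, and those parts are fine. The gap is in the simulation of round $i$ on $G_i = G \setminus H_{<i}$. You define $H_j$ as the full \yes{}-set of the round-$j$ oracle, which includes not only the $\le n-1$ explicitly recorded edges but also the \emph{implicit} inter-component edges that the oracle answers \yes{} to without ever having discovered them during preprocessing. Your bookkeeping only sees the recorded edges: counters ``updated in $\Oish(1)$ time per edge insertion into $H$'' and membership tests ``against the set data structure'' account for insertions, but the implicit \yes{} edges are never inserted or enumerated anywhere. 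So either (a) the effective degrees $\deg_{G_i}(u)$ you maintain are not the degrees in $G_i$, and the sampler you analyze is not sampling from $E_b(G_i)$ as your correctness/size argument assumes, or (b) you try to genuinely account for $H_{<i}$, in which case computing $|\{v : (u,v) \in H_{<i}\}|$ for a vertex $u$ requires scanning $u$'s adjacency list and querying the earlier oracles neighbor-by-neighbor, i.e., $\Theta(\deg_G(u))$ time per vertex and $\Theta(m)$ overall, destroying sublinearity (this is the same obstruction as the cut-edge lower bound: the oracle paradigm works precisely because these edges are never proactively found). Relatedly, your rejection-cost argument compares $|H_{<i}|$ to the sampling budget, but the relevant quantity is the ratio $|E_b \cap H_{<i}|/|E_b|$; your enumeration fallback for $|E_b| = \Oish(n)$ is the right kind of patch for that, but it does not repair the degree-maintenance issue above.

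The paper avoids this entirely by \emph{not} adjusting the sampler at all: round $i$ keeps sampling uniformly from $E_b$ of the original graph exactly as in Lemma~\ref{lem:edgesample}, and any sampled edge to which some earlier oracle $A_1,\dots,A_{i-1}$ answers \yes{} is simply declared a failure (a new ``deleted edge'' failure type), with the consecutive-failure cutoff raised from $c\,\eps^{-1}2^b\log^2 n$ to $c\,i\,\eps^{-1}2^b\log^2 n$. Since with high probability the earlier oracles say \yes{} to at most $i(1+\eps)n$ edges in total, a sample is still a success with probability $\Theta\bigl(\eps/(i\,2^b\log n)\bigr)$ whenever at least $\eps n/\log n$ success edges remain, so the Lemma~\ref{lem:upedgecount} analysis carries over verbatim with the extra failure category, and the factor $i \le k = \Oish(1)$ is absorbed into the runtime. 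If you replace your effective-degree machinery with this (or, equivalently, drop the degree adjustment and keep only rejection at query-classification time, noting that recorded edges alone need not and should not be subtracted from degrees), the rest of your argument goes through.
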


Rather than describing our algorithms for $k$-\sss{} from scratch, we will describe how they are obtained from our algorithms for \sss{}.
\begin{itemize}
\item We construct $k$ separate adjacency oracles, which we will label $A_1, \dots, A_k$.
We construct these sequentially.
Our query algorithm will query each edge $(u, v)$ in all of these adjacency oracles, and return the \texttt{OR} of their outputs.
(This incurs a $k$-factor in query time, which gives the requirement $k = \Oish(1)$.)

\item Each oracle $A_i$ is made using the previous algorithm for \sss{}, with two changes:
\begin{itemize}
\item Each time we sample an edge $(u, v)$, we query that edge in all previously-computed adjacency oracles $A_1, \dots, A_{i-1}$.
If any of these previous oracles answer \yes{}, then we consider this to be a failure sample and we discard it (even if it meets all the previous criteria for success from before).

\item We repeat until we see $c i \eps^{-1} 2^b \log^2 n$ consecutive failures, before setting $b \gets b+1$ and moving onto the next bucket.
(Note the additional factor of $i$ in that expression, relative to the \sss{} algorithm.)
\end{itemize}
\end{itemize}







We can show that $H$ is a $k$-connectivity certificate via a proof analogous to Claim \ref{cla: k-connectivity certificate intuition}.
\begin{lemma} \label{lem:kcccorrect2}
$H$ is a $k$-connectivity certificate of $G$.
\end{lemma}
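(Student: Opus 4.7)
The plan is to mirror Claim \ref{cla: k-connectivity certificate intuition} by establishing the cut characterization of $k$-connectivity certificates stated in the introduction: for every cut $V_1 \sqcup V_2$ of $G$ with exactly $r$ crossing edges, $H$ contains at least $\min\{k, r\}$ of them. This property is equivalent to $H$ being a $k$-connectivity certificate, so it immediately yields the lemma.

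Let $H_i$ denote the \yes{}-set of oracle $A_i$, so that $H = H_1 \cup \cdots \cup H_k$, and define the residual graph $G_i = (V,\, E(G) \setminus (H_1 \cup \cdots \cup H_{i-1}))$ with $G_1 = G$. The central intermediate claim is that for every $i$, the edge set $H_i \cap E(G_i)$ spans $G_i$ (that is, connects the same pairs of vertices that $G_i$ does). Given this, fix a cut $V_1 \sqcup V_2$ of $G$ and let $c_i$ count the edges of $G_i$ crossing it. Whenever $c_i \ge 1$, the spanning property forces $A_i$ to answer \yes{} to at least one $G_i$-edge crossing the cut; that edge lies in $H_i$ and is removed from $G_{i+1}$, so $c_{i+1} \le c_i - 1$. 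Iterating $k$ times, either $c_j = 0$ for some $j \le k+1$ (in which case all $r$ crossing edges lie in $H$) or $c_{k+1} \ge 1$ and at least $k$ distinct crossing edges were absorbed into $H$ along the way. Either way $|H \cap (V_1 \sqcup V_2)| \ge \min\{k, r\}$.

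The main obstacle is verifying the central spanning claim, because $A_i$ is not executed as a black-box \sss{} oracle for $G_i$; rather, it runs on $G$ with the twist that edges already in $H_1 \cup \cdots \cup H_{i-1}$ are reclassified as failures. Two things must be checked. First, the edges added during $A_i$'s preprocessing are all valid successes, so they automatically lie in $E(G_i)$ and correctly merge union-find components; moreover, the query rule \yes{}es every cross-component edge of $G$, in particular every cross-component edge of $G_i$. Thus any path in $G_i$ is preserved edge by edge in $H_i \cap E(G_i)$, provided the final union-find of $A_i$ captures the connectivity of $G_i$. Second, to guarantee this, one reruns the Lemma \ref{lem:upedgecount} analysis with the new source of failure: previously-\yes{}'d edges contribute up to $O(in)$ extra to $|E_b|$, reducing the per-sample success probability by a factor $O(i)$. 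This is precisely compensated by the factor-$i$ increase in the consecutive-failure threshold $c i \eps^{-1} 2^b \log^2 n$, so again at most $\eps n / \log n$ cross-component $G_i$-edges survive per bucket, which is enough to make $H_i \cap E(G_i)$ span $G_i$ and close the argument.
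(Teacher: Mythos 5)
Your overall route is the same as the paper's: you pass to the cut characterization (which the paper explicitly licenses for this lemma) and argue that, for every cut, each oracle $A_i$ contributes either a new crossing edge or all remaining crossing edges. Your residual-graph bookkeeping $c_{i+1} \le c_i - 1$, together with the distinctness of the absorbed edges, is just a rephrasing of the paper's dichotomy (``either some iteration records no new crossing edge, in which case that oracle answers \yes{} to every crossing edge, or all $k$ iterations record distinct crossing edges''), and that part of your argument is sound.

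The problem is how you close the central claim that $H_i \cap E(G_i)$ spans $G_i$. This claim is deterministic and needs no probabilistic input: the edges recorded by $A_i$ lie in $E(G_i)$ (as you note) and, by construction of the union-find, they span every union-find component; and every edge of $G$ --- hence every edge of $G_i$ --- whose endpoints lie in different components is answered \yes{} at query time. So for any edge on a $G_i$-path, either the edge itself is in $H_i \cap E(G_i)$, or its endpoints are joined by recorded $A_i$-edges inside their common component; there is no proviso left to discharge. Your ``Second'' step, which reruns the Lemma \ref{lem:upedgecount} analysis and concludes that ``at most $\eps n/\log n$ cross-component $G_i$-edges survive per bucket, which is enough to make $H_i \cap E(G_i)$ span $G_i$,'' is a non sequitur: the count of surviving success edges is relevant only to the size bound $(1+\eps)kn$ (the subject of the following lemma), and no bound on their number is what makes $H_i$ spanning --- what makes it spanning is that \emph{all} such edges, however many, are included by the query rule. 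As written, this step injects an unnecessary ``with high probability'' into a lemma that holds deterministically (as in the paper, where correctness holds ``irrespective of the edges recorded in preprocessing'') and rests on an invalid inference; deleting it and invoking the union-find/query-rule observation instead repairs the proof.
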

\begin{proof}

Note that the \sss{} query algorithm always yields a spanning subgraph, irrespective of the edges recorded in preprocessing since we answer \yes{} to every edge between nodes in different connected components. Our first modification to the \sss{} algorithm amounts to considering an edge-subgraph of $G$ with all edges recorded in previous iterations deleted. 

Now, fix a cut $V_1 \sqcup V_2$ in $G$. First, observe that either $H$ contains all edges crossing the cut or at least $k$ edges crossing the cut. If in any iteration we do not record a new edge crossing the cut, then the two sides of the cut are disconnected and in separate connected components in that iteration. Hence, the adjacency oracle corresponding to that iteration will answer \yes{} to the query of each edge crossing the cut, and $H$ will contain all edges crossing the cut. Otherwise, we will record a new edge crossing the cut in all $k$ iterations, yielding $k$ total edges crossing the cut since we disregard edges which previous oracles answer \yes{} to when recording new edges.


Now, fix $F \subseteq E(H)$ with $|F| = k-1$. Let $C$ be a connected component in $G \setminus F$. Suppose, for the sake of contradiction, that $C$ is disconnected in $H \setminus F$. Then, there exists some cut $V_1 \sqcup V_2$ such that the vertices in $C$ are split across the cut, there are no edges in $H \setminus F$ crossing the cut, and $G \setminus F$ has at least one edge crossing the cut. But then, from the above, either $H$ contains all edges crossing the cut or at least $k$ edges crossing the cut. In either case, $H \setminus F$ would then have to have at least one edge crossing the cut, yielding the desired contradiction.
\end{proof}

As in the analysis of the \sss{} algorithm, we prove a lemma to control the number of edges of $G$ going between components for each the adjacency oracles.
\begin{lemma}
For all adjacency oracles $A_i$, just before we increment $b$ in the preprocessing for the $i$\textsuperscript{th} adjacency oracle, with high probability there are at most $\frac{\varepsilon n}{\log n}$ success edges remaining in $G$.
Consequently, with high probability, there are at most $(1+\eps)n$ edges in $E(G)$ for which $A_i$ answers \yes{}, but all previous adjacency oracles answer \no{}.
\end{lemma}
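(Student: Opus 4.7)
My plan is to adapt the proof of Lemma~\ref{lem:upedgecount} by nested induction: outer on the oracle index $i$, inner on the bucket index $b$. Let $Y_b^{(i)}$ denote the event that when we finish processing bucket $b$ in the preprocessing of $A_i$, at most $\eps n/\log n$ \emph{success edges} remain, where success now additionally requires that no previous oracle answers \yes{} to the edge. I would condition on the analogous events holding for all oracles $A_1,\dots,A_{i-1}$ (the outer inductive hypothesis, including its ``consequently'' conclusion) as well as on $Y_1^{(i)},\dots,Y_{b-1}^{(i)}$ (the inner inductive hypothesis), and then bound the conditional probability of $Y_b^{(i)}$ by a Chernoff plus union-bound argument exactly mirroring Lemma~\ref{lem:upedgecount}.

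The main modification lies in the edge bookkeeping for $E_b^*$, the oriented edges with first endpoint in bucket $b$. As in the \sss{} analysis, internal edges contribute $O(n\cdot 2^b)$ and descending edges at most $\eps n$ (by the inner inductive hypothesis). The new category is the set of \emph{used} edges, i.e., edges in $E_b^*$ that some previous oracle already answers \yes{} to and which the modified sampling step therefore silently treats as failures. By the ``consequently'' clause of the outer inductive hypothesis, each $A_j$ with $j<i$ answers \yes{} to at most $(1+\eps)n$ edges, so the total used count is at most $(i-1)(1+\eps)n=O(in)$. If at least $\eps n/\log n$ genuine success edges remain, each uniform sample from $E_b$ is therefore a success with probability at least
\[
\Omega\!\left(\frac{\eps n/\log n}{n\cdot 2^b+\eps n+i\cdot n}\right)\;=\;\Omega\!\left(\frac{\eps}{i\cdot 2^b\log n}\right),
\]
where I have used $i+2^b\le 2\,i\cdot 2^b$ for $i,2^b\ge 1$. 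This is exactly the reason the algorithm inflates the failure threshold by a factor of $i$ to $c\,i\,\eps^{-1}2^b\log^2 n$: plugging this into a standard Chernoff bound forces the probability of seeing zero successes in a window to be at most $n^{-100}$, and a union bound over the polynomially many windows, $\log n$ buckets, and $k=\Oish(1)$ oracle indices gives the first clause with high probability.

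For the ``consequently'' clause I would reuse the bookkeeping from Section~\ref{sec: near-linear time SSS}. The edges to which $A_i$ answers \yes{} but no previous oracle does split into (i) the edges inserted into $A_i$'s set data structure during preprocessing, which form a forest in $A_i$'s union-find and hence number at most $n-1$, and (ii) the edges between distinct final components of $A_i$'s union-find that are not already claimed by any previous oracle, which are exactly the remaining fresh success edges and hence sum across the $\log n$ buckets to at most $\log n\cdot\eps n/\log n=\eps n$. Together these give $(1+\eps)n$, closing the outer induction.

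The step I expect to be the main obstacle is verifying that the factor of $i$ inflation in the failure threshold is \emph{exactly} the right compensation for the extra $O(in)$ used edges cluttering $E_b^*$; in particular one must invoke the ``consequently'' clause of the outer inductive hypothesis (rather than only its first clause) in order to keep the used-edge count at $O(in)$ instead of something like $O(in\log n)$. Once this accounting is correct, the rest of the proof is a largely mechanical re-instantiation of Lemma~\ref{lem:upedgecount}.
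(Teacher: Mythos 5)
Your proposal is correct and follows essentially the same route as the paper: induct on the oracle index $i$, introduce the edges already claimed by previous oracles as a new failure category bounded by $O(i n)$ via the inductive size bound, observe that the success probability is then $\Omega\bigl(\eps/(i\,2^b\log n)\bigr)$ so the inflated threshold $c\,i\,\eps^{-1}2^b\log^2 n$ makes a Chernoff/union-bound argument go through, and finish the size bound with the same bookkeeping as in the \sss{} analysis. The only cosmetic difference is that the paper splits into two cases (deleted edges at most half vs.\ dominating the failure edges) where you merge the denominators into a single bound, which is equivalent.
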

\begin{proof}
The proof is nearly identical to the proof of Lemma \ref{lem:upedgecount}.
The only real difference is that there is a new type of failure edge, the edges to which the previous oracles say \yes{}.
We will call these edges \textbf{deleted edges}.

Let us assume for now that the lemma holds for all previous adjacency oracles $A_1, \dots, A_{i-1}$, and we will prove it for $A_i$ (the base case is Lemma \ref{lem:upedgecount} from before).
Thus, by assumption the number of deleted edges is at most $i \cdot (1+\eps)n$.
If at most half of the failure edges are deleted edges, then the result follows from the same analysis as before.
Otherwise, if the deleted edges dominate the other types of failure edges, then if there are at least $\frac{\eps n}{\log n}$ success edges remaining, we can compute that the probability that a particular edge sample is a success is at least
$$\Theta\left( \frac{\eps n}{i \cdot (1+\eps) n \log n} \right) = \Theta\left( \frac{\eps}{i \cdot \log n} \right).$$
Thus, by standard Chernoff bounds, with high probability we will not sample $c i \eps^{-1} 2^b \log^2n$ consecutive failure edges.
The rest of the proof is completed as in Lemma \ref{lem:upedgecount}.
\end{proof}

It remains to analyze the runtime.
\begin{lemma}
The total preprocessing time for this algorithm is $\Oish(n \varepsilon^{-1})$.
\end{lemma}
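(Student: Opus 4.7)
The plan is to mirror the runtime analysis of the original \sss{} algorithm, carefully tracking the two new factors introduced in the $k$-\sss{} construction: (i) the failure threshold for oracle $A_i$ is inflated by a factor of $i$, and (ii) each edge sample now requires querying the $i-1$ previously-built adjacency oracles. Since $k = \Oish(1)$, both blowups will be absorbed by the $\Oish$ notation.

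First, I would bound the number of samples drawn per bucket per oracle. For oracle $A_i$ and bucket index $b$, the union-find data structure contains at most $n / 2^b$ components whose nodes sit in bucket $b$, and each success event causes at least one such component to leave the bucket (by doubling). Between consecutive successes we draw at most $c \cdot i \cdot \eps^{-1} 2^b \log^2 n$ samples by construction, so the total number of samples while bucket $b$ is active is at most
\[
\frac{n}{2^b} \cdot c \cdot i \cdot \eps^{-1} 2^b \log^2 n = \Oish(i \cdot n \eps^{-1}).
\]
Summing over the $\log n$ buckets gives $\Oish(i \cdot n \eps^{-1})$ samples in total for oracle $A_i$.

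Next I would bound the cost per sample. By Lemma~\ref{lem:edgesample}, selecting a uniform edge from $E_b$ takes $\Oish(1)$ time with high probability, and the amortized maintenance cost of the sampling data structure is $\Oish(n)$ per bucket. On top of this, each sampled edge $(u,v)$ must be tested against all $i-1$ previously constructed adjacency oracles, each of which responds in $\Oish(1)$ time, contributing an additional $\Oish(i)$ factor per sample. Thus the total work spent on samples for oracle $A_i$ is $\Oish(i \cdot n \eps^{-1}) \cdot \Oish(i) = \Oish(i^2 \cdot n \eps^{-1})$. The union-find updates, recording of accepted edges, and per-bucket setup of the sampling structure together contribute a further $\Oish(n)$ per bucket, i.e.\ $\Oish(n)$ per oracle, which is dominated.

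Finally, summing over the $k$ oracles yields total preprocessing time
\[
\sum_{i=1}^{k} \Oish(i^2 \cdot n \eps^{-1}) = \Oish(k^3 \cdot n \eps^{-1}) = \Oish(n \eps^{-1}),
\]
using the hypothesis $k = \Oish(1)$. The main subtlety is just the bookkeeping to confirm that the $i$-factor in the failure threshold (needed in the previous lemma to keep the success probability high enough given that deleted edges may dominate the failure edges) together with the $i$-factor from querying prior oracles only costs a $\poly(k)$ overhead; no new probabilistic argument is required, since correctness and the ``success edges shrink below $\eps n / \log n$'' guarantee are already in hand from the previous lemma.
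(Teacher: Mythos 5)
Your proof is correct and follows essentially the same route as the paper's: bound the number of components per bucket by $n/2^b$, multiply by the failure threshold between successes, sum over the $\log n$ buckets and the $k$ oracles, and absorb the remaining factors using $k=\Oish(1)$. The only difference is that you track the $i$-dependent factors (inflated threshold and queries to prior oracles) explicitly as a $\poly(k)$ overhead, whereas the paper folds them directly into the $\Oish(\cdot)$ notation.
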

\begin{proof}
For each oracle $A_i$, each bucket $b$ holds at most $n/2^b$ connected components.
We therefore have at most $n/2^b$ successes while considering bucket $b$, since every success decreases the number of components in bucket $b$ by at least one.
We sample $\Oish(\varepsilon^{-1} 2^b)$ edges between successes, and so we iterate at most $\Oish(\varepsilon^{-1} n)$ times for bucket $b$.
Since there are $\log n$ buckets per adjacency oracle, and $k = \Oish(1)$ many adjacency oracles, we get our desired preprocessing time.
\end{proof}

\section{Spanning Adjacency Oracle Lower Bounds}
In this section we show two lower bounds. First we show that no algorithm with $o(n^{1 - \varepsilon})$ preprocessing time can compute a spanning subgraph adjacency oracle on a linear number of edges. This lower bounds hold even when the adjacency oracles have access to the adjacency list and adjacency oracle of the input graph. This is the \textit{weaker} model from the introduction.

Second, we show that no randomized algorithm with $o(n)$ preprocessing time can output a spanning subgraph adjacency oracle on even $o(n^2)$ edges in general. The caveat for this stronger result is that it applies in a slightly weaker model: we assume that the adjacency oracle does not have access to the adjacency oracle or adjacency list of the input graph. Rather, we assume we are instead promised that each edge query received by the adjacency oracle is a query of an edge from $G$. This is the \textit{stronger} model from the introduction.
\subsection{Lower Bounds in the Weaker Model}
In this section we show a close to linear lower bound on the preprocessing time of any algorithm computing an $O(n)$-size spanning subgraph adjacency oracle with high probability. We appeal to a construction from \cite{PRVY19} in the setting of Local Computation Algorithms (LCAs) for spanners.

\begin{theorem} \label{thm: strong lower bound}
For all absolute constants $\delta > 0$, any algorithm that computes an $O(n)$-size spanning subgraph adjacency oracle with probability at least $2/3$ must take $\Omega(n^{1-\delta})$ preprocessing time.
\end{theorem}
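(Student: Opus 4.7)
The plan is to reduce the adjacency oracle lower bound to a Local Computation Algorithm (LCA) lower bound for \sss{}, via the graph family of Parter, Rubinfeld, Vakilian, and Yodpinyanee~\cite{PRVY19}.

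First, I would establish the natural simulation: any algorithm producing an $O(n)$-edge \sss{} adjacency oracle with preprocessing time $T(n)$ and $\Oish(1)$ query time can be converted into an LCA for \sss{} with query time $\Oish(T(n))$. The LCA uses shared randomness across all queries; on an edge query $e$ it rebuilds the entire oracle from scratch using the shared random tape, then queries the oracle once and returns the answer. Determinism of queries, which is hard-coded into the adjacency oracle definition, guarantees that after conditioning on the shared randomness the LCA's answers across edges are consistent with a single spanning subgraph $H \subseteq G$, so the success probability carries over from the oracle to the LCA.

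Second, I would invoke \cite{PRVY19} to obtain an $\Omega(n^{1-\delta})$ LCA lower bound for \sss{} for every constant $\delta > 0$. Their construction furnishes a distribution over $n$-node input graphs such that any randomized LCA that outputs an $O(n)$-edge valid spanning subgraph with probability at least $2/3$ must take $\Omega(n^{1-\delta})$ time per query in the worst case. Although \cite{PRVY19} phrase their main theorems for spanner LCAs, the hard instances force ``hidden'' bridge-like edges into every valid spanning subgraph, so the same lower bound transfers to \sss{}. Composing the two steps, a preprocessing time of $o(n^{1-\delta})$ would yield an \sss{} LCA running in $o(n^{1-\delta})$ time, contradicting the PRVY19 lower bound and thus establishing Theorem~\ref{thm: strong lower bound}.

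The main obstacle I anticipate is cleanly extracting an \sss{} LCA lower bound from the spanner lower bound in \cite{PRVY19}. One has to verify that their hard-instance distribution contains, for each $\delta > 0$, planted structure whose membership in any valid $O(n)$-edge spanning subgraph cannot be resolved in fewer than $\Omega(n^{1-\delta})$ input queries---not merely its membership in a spanner of small stretch. Conceptually, this amounts to arguing that their construction can be viewed as an amplified, multi-scale version of the single-cut-edge instance of \cite{LRR20}, with sufficiently many hidden forced edges to push the lower bound to $n^{1-\delta}$ rather than the $\Omega(n^{1/2})$ bound yielded by the bare LRR20 example.
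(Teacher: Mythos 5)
There is a genuine gap, and it sits exactly where you flagged your ``main obstacle'': the $\Omega(n^{1-\delta})$ query-time lower bound for \sss{} LCAs that your reduction needs is not what \cite{PRVY19} proves, and no such bound is available. What Theorem 1.3 of \cite{PRVY19} gives (and what this paper actually uses) lives at the square-root scale: on $N$-node instances (random $N^{1/2}$-regular graphs), any LCA that outputs a spanning subgraph with $o(N^{3/2})$ edges with probability $2/3$ must make $\Omega(N^{1/2})$ probes. Your first step---rebuilding the oracle from shared randomness on every query---is fine (it is Corollary \ref{cor:introlocal} run in reverse), but composing it with the true \cite{PRVY19}/\cite{LRR20} bounds only yields an $\Omega(\sqrt{n})$ preprocessing lower bound, far short of $\Omega(n^{1-\delta})$. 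Nor can the missing LCA bound be manufactured by ``amplifying'' the hard instance: on a disjoint union of small hard components each query concerns a single component, so an LCA may spend time at the component scale per query, and consequently query-time lower bounds do not grow with the number of components. Your black-box reduction collapses preprocessing and query time into one quantity, and at that level of granularity the known LCA lower bounds are simply too weak to give the theorem.

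The paper's proof instead exploits precisely the asymmetry your reduction erases. Take the input to be a disjoint union of $n^{1-\delta/2}$ independent copies of the \cite{PRVY19} hard instance, each an $n^{\delta/4}$-regular graph on $n^{\delta/2}$ nodes (connected with high probability). If preprocessing runs in $o(n^{1-\delta})$ time, at least half of the components are never probed during preprocessing; on each such component, the oracle's $\Oish(1)$-time query algorithm is exactly an LCA making $\Oish(1) = o(n^{\delta/4})$ probes, so by \cite{PRVY19} that component must contribute $\Omega(n^{3\delta/4})$ edges to the output subgraph. Summing over $\Omega(n^{1-\delta/2})$ untouched components forces $\Omega(n^{1+\delta/4}) = \omega(n)$ edges, contradicting the $O(n)$ size guarantee. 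Thus the contradiction is obtained on the \emph{output size}, using only the square-root-scale per-component LCA bound together with the $\Oish(1)$ query-time constraint---not from any global $\Omega(n^{1-\delta})$ LCA query lower bound, which remains unproven. To repair your write-up you would need to replace step two with this many-components counting argument (or prove the strong LCA lower bound you assumed, which is a much harder task than the theorem at hand).
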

\begin{proof}
Fix $\delta > 0$. From the proof of Theorem 1.3 of \cite{PRVY19}, there exists a family of random $n^{\delta/4}$-regular graphs on $n^{\delta/2}$ nodes such that any local computation algorithm outputting an $o(n^{3\delta/4})$ edge spanning subgraph with probability at least $2/3$ requires at least $\Omega(n^{\delta/4})$ queries of the graph. One important property of this  family is that the randomly generated graph is connected with high probability.

Now, let our input graph be drawn from the family of disjoint unions of $n^{1-\delta/2}$ copies of graphs independently generated from this family. A spanning subgraph of any graph from this family is the disjoint union of spanning subgraphs on each connected component. 

Consider any algorithm running in $o(n^{1-\delta})$ preprocessing time which computes a spanning subgraph adjacency oracle with probability at least $2/3$. At least half of the connected components must be unvisited during preprocessing. Hence, queries of edges from those connected components amount to running a local computation algorithm for those subgraphs with $\Oish(1)$ queries. Then, from the result of \cite{PRVY19}, since $\Oish(1) = o(n^{\delta/4})$ and the algorithm computes a spanning subgraph adjacency oracle with probability at least $2/3$, these connected components must contribute $\Omega(n^{1 + \delta/4})$ edges to the spanning subgraph induced by the adjacency oracle. This implies the desired result.

\end{proof}

A possible objection to Theorem \ref{thm: strong lower bound} is that the constructed hard instances are not connected. It turns out that it is relatively simple to get the same lower bound while also assuming connected input graphs. After generating the random connected component subgraphs $\{C_i\}$ as in Theorem \ref{thm: strong lower bound}, introduce a single auxiliary node $r$ and connect it to one node in each $C_i$ chosen at random. This leaves the asymptotic average degree unchanged. Since the $C_i$'s are themselves connected with very high probability, the resultant graph formed after adding these edges is connected with high probability. Observe that any spanning subgraph of an input graph from this family is exactly spanning subgraphs of the $C_i$'s connected by the newly introduced star of edges around $r$. Although one node in each $C_i$ is now degree $n^{\varepsilon/4} + 1$, this information does not distinguish between nodes in the induced subgraph of $C_i$ (e.g., see the proof of Theorem 1.3 in \cite{PRVY19}). Hence, the same analysis as in Theorem \ref{thm: strong lower bound} applies.

\subsection{Lower Bounds in the Stronger Model}
In this section we show that any algorithm that computes an $o(n^2)$-size spanning subgraph adjacency oracle with high probability must take $\Omega(n)$ preprocessing time. However, the caveat is that this lower bound holds under the algorithmic model where queries to the adjacency oracle only have access to information stored during preprocessing. Namely, they do not have access to the adjacency oracle and adjacency list of the underlying graph accessible previously during preprocessing. We are also promised that every query of the spanning adjacency oracle is a query of an edge that exists in the initial graph. (Without this promise, the family of random spanning trees on $n$ nodes yields the lower bound trivially.) Notably, our algorithms provide upper bounds in this model as well.

\begin{theorem} \label{thm: weak model clique lb}
Any algorithm that with constant probability computes an $o(n^2)$-size spanning subgraph adjacency oracle (without the oracle maintaining access to the input graph after preprocesssing) must take $\Omega(n)$ preprocessing time.
\end{theorem}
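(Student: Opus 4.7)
The plan is to construct a hard input distribution $D$ with a planted pendant edge, and apply Yao's principle. Let $V = [n]$; draw $v^* \in V$ and $u^* \in V \setminus \{v^*\}$ uniformly and independently, let $G_0$ be a random graph on vertex set $V \setminus \{v^*\}$ that includes each pair as an edge independently with probability $1/2$, and set $G := G_0 \cup \{(v^*, u^*)\}$. In $G$, vertex $v^*$ is a pendant with unique neighbor $u^*$, so every spanning subgraph of $G$ must contain $(v^*, u^*)$; equivalently, if we represent the oracle's output as a function $f \colon \binom{V}{2} \to \{0,1\}$ and define $H := f^{-1}(1) \cap E(G)$ as the represented subgraph, then we need $f(v^*, u^*) = 1$. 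Moreover $|E(G)| = \Theta(n^2)$ with high probability, so the size requirement $|H| = o(n^2)$ is nontrivial.

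By Yao's principle, it suffices to consider any deterministic $o(n)$-preprocessing algorithm $A$ and to bound its success probability on $D$. The first step is to show that with probability $1 - o(1)$ over $D$, the preprocessing of $A$ never interacts with $v^*$, meaning $v^*$ is neither supplied as a parameter to, nor returned as the answer of, any of the $o(n)$ preprocessing queries. Each query names $O(1)$ specific vertices and returns $O(1)$ vertices, and responses to queries not involving $v^*$ are determined entirely by $G_0$, which is independent of $v^*$. Hence adaptivity does not help $A$ locate $v^*$ faster than uniform sampling, and a union bound over the $o(n)$ queries gives a probability of $O(o(n)/n) = o(1)$ that $A$ interacts with $v^*$.

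Condition on the event that $A$ never interacts with $v^*$. Then $f$ is a deterministic function of $A$'s observations of $G_0$ alone, and is therefore independent of $(v^*, u^*)$. Next, the size constraint $|H| = o(n^2)$ will be used to bound $|f^{-1}(1)|$. Partition $f^{-1}(1)$ into (i) pairs incident to $v^*$ (at most $n - 1$), (ii) pairs in $\binom{V \setminus \{v^*\}}{2}$ whose $G_0$-status was directly observed during preprocessing (at most $O(o(n))$), and (iii) all remaining pairs, which are unobserved. Conditional on $A$'s transcript, each unobserved pair is an independent Bernoulli$(1/2)$ edge of $G_0$; a standard Chernoff bound then shows that with high probability at least, say, one-third of the unobserved yes-pairs lie in $E(G_0)$ and contribute to $|H|$. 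Combined with $|H| = o(n^2)$, this forces the number of unobserved yes-pairs to be $o(n^2)$, hence $|f^{-1}(1)| = o(n^2)$.

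Since $f$ is independent of the uniformly random pair $(v^*, u^*)$, it follows that $\Pr[f(v^*, u^*) = 1] \le |f^{-1}(1)|/\binom{n}{2} = o(1)$. Therefore $A$ fails with probability $1 - o(1)$, contradicting the assumed constant success probability and yielding the $\Omega(n)$ lower bound. The main obstacle will be the Chernoff-based step bounding $|f^{-1}(1)|$ in terms of $|H|$: an adaptive oracle could try to place its yes-pairs on non-edges of $G_0$ to inflate $|f^{-1}(1)|$ without inflating $|H|$, and we must argue carefully, conditioned on the oracle's $o(n)$-bit transcript, that the unobserved portion of $G_0$ remains sufficiently random to defeat this strategy.
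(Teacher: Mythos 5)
Your plan is correct in outline and follows the same high-level strategy as the paper's proof: plant a single forced edge whose endpoints are, with probability $1-o(1)$, untouched by the $o(n)$-time preprocessing, argue that conditioned on the transcript the oracle cannot distinguish the forced pair from $\Omega(n^2)$ other candidate pairs, and use the density of the rest of the instance so that answering \yes{} on a constant fraction of candidates inflates the represented subgraph to $\Omega(n^2)$ edges. The difference is the hard distribution: the paper uses two $n/2$-cliques joined by one random cut edge, while you use a pendant vertex $v^*$ attached to an Erd\H{o}s--R\'enyi graph $G_0$. The clique instance buys two simplifications that your version has to earn by hand. First, in the cliques every pair of untouched vertices on the same side is deterministically an edge, so the ``many \yes{}-pairs $\Rightarrow$ many real edges'' step needs no Chernoff bound and no conditioning argument; in your version you must (as you note) control the posterior of $G_0$ given the transcript, and here be careful that degree queries fix entire row sums, so unobserved pairs incident to degree-queried vertices are not independent Bernoulli$(1/2)$ as you assert --- restricting the count to pairs with both endpoints untouched (still $\Omega(n^2)$ of them, negatively associated or exactly independent) repairs this. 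Second, in the clique instance all degrees equal $n/2-1$ except the two cut endpoints, so degree queries are uninformative; in your instance $\deg_G(u^*)=\deg_{G_0}(u^*)+1$ and the adjacency list of $u^*$ contains $v^*$, so your claim that ``responses to queries not involving $v^*$ are determined entirely by $G_0$'' is literally false for queries at $u^*$. This matters because an oracle that could identify $u^*$ could answer \yes{} on the $n-1$ pairs incident to $u^*$ and win cheaply; the fix is to include non-interaction with $u^*$ in your good event (also probability $1-o(1)$, since $u^*$ is uniform and independent of the answers the algorithm sees before touching it). With those two repairs your argument goes through and gives the same $\Omega(n)$ bound; the paper's choice of instance is essentially a cleaner packaging of the same indistinguishability-plus-density argument.
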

\begin{proof}
Suppose that we are promised that our input graph will be two disjoint $n/2$-cliques joined by one random cut edge. The input graph is sampled by uniformly selecting a partition of $[n]$ into two parts of size $n/2$ and then selecting a pair of nodes between the two parts uniformly at random. It suffices to prove the result for this restricted set of input graphs.

Now, note that every degree oracle access will return $n/2 - 1$ except for queries to the endpoints of the random cut edge. That is, degree queries only reveal whether or not a vertex is an endpoint of the cut edge.
Moreover, since the graph is two cliques joined by a cut edge, adjacency queries amount to revealing an additional node or two in a single clique. Every such query can then be viewed as revealing the clique assignment of two nodes (since in reality it reveals at most that much information). In particular, adjacency queries do not reveal any information about the location of the endpoints of the cut edge unless they actually involve one of those endpoints.

Then, for any $f(n) = o(n)$, we may assume that in any $f(n)$ preprocessing time algorithm, the preprocessing phase amounts to revealing the clique assignment of $f(n)$ random nodes in each clique (since it reveals at most that many) and checking whether each of these nodes are an endpoint of the cut edge. In particular, with probability at least $1 - \Theta(\frac{f(n)}{n})$ neither endpoint of the cut edge appears in a query.

Next, consider the subgraph induced by the unqueried nodes. With probability at least $1 - \Theta(\frac{f(n)}{n})$, it is composed of two cliques of size $\Theta(n)$ connected by a cut-edge. In particular, queries to the data structure formed by the $f(n)$ preprocessing time reveal nothing beyond this about the unqueried nodes. Assume that the cut edge is in this induced subgraph. On a query of the cut-edge, the data structure must return \yes{} with at least constant probability (with the randomness here from preprocessing). If it returns \no{}, then the graph induced by the adjacency oracle is not spanning. However, none of the queried edges can be distinguished by the preprocessing data structure and, by the randomness of the graph input, any algorithm that treats the edge queries differently is equivalent to one that treats all edge queries identically. (The protocols are averaged over the random graph inputs since the cut-edge has an equal probability of being any of the edges between nodes not queried in preprocessing.) However, there are $\Omega(n^2)$ edges with both endpoints unqueried. Hence, the graph underlying the adjacency oracle will have $\Omega(n^2)$ edges, implying the desired result.
\end{proof}

\section{Constructing Adjacency Oracles for $3$-spanners}
First, as a warm-up, we outline the construction of the adjacency oracle for approximately regular graphs.
Then we extend the construction to general graphs.

\subsection{Approximately Regular Graphs}
In this section, we will assume that all vertices in the input graph $G(V, E)$ have degree within a constant factor $C$ of $D$. We will generalize this to graphs of arbitrary degree in the next part.
The following algorithm is phrased in the stronger model where the query algorithm may not access $G$, but we are promised that only edges from $E(G)$ will be queried.
(Recall that if the query algorithm may access $G$, then this promise is not necessary, since we can first check whether the queried edge is in $E(G)$ and answer \no{} if not.)

\paragraph{Preprocessing Algorithm.}
If $D \leq n^{0.5}$, we can use the input graph itself as our adjacency oracle: that is, we can simply answer \yes{} to every query.
Otherwise, we construct the data structure for our adjacency oracle as follows:

\begin{mdframed}[backgroundcolor=gray!20]
\textbf{$3$-Spanner Adjacency Oracle Preprocessing (Approximately Regular Setting)}
\begin{itemize}
    \item Iterate over each $v \in V$ and, independently with probability $\frac{100 C \log n}{D}$, initialize a new cluster $C_v$, for which $v$ is the \textit{cluster center}.
    
    \item For all cluster centers $c$ in an arbitrary order, loop over the edges $(c,w)$ incident to $c$ and check whether $w$ is currently assigned to a cluster.
    If $w$ is not assigned to a cluster, set its cluster membership to be the cluster centered at $c$ and record the edge $(c,w)$.
    
    \item For each $v \in V$, sample $100C r \log n$ incident edges uniformly (where $1 \leq r \leq CD$ is a parameter to be specified later). For each edge $(v,w)$, if $w$ is in a different cluster than $v$ and $v$ has no edges recorded to $w$'s cluster, record that $v$ is adjacent to $w$'s cluster, and record the edge $(v,w)$. 
\end{itemize}
\end{mdframed}

We can implement our data structure with:
\begin{itemize}
    \item An array, indexed by $v \in V$, with cell corresponding to $v$ storing a set data structure supporting $\Oish(1)$ time insertions and queries (for checking whether or not $v$ is adjacent to a given cluster).
    \item Another set data structure supporting $\Oish(1)$ time insertions and queries (for checking whether a given edge is recorded). 
\end{itemize}

By the Chernoff bound, with high probability $\Oish(n/D)$ cluster centers are selected in the first step.  Then, the second step takes $\Oish(n)$ time since each vertex is of degree at most $CD$. Since checking whether a vertex $v$ is adjacent to a given cluster takes $\Oish(1)$ time using a dictionary, then the third step takes $\tO(nr)$ time. In total, this preprocessing takes $\tO(nr)$ time. An important feature of the construction is that all vertices will belong to a cluster:
\begin{claim} \label{cla: 3-spanner approx regular clustered}
With high probability, every vertex $v \in V$ is assigned to a cluster in our algorithm.
\end{claim}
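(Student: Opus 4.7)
The plan is to observe that a vertex $v$ fails to be assigned to a cluster if and only if neither $v$ itself nor any neighbor of $v$ is selected as a cluster center in the first step of preprocessing. Indeed, if any vertex $w$ in the closed neighborhood $N[v]$ is chosen as a cluster center, then either $v = w$ (in which case $v$ centers its own cluster), or in the second step the center $w$ will at some point try to claim $v$ and will succeed unless $v$ has already been claimed by another cluster center; either way $v$ ends up in some cluster.

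The calculation is then a standard independent-sampling union bound. Let $p = \frac{100 C \log n}{D}$ be the sampling probability, and recall we are in the regime $D > n^{1/2}$ with degrees in roughly $[D/C, CD]$. Then for any fixed $v$,
\[
\Pr[v \text{ unassigned}] = (1-p)^{|N[v]|} \le (1-p)^{D/C} \le \exp\!\left(-\frac{pD}{C}\right) = \exp(-100 \log n) = n^{-100}.
\]
A union bound over the $n$ vertices gives that every vertex is assigned to some cluster with probability at least $1 - n^{-99}$, which is high probability in the sense used throughout the paper.

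The only mild subtlety—and the one place I would be careful—is the independence of the events and the correct lower bound on $|N[v]|$. The selection of cluster centers is done by independent coin flips per vertex, so $(1-p)^{|N[v]|}$ is exact (not merely an upper bound due to correlations). For the degree bound, I would just invoke the hypothesis that degrees are within a constant factor $C$ of $D$ to conclude $|N[v]| \ge D/C$, which is what makes the exponent $-100 \log n$ work out. This step would need a quick sentence noting that we're assuming $D > n^{1/2}$ is sufficiently large that $p < 1$, which is immediate since $p = O(\log n / n^{1/2}) = o(1)$. No real obstacle arises; the claim is essentially a textbook dominating-set-by-random-sampling argument tailored to the Baswana–Sen cluster-center selection.
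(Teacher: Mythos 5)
Your proposal is correct and follows essentially the same route as the paper: both reduce the claim to the observation that $v$ is clustered whenever some vertex of its closed neighborhood $N[v]$ is chosen as a center (a step you in fact justify more carefully than the paper does), and then bound the failure probability per vertex and union bound over all $n$ vertices. The only difference is cosmetic: you compute the failure probability exactly as $(1-p)^{|N[v]|} \le n^{-100}$, whereas the paper invokes a Chernoff bound on the sum of the independent indicators $X_u$ with expectation at least $100\log n$; both yield the same high-probability conclusion.
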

\begin{proof}
Fix $v \in V$. Let $X_u$ be the indicator random variable for vertex $u \in N[v]$ (the closed neighborhood of $v$) being selected as a cluster center. Then, if $\sum X_u \geq 1$, $v$ will be clustered. Since $v$ has degree at least $D / C$, the expectation of this sum is at least
\[
\frac{D}{C} \cdot \frac{100C}{D} \cdot \log n = 100 \log n.
\]
Then, by the Chernoff bound, since the $X_u$'s are independent, the probability that the sum of the $X_u$'s is less than one is at most  
\[
1 - \exp\left(\frac{-100\log n}{4}\right) = 1 - 1/n^{25}.
\]
Then, by the union bound, all vertices are then clustered with high probability.
\end{proof}

In the following, we will assume that this high-probability event occurs, and that all nodes are clustered.

\paragraph{Query Algorithm.}

On query $(s,t)$, the adjacency oracle responds as follows (with \yes{} meaning that the underlying $3$-spanner has edge $(s,t)$ and \no{} meaning that it does not have the edge):

\begin{mdframed}[backgroundcolor=gray!20]
\textbf{$3$-Spanner Adjacency Oracle Query (Approximately Regular Setting)}
\begin{itemize}
    \item If the edge $(s,t)$ was recorded in preprocessing, output \yes.
    \item Else if $s,t$ are in different clusters, \textbf{and} we did not record an edge from $s$ to the cluster containing $t$, \textbf{and} we did not record an edge from $t$ to the cluster containing $s$, output \yes{}.
    \item  Otherwise, output \no{}.
\end{itemize}
\end{mdframed}

We can straightforwardly check which case holds in $\Oish(1)$ time.
We next observe correctness of the spanner:
\begin{claim} \label{cla: approx reg 3-spanner prop}
The subgraph $H \subseteq G$ induced by the adjacency oracle is a $3$-spanner of the input graph $G$.
\end{claim}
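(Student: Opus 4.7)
The plan is a short case analysis on the NO branch of the query algorithm. Whenever the oracle answers YES on an edge $(s,t) \in E(G)$, the edge itself lies in $H$ and contributes stretch $1$, so it suffices to bound $\dist_H(s,t)$ in the case where the oracle answers NO.

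First I would invoke Claim \ref{cla: 3-spanner approx regular clustered} to condition on the high-probability event that every vertex is assigned to a cluster. Writing $c_v$ for the center of $v$'s cluster, the second preprocessing step then records the edge $(v, c_v)$ for every non-center $v$, so every such ``membership edge'' sits in $H$ (since the query algorithm returns YES on every recorded edge).

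Next I would unpack the NO branch of the query algorithm into three subcases. Inspecting the conditional, the oracle answers NO on $(s,t) \in E(G)$ only if (a) $s$ and $t$ lie in the same cluster, in which case $s \to c_s \to t$ is a path in $H$ of length at most $2$; (b) a recorded edge $(s, w)$ exists with $w$ in $t$'s cluster, in which case the recorded edge $(s,w)$ together with the membership edges $(w, c_t)$ and $(c_t, t)$ yields a path $s \to w \to c_t \to t$ in $H$ of length at most $3$ (with coinciding endpoints collapsed if $w = c_t$ or $t = c_t$); or (c) the situation symmetric to (b) with the roles of $s$ and $t$ reversed, handled identically.

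There is no real obstacle here; the argument is pure bookkeeping with no probabilistic estimation beyond Claim \ref{cla: 3-spanner approx regular clustered}. The only subtlety is verifying that the enumeration of NO cases matches the boolean structure of the query conditional and that the edges invoked in each case (the membership edges from step 2 and the inter-cluster edges from step 3) were in fact recorded and therefore lie in $H$.
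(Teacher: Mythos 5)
Your proposal is correct and follows essentially the same route as the paper: reduce to edges of $G$ on which the oracle answers \no{}, then split into the same-cluster case (a $\le 2$-path through the shared center via recorded membership edges) and the cluster-adjacency cases (a $\le 3$-path through the recorded inter-cluster edge and the center of the other cluster), conditioning on Claim \ref{cla: 3-spanner approx regular clustered} so that every vertex is clustered. The degenerate subcases you note (e.g.\ $w = c_t$ or $t = c_t$) are handled the same way in the paper's proof.
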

\begin{proof}
By standard reductions \cite{ADDJS93, ABSHJLKS20}, it suffices to verify that for each edge $(s, t) \in E(G)$ for which the oracle responds \no{} to the query $(s, t)$, there exists an $s \leadsto t$ path of length $\le 3$ of edges to which the oracle responds \yes{}.
There are two cases in which the oracles responds \no{}:
\begin{itemize}
\item First, suppose that $s, t$ are in the same cluster and that neither $s$ nor $t$ are the center $c$ of this cluster (otherwise we would record the edge $(s, t)$).
Then we record edges $(s, c)$ and $(c, t)$, so $s \to c \to t$ forms a $2$-path in $H$.

\item Second, suppose that $s$ is adjacent to the cluster of $t$ via some recorded edge $(s, x)$. Let $c$ be the center of the cluster of $t$. Then, we record each edge in the $3$-path $s \to x \to c \to t$.
(This also handles the case in which $c = t$, in which case the above path is a $2$-path of recorded edges.)
The case in which $t$ is adjacent to the cluster of $s$ via a recorded edge follows analogously.
\end{itemize}
\end{proof}

\begin{claim} \label{cla: 3-spanner approx reg edges}
With high probability, the total number of edges in the subgraph induced by the adjacency oracle is $\tO(n^2/r)$.
\end{claim}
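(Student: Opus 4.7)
The plan is to partition the edges of the spanner $H$ into (a) edges explicitly recorded during preprocessing, and (b) the implicit \yes{} edges arising from the second case of the query algorithm, and to bound each class separately. For (a), the clustering step of preprocessing records exactly one edge per non-center vertex, contributing $O(n)$ edges; and the sampling step records at most one edge per (vertex, neighboring cluster) pair witnessed by a sample, hence at most $100Cr\log n$ edges per vertex, for a total of $\tO(nr)$. Since the algorithm is only interesting in the regime $r \le \sqrt{n}$ (we already handle $D \le n^{0.5}$ separately, and we eventually set $r = \sqrt n$), we have $\tO(nr) \le \tO(n^2/r)$, so class (a) is safely within budget.

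The substance is in bounding class (b). The key lemma is: with high probability, for every vertex $v$ and every cluster $C$, if $v$ has at least $k := c' D \log n / r$ edges into $C$ (for a suitable constant $c'$), then $v$ records at least one edge into $C$ during the sampling phase. To prove this, note that each of the $100Cr\log n$ independent samples by $v$ lands in $C$ with probability $k/\deg(v) \ge k/(CD)$, so the probability that all samples miss $C$ is at most
\[
\left(1 - \tfrac{k}{CD}\right)^{100Cr\log n} \;\le\; \exp\!\left(-\Omega\!\left(\tfrac{k r \log n}{D}\right)\right),
\]
which is at most $1/n^{c}$ for a large constant $c$ by the choice of $k$. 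A union bound over the at most $n^2$ pairs $(v,C)$ gives the claim.

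Conditioning on this event together with Claim \ref{cla: 3-spanner approx regular clustered} and a Chernoff bound on the cluster-center selection (which produces $\tO(n/D)$ clusters with high probability), every class-(b) edge $(s,t)$ satisfies the property that $s$ has at most $\tO(D/r)$ edges into $t$'s cluster. Since $v$ is adjacent to at most $\tO(n/D)$ clusters total (a vertex of degree at most $CD$ cannot see more than the $\tO(n/D)$ clusters in existence), the number of class-(b) edges incident to a single vertex $s$ is at most
\[
\tO(n/D) \cdot \tO(D/r) \;=\; \tO(n/r).
\]
Summing over the $n$ vertices and dividing by two for double-counting yields $\tO(n^2/r)$ class-(b) edges. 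Combining with class (a) gives the desired bound.

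The main obstacle is the concentration step in class (b): one has to pick the sample count $100Cr\log n$ so that the coverage threshold $k = \Theta(D\log n/r)$ is small enough to make the subsequent $\tO(n/r)$ bound per vertex tight, while still being large enough for the Chernoff bound to survive a union bound over all $(v,C)$ pairs. Modulo this routine bookkeeping, the argument is straightforward and uses only the clustering guarantee from Claim \ref{cla: 3-spanner approx regular clustered} together with the uniform sampling performed in the third preprocessing step.
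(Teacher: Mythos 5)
Your argument for the dominant term is essentially the paper's: the same coverage lemma (if a vertex has $\gtrsim D\log n/r$ edges into a cluster, one sample hits it w.h.p., union-bounded over vertex--cluster pairs), combined with the $\tO(n/D)$ bound on the number of clusters, giving $\tO(n^2/D)\cdot\tO(D/r)=\tO(n^2/r)$ unrecorded \yes{} edges. The paper's threshold is $D/r$ rather than $D\log n/r$, which is immaterial up to the hidden polylogs, and the paper additionally treats two small corners you elide: the regime $r\ge D/(100\log n)$, where every incident edge is sampled and the case contributes nothing, and the fact that sampling without replacement makes the hit indicators negatively correlated (your with-replacement computation is still a valid upper bound on the miss probability, so this is harmless).

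The one place your accounting is genuinely weaker is class (a). You bound the recorded sampled edges by the number of samples, $\tO(nr)$, and then invoke $r\le\sqrt n$ to fit inside $\tO(n^2/r)$; but the claim is stated for any $1\le r\le CD$ with $D$ possibly as large as $\Theta(n)$, so for $r\gg\sqrt n$ your bound does not give the claim. The fix is the observation you already make in passing: the sampling step records at most one edge per vertex--cluster pair, so w.h.p.\ the recorded sampled edges number $\tO(n\cdot n/D)=\tO(n^2/D)\le\tO(n^2/r)$ since $r\le CD$ --- this is exactly how the paper handles it, with no restriction on $r$.
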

\begin{proof}
With high probability by the Chernoff bound, there are $\Oish(n/D)$ clusters. We assume this event occurs in the following analysis. We consider the cases in which the the adjacency oracle says \yes{}:
\begin{itemize}
\item The oracle says \yes{} to all recorded edges.
The number of recorded edges between cluster centers and vertices is $O(n)$ since each vertex belongs to at most one cluster.
The number of sampled edges that we record is at most $\tO(n^2 / D)$, since there are $n$ vertices, $\tO(n/D)$ clusters, and we record at most one edge from each vertex to each cluster.
Note that, since we sample $\Oish(r)$ edges from each vertex, we may assume that $r$ is at most the max degree $O(D)$, so this bound is $\Oish(n^2 / r)$.

\item The oracle also says \yes{} to some non-recorded edges $(u, v)$, so long as $u, v$ lie in different clusters and we have not recorded an edge from $u$ to the cluster of $v$, or vice versa.
In order to bound these edges, we make two observations:
\begin{itemize}
\item The number of vertex-cluster pairs is at most $\tO(n^2 / D)$ with high probability.

\item Fix a vertex-cluster pair, $(v, X)$. If $r \geq \frac{D}{100 \log n}$, we sample all edges in $G$ the edge-sampling step, so if there are any edges between $v$ and $X$ then one will be recorded in pre-processing. Otherwise, assume $r < \frac{D}{100 \log n}$ and
suppose that $v$ has $x$ edges to $X$ in $G$.
Since $\deg(v) \le C \cdot D$, each edge incident to $v$ is sampled with probability at least $\frac{100C r \log n}{C D} = \frac{100r \log n}{D}$.
The expected number of sampled edges from $v$ to $X$ is then at least $\frac{100r x \log n}{D}$.
So, if $x \geq D / r$, then, by the Chernoff bound, one of these $x$ edges will be sampled with probability at least
\[
1 - \exp\left(\frac{-100\log n}{4}\right) = 1 - 1/n^{25}.
\]
Note that the indicator random variables for whether each edge incident to $v$ is selected  are actually not independent. Nonetheless, they are negatively correlated and the Chernoff bound still applies to negatively correlated random variables. Union-bounding over each vertex-cluster pair, at least one of the edges will be sampled for all vertex-cluster pairs $(v, X)$ with $x \ge D / r$ with probability at least $1 - 1/n^{23}$. So, with high probability, the maximum number of edges added in this case is
$$\tO\left(\frac{n^2}{D} \cdot \frac{D}{r}\right) = \tO(n^{2}/r).$$
\end{itemize}
\end{itemize}
Applying the union bound then yields the result.
\end{proof}

\begin{theorem} \label{thm: 3-spanner reg theorem}
With high probability the above construction yields an adjacency oracle for a $3$-spanner of $G$ with $\tO(n^2 / r)$ edges, in $\tO(nr)$ preprocessing time. 
\end{theorem}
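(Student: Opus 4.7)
The plan is to assemble the theorem from the three claims and the runtime analysis already established in this subsection, handling the two regimes (small $D$ and large $D$) separately.

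First, I would dispense with the easy case $D \le n^{0.5}$. Here the adjacency oracle simply answers \yes{} to every query, so the underlying subgraph is $G$ itself, trivially a $3$-spanner. Since $G$ is approximately $D$-regular with $D \le n^{1/2}$, we have $|E(G)| = O(nD) = O(n^{3/2}) = \tO(n^2/r)$ for any $r \le n^{1/2}$, and the preprocessing is vacuous (constant time, well within $\tO(nr)$). So the interesting regime is $D > n^{1/2}$, where the main construction is invoked.

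Next I would collect the runtime bound. The runtime analysis was already sketched immediately after the preprocessing description: the cluster-center selection gives $\tO(n/D)$ centers with high probability by a Chernoff bound; the second step scans each edge incident to a center, contributing $\tO(n)$ time since degrees are $O(D)$; and the third step samples $\tO(r)$ edges per vertex and performs $\tO(1)$-time dictionary lookups, giving $\tO(nr)$ total. So with high probability preprocessing completes in $\tO(nr)$ time, and each query is answered in $\tO(1)$ as required for an adjacency oracle.

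Then I would invoke the three claims in order. By Claim~\ref{cla: 3-spanner approx regular clustered}, with high probability every vertex is assigned to a cluster, and I would explicitly condition on this event for what follows. By Claim~\ref{cla: approx reg 3-spanner prop}, conditional on every vertex being clustered, the subgraph $H$ induced by the oracle's \yes{}-answers is a $3$-spanner of $G$. By Claim~\ref{cla: 3-spanner approx reg edges}, with high probability $|E(H)| = \tO(n^2/r)$. A union bound over these three high-probability events (and the Chernoff bound on the number of cluster centers used in the runtime argument) shows that with high probability all four properties --- correct clustering, $3$-spanner property, edge count, and preprocessing runtime --- hold simultaneously, yielding the theorem.

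There is essentially no hard step here: this theorem is a bookkeeping statement that packages the preceding claims. The only minor point of care is verifying that the implicit assumption in Claim~\ref{cla: approx reg 3-spanner prop} (that all vertices are clustered, so each endpoint of a \no{}-answered edge has a cluster center) is exactly the event given by Claim~\ref{cla: 3-spanner approx regular clustered}, so the union bound is clean. I would also remark briefly that the assumption $r \le O(D)$ used in Claim~\ref{cla: 3-spanner approx reg edges} is without loss of generality since sampling more than all incident edges is wasteful, and that choosing $r = \sqrt{n}$ recovers the optimal-size, $\tO(n^{3/2})$-time $3$-spanner oracle referenced by the top-level theorem statement in the introduction.
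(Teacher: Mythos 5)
Your proposal is correct and matches the paper's intended argument: Theorem~\ref{thm: 3-spanner reg theorem} is stated there without a separate proof precisely because it packages the preceding runtime discussion with Claims~\ref{cla: 3-spanner approx regular clustered}, \ref{cla: approx reg 3-spanner prop}, and \ref{cla: 3-spanner approx reg edges} via a union bound, exactly as you do. Your additional handling of the $D \le n^{0.5}$ case and the remark that $r = O(D)$ is without loss of generality are consistent with the paper's setup.
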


We can set the parameter $r$ to be any value between $1$ and $CD$. In particular, we have the following.
\begin{corollary}
By setting $r = n^{0.5}$, with high probability the above construction yields an adjacency oracle for a $3$-spanner of $G$  with $\tO(n^{1.5})$ edges, in $\tO(n^{1.5})$ preprocessing time. 
\end{corollary}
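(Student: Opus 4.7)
The plan is to simply substitute $r = n^{0.5}$ into Theorem~\ref{thm: 3-spanner reg theorem} and verify that this is a legal choice of parameter in both cases of the preprocessing algorithm.

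First I would dispose of the trivial case $D \le n^{0.5}$. In that case, the preprocessing algorithm specifies that we use the input graph itself as the adjacency oracle (i.e., always answer \yes{}). This takes $O(1)$ time, and the resulting subgraph has at most $\tfrac{1}{2} n \cdot CD = O(n^{1.5})$ edges because every vertex has degree $O(D) = O(n^{0.5})$. Both bounds in the corollary are then satisfied.

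For the main case $D > n^{0.5}$, I would invoke Theorem~\ref{thm: 3-spanner reg theorem} with $r = n^{0.5}$. I first need to check that this value of $r$ lies in the allowed range $1 \le r \le CD$: the lower bound is immediate, and the upper bound $n^{0.5} \le CD$ follows from $D > n^{0.5}$ together with $C \ge 1$. Applying the theorem then yields an adjacency oracle for a $3$-spanner of $G$ whose high-probability size bound is $\tO(n^2/r) = \tO(n^2/n^{0.5}) = \tO(n^{1.5})$, produced in preprocessing time $\tO(nr) = \tO(n \cdot n^{0.5}) = \tO(n^{1.5})$, exactly as claimed.

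There is no substantive obstacle here: this is purely a plug-in of the optimal balance point of the tradeoff from Theorem~\ref{thm: 3-spanner reg theorem}, where $r = n^{0.5}$ is the value that equates the edge-count bound $\tO(n^2/r)$ with the preprocessing-time bound $\tO(nr)$. The only thing one has to remember to check is that the corner case $D \le n^{0.5}$, in which the algorithm short-circuits and does not even use the parameter $r$, still satisfies both asymptotic bounds in the corollary statement, which it does for the reason given above.
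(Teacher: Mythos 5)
Your proposal is correct and matches the paper's (implicit) argument: the corollary is just the substitution $r = n^{0.5}$ into Theorem \ref{thm: 3-spanner reg theorem}, which the paper leaves as an immediate plug-in. Your extra care in checking $r \le CD$ and the trivial branch $D \le n^{0.5}$ (where the whole graph already has $O(n^{1.5})$ edges) is consistent with the paper's setup and does not change the argument.
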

Namely, for $D = \omega(n^{0.5})$, this algorithm computes an adjacency oracle of an optimal size $3$-spanner of $G$ in sublinear time.

\subsection{Extending to General Graphs} \label{sec: 3-spanner gen graphs}
To extend the previous algorithm to general graphs $G(V,E)$, we will create $\log n$ copies of the previous query data structure with each corresponding to a ``bucket'' of possible node degrees. On the query of a edge, we will pass the query to all of the data structures and return \yes{} if any data structure returns \yes{}, or \no{} if all data structures return \no{}. The idea is that the data structure that correctly guesses the minimum degree of the endpoints of the edge will ensure the $3$-spanner stretch property for this edge. For vertices of especially low degree, e.g., less than $n^{0.5}$, we will just always say \yes{} to the edge.
Note that these buckets are \textit{not} guessing the degree of \emph{both} endpoints of each edge $(u, v)$ being queried, but rather, the \emph{minimum} degree between $u, v$.
In this sense, we are not directly reducing to the previous analysis. Nonetheless, the high-level idea for each bucket data structure mirrors that of the approximately regular case.

Let us overview the technical reasons why we need to take this bucketing approach.
The issue is that there is a tension between sampling enough cluster centers to ensure that each node is clustered, and the number of edges added at query time (which were not recorded in preprocessing).
If we attempt the previous algorithm: for a graph with minimum degree $\delta$, we need to sample $\Oish(n/\delta)$ many cluster centers, and so there would be $\Oish(n^2/\delta)$ vertex-cluster pairs.
On the other hand, we can afford to sample $\Oish(n^{0.5})$ edges per node; for a graph with average degree $D$, this means that each edge is included with probability roughly $D/n^{0.5}$.
Following the analysis from before, we would add $\Omega(n^{3/2}D/\delta)$ total edges, which is suboptimal for graphs with $D \gg \delta$.

But, if we instead handle edges with the minimum degree endpoint of degree $\ell$ in some data structure where we sample only $\tO(n/\ell)$ cluster centers, both endpoints will be clustered with high probability. Moreover, for a given vertex-cluster pair $(v,X)$ where $v$ is degree $\ell$ and $X$ is a cluster from the data structure with $\tO(n/\ell)$ cluster centers, there can only be $\Oish(\ell/n^{0.5})$  unrecorded edges added per pair involving $v$. Otherwise, by an analysis similar to Claim \ref{cla: 3-spanner approx reg edges} we would have sampled one with high probability. Handling the edges in this way then circumvents the previous conflict to yield the desired $\tO(n^{3/2})$ edges.


Formally, we partition the interval $[n^{0.5}, n]$ into $O(\log n)$ buckets of the form $[2^kn^{0.5}, 2^{k+1}n^{0.5})$ for $k \in \{0, 1 ,\ldots, (\log n)/2 -1 \}$. For each non-empty bucket with lower bound $\ell$ there is a corresponding data structure. We describe the preprocessing and query algorithms for each individual data structure and describe how to combine these parametrized data structures to create the $3$-spanner adjacency oracle.

\paragraph{Preprocessing Algorithm.}
For each non-empty bucket with lower bound $\ell$, we construct its corresponding data structure as follows. The preprocessing is almost exactly as in the case of approximately $D$-regular graphs except using $\ell$ instead of $D$ in the number of sampled cluster centers and $C = 2$. We also include a run-time optimization in the cluster assignment step that will be especially valuable when we extend these ideas to $5$-spanners. We assume that the average degree $D$ of the graph is at least $n^{0.5}$ or else we just return the graph as the adjacency oracle. We can check whether this is the case in $O(n)$ time.

\begin{mdframed}[backgroundcolor=gray!20]
\textbf{$3$-Spanner Adjacency Oracle Preprocessing}
\begin{itemize}
    \item Iterate over each $v \in V$ and, independently with probability $\frac{c \log n}{\ell}$, initialize a distinct cluster for $v$ and assign it as its cluster's \textit{cluster center} (where $c > 0$ is a sufficiently large absolute constant we will leave implicit).
    \item Now we assign nodes to clusters.
    \begin{itemize}
        \item If $\ell \geq \sqrt{D}$, loop over each neighbor of each cluster center and assign nodes to the first cluster center they are found to be adjacent to, recording the respective edge.
        \item Otherwise, if $\ell < \sqrt{D}$, for all $v \in V$, iterate over its neighbors in a random order and assign it to the cluster of the first neighboring cluster center found, recording the corresponding edge.
    \end{itemize}
    \item For all $v \in V$, sample $cr \log n$ of its incident edges. For each $(v,w)$ sampled, if $w$ is in a different cluster than $v$, record the edge $(v,w)$ and that $v$ is adjacent to $w$'s cluster. Additional edges from $v$ to $w$'s cluster are then ignored.
\end{itemize}
\end{mdframed}

We can implement each data structure using the same data structures as in the approximately regular case (array and set data structures).

 Note that each degree is between $0$ and $n$. The sum of the degrees of the sampled cluster centers is $\frac{cnD \log n}{\ell}$ in expectation. Then, since the cluster centers are sampled independently, the average degree of the cluster centers is at most $\rho D$ with probability at least
\[
1 - \exp\left( \frac{-\rho c nD \log n}{2n\ell}\right) = 1 - n^{-c \rho D/(2\ell)}
\]
by the Chernoff bound (for independent, bounded, and non-negative random variables), where $\rho = \max(5, \ell/D)$. The Chernoff bound used here arises from the standard Chernoff bound for independent random variables in $[0,1]$ via dividing each random variable by their uniform bound (in this case $n$, since each cluster center has degree at most $n$). Then, for $c$ a sufficiently large constant, with high probability the first case of the cluster assignment step takes $\tO(\frac{n}{\ell} \cdot \rho D)$ time.  When $\max(5, \ell/D) = \ell/D$, this is $\Oish(n)$. Otherwise, this is $\tO(nD/\ell)$.

The second case of the second step takes $\tO(n \ell)$ time with high probability. Namely, for nodes of degree at least $\ell$, by the Chernoff bound, with high probability it takes checking at most $\ell$ random neighbors to find a cluster center and terminate. We can then union bound over all vertices to get the result for all vertices with high probability.

Hence, with high probability, the total preprocessing time over all data structures will be 
\[
\tO(n \min(D/\delta, \sqrt{D}) + nr),
\]
where $\delta := \max(\delta(G), n^{1/2})$. 
The variable $\delta$ corresponds to the order of the lower bound of the first bucket we instantiate. Note that the second step is somewhat different than in the approximately regular case in that we do not check all of the neighbors of $v$. This is a runtime optimization for graphs without a worst-case gap between average degree and minimum degree, but does not change the fact that all sufficiently high degree vertices will be clustered.

\paragraph{Query Algorithm.}
On the query of an edge $(s,t)$, we query all $O(\log n)$ data structures and return \yes{} if any data structure returns \yes. Otherwise we return \no. On query $(s,t)$, the data structure corresponding to lower bound $\ell$ responds as follows.
\begin{mdframed}[backgroundcolor = gray!20]
\textbf{$3$-Spanner Adjacency Oracle Query}
\begin{itemize}
    \item If $\min(\deg(s), \deg(t)) < n^{0.5}$, output \yes.
    \item Else if the edge $(s,t)$ was recorded, output \yes.
    \item Else if $s$ or $t$ is unclustered or $\min(\deg(s), \deg(t)) > 2\ell$ output \no. 
    \item Else if $s$ and $t$ are in the same cluster or $s$ or $t$ have a recorded edge to the cluster of the other, output \no.
    \item Otherwise, output \yes.
\end{itemize}
\end{mdframed}


Each query can be made in $\tO(1)$ time. (If we do not assume that the oracle has access to the input graph, we can store the degrees of each node in linear time during preprocessing.) Observe that the oracle behavior is identical to that in the case of approximately $D$-regular graphs except for the degree-checking conditions.

Intuitively, for an edge with minimum degree endpoint in the bucket $[\ell, 2\ell)$, the data structure corresponding to $\ell$ will handle the $3$-spanner property for that edge. We observe that the endpoints of the edge will be clustered in that data structure:
\begin{claim} \label{cla: gen 3-spanner clustered}
With high probability, for all bucket lower bounds $\ell$, each vertex of degree at least $\ell$ is clustered in the data structure corresponding to $\ell$.
\end{claim}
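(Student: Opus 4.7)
The plan is to mirror the proof of Claim \ref{cla: 3-spanner approx regular clustered}, adapted to each bucket. Fix any bucket lower bound $\ell$ and any vertex $v$ with $\deg(v) \geq \ell$. Each vertex is selected as a cluster center independently with probability $\frac{c \log n}{\ell}$, so the number of cluster centers in the closed neighborhood $N[v]$ is a sum of $\deg(v) + 1 \geq \ell + 1$ independent Bernoulli variables with expectation at least
\[
(\ell+1) \cdot \frac{c \log n}{\ell} \;\geq\; c \log n.
\]
A standard multiplicative Chernoff bound then shows that this sum is zero with probability at most $\exp(-\Omega(c \log n)) = n^{-\Omega(c)}$. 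Choosing $c$ to be a sufficiently large absolute constant makes this inverse polynomial with any desired exponent.

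Next I would verify that whenever $N[v]$ contains at least one cluster center, $v$ is actually assigned to some cluster by the preprocessing. If $v$ itself is selected as a cluster center, it becomes its own center trivially. Otherwise, some neighbor $u$ of $v$ is a cluster center. In the case $\ell \geq \sqrt{D}$, the algorithm iterates over every neighbor of every cluster center, so $v$ is assigned (at worst to $u$'s cluster, if no earlier cluster center claims it first). In the case $\ell < \sqrt{D}$, the algorithm iterates over $v$'s neighbors in random order and assigns $v$ to the first neighboring cluster center it encounters, which is well defined since $u$ is one such center. In either case, $v$ ends up clustered.

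Finally, I would union-bound the bad event over at most $n$ vertices and the $O(\log n)$ bucket lower bounds, concluding that with high probability every vertex of degree at least $\ell$ is clustered in the data structure corresponding to $\ell$, simultaneously for all $\ell$.

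There is no real obstacle here: the argument is essentially identical to Claim \ref{cla: 3-spanner approx regular clustered}, with the only mild subtlety being the need to point out that the cluster assignment step correctly reaches $v$ in both subcases of the preprocessing. The random-order subcase is slightly delicate because its runtime analysis truncates the search at roughly $\ell$ random neighbors, but logically the procedure continues until a cluster center is found, so the existence of any cluster center in $N[v]$ suffices for correctness.
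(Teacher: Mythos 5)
Your proposal is correct and follows essentially the same route as the paper: a Chernoff bound showing the closed neighborhood of any vertex of degree at least $\ell$ contains a cluster center (expected number at least $c\log n$), followed by a union bound over vertices and bucket indices. Your extra verification that the assignment step in both subcases actually clusters such a vertex, and your observation that the random-order search is only probabilistically (not algorithmically) bounded in length, are accurate but add nothing beyond what the paper's argument already implicitly uses.
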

\begin{proof}
For $\ell \geq \sqrt{D}$, this follows from the Chernoff bound and the union bound via a proof nearly identical to the proof of Claim \ref{cla: 3-spanner approx regular clustered}.

For $\ell < \sqrt{D}$, let $v \in V$ be a vertex with $\deg(v) \geq \ell$. For each $u \in N[v]$, where $N[v]$ is the closed neighborhood of $v$, create an indicator random variable $X_u$ for $u$ being selected as a cluster center. The variables are independent, and their sum has expectation at least $c\log n$. Hence, by the Chernoff bound, the probability that the sum is at least $1$ is at least $1 - n^{-c/4}$, yielding the desired result after union bounding over all $v \in V$
\end{proof}

Now we can show that the subgraph $H \subseteq G$ induced by the adjacency oracle is indeed a $3$-spanner of the input graph $G$.

\begin{lemma} \label{lem: 3 span}
With high probability, the subgraph $H \subseteq G$ induced by the adjacency oracle is  a $3$-spanner of the input graph $G$.
\end{lemma}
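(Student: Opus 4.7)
The plan is to invoke the standard reduction used in Claim~\ref{cla: approx reg 3-spanner prop} (via \cite{ADDJS93, ABSHJLKS20}), which says that it suffices to show every edge $(s,t) \in E(G)$ on which the oracle answers \no{} admits an $s \leadsto t$ path of length at most $3$ consisting of edges on which the oracle answers \yes{}. Fix such an edge. Since every one of the $O(\log n)$ bucket data structures returned \no{}, the first branch of the query algorithm rules out $\min(\deg(s), \deg(t)) < n^{0.5}$, so $\min(\deg(s), \deg(t)) \geq n^{0.5}$ and there is a bucket lower bound $\ell_0$ with $\ell_0 \leq \min(\deg(s), \deg(t)) < 2\ell_0$.

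Now focus on the data structure $D_{\ell_0}$ corresponding to bucket $\ell_0$. By Claim~\ref{cla: gen 3-spanner clustered}, with high probability both $s$ and $t$ are clustered in $D_{\ell_0}$, since both have degree at least $\ell_0$. Moreover the guard $\min(\deg(s), \deg(t)) > 2\ell_0$ fails by choice of $\ell_0$, so the third branch of $D_{\ell_0}$'s query cannot fire. Since $D_{\ell_0}$ returned \no{} and $(s,t)$ itself is not among its recorded edges, the fourth branch must be responsible, so one of the following holds: (i) $s$ and $t$ share a cluster in $D_{\ell_0}$; (ii) $s$ has a recorded edge into $t$'s cluster; (iii) the symmetric statement with $s,t$ swapped.

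In case (i), let $c$ be the common center; then the cluster-assignment step recorded $(c,s)$ and $(c,t)$ (necessarily $s, t \neq c$, else $(s,t)$ would itself be recorded), giving an $s \to c \to t$ path. In case (ii), let $(s,x)$ be the recorded edge into $t$'s cluster with center $c$; then $(x,c)$ and $(c,t)$ are also recorded by the cluster-assignment step, giving $s \to x \to c \to t$ of length at most $3$ (shorter if $x = c$ or $c = t$); case (iii) is symmetric. Every edge recorded by $D_{\ell_0}$ causes $D_{\ell_0}$ to answer \yes{} to it via its second query branch, so each edge of these paths lies in $H$, and taking a union bound over the $O(\log n)$ buckets preserves the high-probability clustering guarantee of Claim~\ref{cla: gen 3-spanner clustered}. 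There is no real technical obstacle; the argument is essentially a mechanical case analysis, with the key observation being that the bucket $\ell_0$ matching the minimum endpoint degree is precisely the one in which neither degree guard in $D_{\ell_0}$ can short-circuit the argument, forcing the cluster-based \no{} condition and hence the existence of a short path through recorded edges.
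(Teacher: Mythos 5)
Your proposal is correct and follows essentially the same route as the paper's proof: reduce to exhibiting a path of length at most $3$ of \yes{}-edges for each \no{}-edge, identify the bucket $\ell_0$ containing $\min(\deg(s),\deg(t))$, invoke Claim~\ref{cla: gen 3-spanner clustered} to get both endpoints clustered in that data structure, and then run the same two-case analysis (shared cluster via the center, or a recorded edge into the other endpoint's cluster) including the degenerate subcases. No meaningful differences from the paper's argument.
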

\begin{proof}
Let $(s,t) \in E(G)$ with $\min(\deg(s), \deg(t)) \in [\ell, 2\ell)$. If $\ell < n^{0.5}$, $(s,t)\in E(H)$ since every data structure outputs \yes. Otherwise, consider the data structure corresponding to lower bound $\ell$. Both $s$ and $t$ are clustered with high probability by Claim \ref{cla: gen 3-spanner clustered}. We consider the two applicable \no{} cases and verify that the distance between $s$ and $t$ in the induced graph is at most $3$.

Again, it suffices to verify that for each edge $(s,t) \in E(G)$ for which the oracle responds \no{} to the query $(s,t)$, there exists an $s \leadsto t$ path of length $\leq 3$ of edges to which the oracle responds \yes.

\begin{itemize}
    \item Suppose that $s$ and $t$ are in the same cluster in the data structure corresponding to $\ell$ (say centered at $x$). We cannot have $s = x$ or $t = x$ because then $(s,t)$ was recorded and the data structure outputs \yes. The oracle responds \yes{} to each edge in the path $s \to x \to t$ since they are the edges recorded from $s$ and $t$ to their cluster center.
    \item Now, suppose that $s$ and $t$ belong to distinct clusters and an edge was recorded from $s$ to the cluster of $t$ (centered at $x$), with the edge being $(s,u)$ for $u$ in the cluster centered at $x$. Then, the data structure will respond \yes{} to each edge in the path $s \to u \to x \to t$. In particular, the data structure responds \yes{} to edges $(u,x)$ and $(x,t)$ since they are the edges recorded from $u$ and $t$ to their cluster centers. The data structure also responds \yes{} to $(s,u)$ since that edge was recorded in the edge sampling step. (Note that it is possible that $u = x$ or $x = t$. In either of those cases, the path is of length $2$ and is still composed of edges contained in $H$.)
\end{itemize}

\end{proof}
Finally, we bound the number of edges in the graph induced by the oracle.
\begin{lemma}
With high probability, the number of edges in the graph induced by the oracle is $\tO(n^2/\delta + n^2/r)$.
\end{lemma}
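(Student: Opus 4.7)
The plan is to decompose the edges of $H$ by which clause of the query algorithm produces \yes{}, and bound each class separately. Write $E(H) \subseteq E_{\mathrm{low}} \cup E_{\mathrm{rec}} \cup E_{\mathrm{unrec}}$, where $E_{\mathrm{low}}$ are edges with $\min(\deg(s), \deg(t)) < n^{1/2}$ (answered \yes{} by the first case of every data structure), $E_{\mathrm{rec}}$ are edges recorded in the preprocessing of at least one data structure, and $E_{\mathrm{unrec}}$ are the remaining edges answered \yes{} only via the final case of some bucket's query. For the trivial piece, $|E_{\mathrm{low}}| \le \sum_{v : \deg(v) < n^{1/2}} \deg(v) \le n^{3/2}$, which is at most $n^2/\delta$ since $\delta \ge n^{1/2}$.

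For $E_{\mathrm{rec}}$, I would proceed bucket-by-bucket. In the bucket with lower bound $\ell$, the cluster-assignment step records at most one edge per clustered vertex, contributing $O(n)$ edges. A Chernoff bound on the $n$ independent cluster-center indicators shows that with high probability the bucket has only $\tO(n/\ell)$ cluster centers. Since the edge-sampling step records at most one edge per vertex-cluster pair, it contributes at most $\tO(n \cdot n/\ell) = \tO(n^2/\ell)$ edges. Summing the resulting geometric series over the non-empty buckets (whose lower bounds are $\ge \delta$ by definition) yields $|E_{\mathrm{rec}}| = \tO(n^2/\delta)$.

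The main step is bounding $|E_{\mathrm{unrec}}|$, which adapts the Chernoff analysis from Claim \ref{cla: 3-spanner approx reg edges}. Fix a bucket $\ell$, a vertex $v$ with $\deg(v) \ge \ell$, and a cluster $X$ in the bucket-$\ell$ data structure with $v \notin X$. Each edge incident to $v$ is sampled with probability $cr \log n / \deg(v)$, so if $v$ has more than $\deg(v)/r$ edges to $X$ in $G$, then with probability $1 - 1/n^{\Omega(c)}$ at least one is sampled and recorded. Union-bounding over the polynomially many vertex-cluster-bucket triples preserves this guarantee. Assuming this event, I would charge each unrecorded \yes{} edge in the bucket-$\ell$ data structure to its minimum-degree endpoint $v$, which must satisfy $\deg(v) \le 2\ell$ by the explicit $\min \deg \le 2\ell$ check in the query. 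Then $v$ contributes at most $\tO(n/\ell) \cdot \deg(v)/r \le \tO(n/r)$ such edges. Summing over the $\le n$ clustered vertices per bucket and then over the $O(\log n)$ buckets gives $|E_{\mathrm{unrec}}| = \tO(n^2/r)$.

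The main obstacle is the negative (rather than positive) correlation of the sampled-edge indicators at any single vertex, since we sample without replacement from $v$'s incidence list; this is resolved, as in Claim \ref{cla: 3-spanner approx reg edges}, by invoking the upper-tail Chernoff bound for negatively correlated random variables. A secondary subtlety is that a single edge in $E_{\mathrm{unrec}}$ may be answered \yes{} by multiple buckets' data structures, but since there are only $O(\log n)$ buckets this contributes at most a logarithmic factor, absorbed by the $\tO$ notation. Combining the three bounds yields $|E(H)| \le \tO(n^2/\delta + n^2/r)$, as claimed.
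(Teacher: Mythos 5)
Your proposal is correct and takes essentially the same approach as the paper's proof: split the \yes{} edges by which query case fires, use the with-high-probability bound of $\Oish(n/\ell)$ clusters per bucket to handle recorded edges, and apply the Chernoff bound for negatively correlated samples to show each minimum-degree endpoint (of degree $\le 2\ell$) has at most $O(\ell/r)$ unrecorded edges per cluster, then union bound over the $O(\log n)$ buckets. One cosmetic slip: the inequality $n^{3/2} \le n^2/\delta$ fails when $\delta > n^{1/2}$, but in that regime $\delta(G) > n^{1/2}$ and $E_{\mathrm{low}}$ is empty, so your $\tO(n^2/\delta)$ bound for that class still holds (this case split is exactly how the paper phrases it).
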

\begin{proof}
We first bound the number of edges added in each \yes{} case of the data structure corresponding to bucket with lower bound $\ell$. We will use that, with high probability, the number of clusters is $\Oish(n/\ell)$ and assume that this event holds in the below.
\begin{itemize}
    \item If $\delta$ is the minimum degree of $G$, then the first \yes{} case does not add any edges. Otherwise, it adds $O(n^{1.5})$ edges. Hence, this case adds at most $O(n^2/\delta)$ edges. 
    \item Each vertex is adjacent to at most one cluster center and has at most one edge to each cluster. Hence, the second \yes{} case adds at most $\tO(n^2/\ell) = \tO(n^2/\delta)$ edges, using the assumption of $\Oish(n/\ell)$ clusters.
    \item In the final \yes{} case, one of the endpoints is of degree at most $2 \ell$.  We can bound the number of edges added in this case by bounding the number of unsampled edges added from vertices of degree at most $2 \ell$. The number of pairs of vertices of degree at most $2 \ell$ and clusters is $\tO(n^2/\ell)$. If $r \geq \frac{2\ell}{c \log n}$, we will sample all edges for each vertex of degree at most $2 \ell$ and no edges will be added from this case.  Otherwise, assume $r < \frac{2\ell}{c \log n}$. For such a given vertex-cluster pair $(s, X)$, if $s$ has more than $\ell/r$ edges to vertices in $X$, then the expected number of sampled edges from $s$ to vertices in $X$ is at least $\frac{\ell}{r} \cdot \frac{r}{2\ell} \cdot c \log n = (c \log n)/2$. Hence, as in Claim \ref{cla: 3-spanner approx reg edges}, by the Chernoff bound (for negatively correlated random variables) and the union bound, with high probability, for each vertex cluster pair $(s, X)$ with $s$ without an edge recorded to $X$ and $s$ of degree at most $2 \ell$, $s$ has at most $\ell/r$ edges added to $X$. Hence, the total number of edges added in this case is $\tO(n^2/r)$ with high probability.

\end{itemize}
Union bounding over all $O(\log n)$ data structures, for large enough $c$, these bounds hold for all data structures with high probability. Then, the number of total edges in the underlying graph of the oracle is at most a $\log n$ factor greater than $\tO(n^2/\delta + n^2/r)$ which is still $\tO(n^2/\delta + n^2/r)$. 
\end{proof}

Union bounding all high probability bounds and choosing an appropriate $c$ (say $200$), this yields the following.
\begin{theorem}
With high probability the subgraph $H \subseteq G$ induced by the adjacency oracle is a $3$-spanner of $G$ with $\tO(n^2/\delta + n^2/r)$ edges. The oracle can be constructed in  $\tO(nr + n\min(D/\delta, \sqrt{D}))$ time, for parameter $r$ and $\delta = \max(n^{1/2}, \delta(G))$, where $\delta(G)$ is the minimum degree of a vertex in $G$.
\end{theorem}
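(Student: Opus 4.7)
The plan is to observe that the theorem is essentially a packaging of the three high-probability guarantees already established for the bucketed construction of Section \ref{sec: 3-spanner gen graphs}. Specifically, the correctness of the $3$-spanner property was established in Lemma \ref{lem: 3 span}, the bound of $\tO(n^2/\delta + n^2/r)$ on the number of edges in $H$ was established in the edge-count lemma immediately preceding, and the preprocessing runtime of $\tO(nr + n\min(D/\delta, \sqrt{D}))$ was tracked piece-by-piece through the description of the preprocessing algorithm. So the proof is mostly a matter of collecting these facts and applying a union bound.

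First I would handle the trivial degenerate case: if $D < n^{1/2}$, the algorithm simply returns $G$ itself, which is a valid $3$-spanner on $O(nD) = O(n^{3/2})$ edges, matching the stated bounds. In the remaining case $D \ge n^{1/2}$, the $O(\log n)$ bucketed data structures are constructed, each corresponding to a dyadic range $[\ell, 2\ell)$ with $\ell \in [n^{1/2}, n]$.

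Next I would restate what we need from the preceding lemmas. The correctness proof (Lemma \ref{lem: 3 span}) relies on two high-probability events: that for each bucket lower bound $\ell$, every node of degree at least $\ell$ is clustered in the corresponding data structure (Claim \ref{cla: gen 3-spanner clustered}), and that every edge's minimum-degree endpoint falls into some bucket so that the relevant data structure handles it. The edge-count lemma gives an $\tO(n^2/\delta + n^2/r)$ bound per bucket, again with high probability, via the same negatively correlated Chernoff argument used in Claim \ref{cla: 3-spanner approx reg edges} to bound unsampled cross-cluster edges. The preprocessing runtime per bucket was shown to be $\tO(nr + n\min(D/\ell, \sqrt{D}))$, which summed over the $O(\log n)$ buckets and dominated by the smallest bucket $\ell = \delta$ yields the claimed bound.

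Finally I would take a union bound over the $O(\log n)$ buckets and the $O(n)$ per-vertex events inside each, absorbing all polylog losses into the $\tO(\cdot)$ notation by choosing the implicit constant $c$ in the preprocessing (say $c = 200$) large enough that each individual event fails with probability at most $1/n^{\mathrm{poly}(1)}$. The only potentially tricky step is making sure all three conclusions hold \emph{simultaneously} with high probability, but since all three are established via Chernoff bounds on the same independent randomness (the cluster-center indicators and the edge-sampling indicators), a single union bound suffices. I do not anticipate any substantive obstacle here, as all the real work has already been done in the preceding lemmas; the theorem is a bookkeeping statement that exposes the size-time tradeoff parameterized by $r$.
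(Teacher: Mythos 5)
Your proposal is correct and matches the paper's treatment: the theorem is proved there exactly as you describe, by assembling Claim \ref{cla: gen 3-spanner clustered}, Lemma \ref{lem: 3 span}, the preceding edge-count lemma, and the runtime accounting in the preprocessing description, then taking a union bound over all high-probability events with a sufficiently large constant $c$ (the paper also suggests $c = 200$). Your handling of the degenerate case $D < n^{1/2}$ (returning $G$ itself, with $O(nD) = O(n^{3/2}) = O(n^2/\delta)$ edges since then $\delta = n^{1/2}$) is consistent with the paper's assumption stated in the preprocessing algorithm.
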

We have the following corollary.
\begin{corollary}
By setting $r = n^{0.5}$, with high probability the above construction yields an adjacency oracle for a $3$-spanner of $G$ with $\tO(n^{1.5})$ edges, in $\tO(n^{1.5})$ preprocessing time.
\end{corollary}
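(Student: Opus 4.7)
The plan is to derive the corollary as a direct substitution into the preceding theorem, simply tracking the dominant terms once we fix $r = n^{1/2}$. Since the theorem is already proved, no new clustering or sampling arguments are needed; the only work is bookkeeping on the edge count and the preprocessing time bounds.

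First I would handle the edge count. The theorem gives a bound of $\tO(n^2/\delta + n^2/r)$ edges, where $\delta = \max(n^{1/2}, \delta(G)) \geq n^{1/2}$. Plugging in $r = n^{1/2}$ and using $\delta \geq n^{1/2}$, both terms are at most $\tO(n^{3/2})$, so the total is $\tO(n^{3/2})$ as required. This step is entirely mechanical.

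Next I would handle the preprocessing time, which the theorem bounds by $\tO(nr + n \min(D/\delta, \sqrt{D}))$. With $r = n^{1/2}$ the first term becomes $\tO(n^{3/2})$. For the second term, I would observe that $\min(D/\delta, \sqrt{D}) \leq \sqrt{D} \leq \sqrt{n}$ since every vertex has degree at most $n$, so $n \min(D/\delta, \sqrt{D}) \leq n^{3/2}$. Summing gives $\tO(n^{3/2})$ preprocessing time.

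There is no real obstacle here — the corollary is essentially a worked example of the theorem at a particular parameter setting. The only thing worth flagging is the fact that $r = n^{1/2}$ balances the two terms of the edge bound (since $\delta$ is forced to be at least $n^{1/2}$, it is pointless to take $r$ any larger in this regime), which is why this choice yields the optimal $\tO(n^{3/2})$ size, matching the conditional lower bound from the girth conjecture for $3$-spanners. The entire argument fits in a one-line invocation of the theorem together with the inequalities above.
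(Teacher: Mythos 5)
Your proposal is correct and matches the paper's (implicit) argument: the corollary is stated without a separate proof precisely because it follows by substituting $r = n^{1/2}$ into the theorem, using $\delta \ge n^{1/2}$ for the edge bound and $\min(D/\delta,\sqrt{D}) \le \sqrt{D} \le \sqrt{n}$ for the preprocessing time, exactly as you do. Your closing remark about why $r = n^{1/2}$ balances the terms is a fine observation but not needed.
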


\section{Adjacency Oracles for $5$-spanners}
Many facets of the construction for $3$-spanners can also apply to $5$-spanners. We apply analogous methods in this new setting. 

\subsection{Approximately Regular Graphs}
As before, we begin by assuming that all vertices in the input graph $G(V,E)$ have degree within a constant factor $C$ of $D$. Since this algorithm is almost identical to the approximately regular case for $3$-spanners, we highlight the dissimilarities.

\paragraph{Preprocessing Algorithm.}
In the case of $5$-spanners, the optimal size spanning subgraphs have $\tO(n^{4/3})$ edges. So, in this case we assume $D \geq n^{1/3}$ (otherwise outputting the graph itself as the $5$-spanner). Otherwise, we construct the data structure for our adjacency oracle via the following preprocessing steps:
\begin{mdframed}[backgroundcolor = gray!20]
\textbf{$5$-Spanner Preprocessing Algorithm (Approximately Regular Setting)}
\begin{itemize}
    \item Iterate over each $v \in V$ and, independently with probability $\frac{100 C \log n}{D}$, initialize a distinct cluster for $v$ and assign it as its cluster's cluster center.
    
    \item For all cluster centers $c$ in an arbitrary order, loop over the edges $(c,v)$ incident to $c$ and check whether $v$ is not assigned to a cluster. If $v$ is not assigned to a cluster, set its cluster membership to be the cluster centered at $c$ and record the edge $(c,v)$.
    
    \item For each $v \in V$, sample $100C r \log n$ incident edges uniformly ($r$ is an algorithm parameter as before). For each edge $(v,w)$,\textbf{ if $w$ is in a different cluster than $v$ and there are no edges yet recorded between the clusters of $v$ and $w$, record that the clusters are adjacent and record the edge $(v,w)$. }
\end{itemize}    
\end{mdframed}

The only step that differs between the $3$-spanner and $5$-spanner algorithms for approximately regular graphs is the final part of the third step. Since we are permitted to have higher stretch, we only need to maintain cluster-cluster adjacencies rather than vertex-cluster adjacencies. 

The underlying data structure created here is only slightly different to the $3$-spanner case. Again we can use arrays and set data structures. It suffices to use:
\begin{itemize}
    \item An array indexed by the vertices, storing their cluster assignments.
    \item A set data structure storing pairs of clusters with edges recorded between them.
    \item A set data structure storing recorded edges.
\end{itemize}

By the same analysis as for $3$-spanners, with high probability $\Oish(n/D)$ cluster centers are selected in the first step, preprocessing takes $\Oish(nr)$ time in total, and every vertex $v \in V$ will be clustered with high probability.

\paragraph{Query Algorithm.}

The query algorithm is also very similar. On query $(s,t)$, the oracle responds as follows. 
\begin{mdframed}[backgroundcolor = gray!20]
\textbf{$5$-Spanner Adjacency Oracle Query (Approximately Regular Setting)}
\begin{itemize}
    \item If $(s,t)$ was recorded, output \yes.
    \item Else if $s$ and $t$ are in the same cluster or \textbf{different clusters with an edge recorded between them}, output \no.
    \item Otherwise, output \yes.
\end{itemize}    
\end{mdframed}

Note that the difference in oracle behavior compared to the $3$-spanner case is that we say \no{} if $s$ and $t$ are in adjacent clusters even if neither $s$ nor $t$ are themselves adjacent to the cluster of the other.

The proof that the subgraph induced by the adjacency oracle is a $5$-spanner of $G$ differs exactly by the difference in our oracle behavior. We make the analogous, standard reduction as in Claim \ref{cla: approx reg 3-spanner prop}.

\begin{lemma} \label{lem: 5-spanner approx reg spanning}
The subgraph $H \subseteq G$ induced by the adjacency oracle is a $5$-spanner of the input graph $G$.
\end{lemma}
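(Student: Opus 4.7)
The plan is to mirror the proof of Claim \ref{cla: approx reg 3-spanner prop}, invoking the standard edge-by-edge reduction \cite{ADDJS93, ABSHJLKS20}: it suffices to show that for every edge $(s,t) \in E(G)$ on which the oracle answers \no{}, there exists an $s \leadsto t$ path of length at most $5$ consisting entirely of edges to which the oracle answers \yes{}. By Claim \ref{cla: 3-spanner approx regular clustered} (whose proof applies verbatim here, since the first two preprocessing steps are unchanged), we may assume every vertex is clustered.

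Inspecting the query algorithm, there are exactly two situations in which the oracle responds \no{} to a query $(s,t)$ that was not among the recorded edges. First, $s$ and $t$ may lie in the same cluster. Let $c$ denote the center of this common cluster; since $(s,t)$ was not itself recorded, neither $s$ nor $t$ equals $c$, so the cluster-assignment edges $(s,c)$ and $(c,t)$ were both recorded in the second preprocessing step and the oracle answers \yes{} on each. This yields the $2$-path $s \to c \to t$. Second, $s$ and $t$ may lie in distinct clusters (centered at $c_s$ and $c_t$ respectively) with some edge $(u,v)$ recorded between those two clusters, where $u$ lies in the cluster of $s$ and $v$ lies in the cluster of $t$. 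The recorded edges then produce the candidate path
\[
s \to c_s \to u \to v \to c_t \to t,
\]
each of whose five edges is either a cluster-assignment edge (recorded in step two) or the cross-cluster witness edge $(u,v)$ (recorded in step three). This is an $s \leadsto t$ path of length at most $5$ in $H$.

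The only genuinely delicate point is the boundary behavior when some of $s, u, c_s$ (or symmetrically $t, v, c_t$) coincide: for instance, if $s = c_s$ the first edge is absent, if $u = c_s$ the second edge is absent, and similarly on the $t$-side. In each such collapse the path simply shortens, never grows, so its length remains at most $5$. The same check handles the case where $(s,t)$ itself was not recorded but one of $s,t$ is its own cluster center; this is captured by the first branch above. The main (minor) obstacle is therefore just a careful enumeration of these degenerate coincidences to make sure every edge in the claimed path is genuinely recorded; once that is done the $5$-spanner stretch bound follows immediately from the standard reduction.
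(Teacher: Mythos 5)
Your proof is correct and follows essentially the same route as the paper's: the same two \no{} cases, the same witness paths $s \to c \to t$ and $s \to c_s \to u \to v \to c_t \to t$ built from recorded cluster-assignment and cross-cluster edges, with degenerate coincidences only shortening the path. The only cosmetic difference is that you invoke the high-probability clustering claim, which is not actually needed for the stretch argument (a \no{} answer already forces both endpoints to be clustered), which is why the paper's lemma carries no probabilistic caveat.
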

\begin{proof}
In the first \no{} case, when $s$ and $t$ belong to the same cluster centered at $x$, there is a path of length $2$ between $s$ and $t$ of edges for which the oracle says \yes{} (from $s$ to the shared cluster center to $t$).

 In the second case, suppose $s$ and $t$ are in clusters centered at $c_1$ and $c_2$, respectively, and the recorded edge between their clusters is $(x,y)$. Then, the path $s \to c_1 \to x \to  y \to  c_2, \to t$ is composed of edges on which the oracle says \yes. Some edges in the path may be degenerate, but, in any case, the path exists and is of length at most $5$.
 
 Hence, $H$ is a $5$-spanner of $G$.  
\end{proof}

It remains to bound the number of edges in this underlying graph. This step differs from the analysis of the $3$-spanner algorithm in the handling of the final \yes{} case of the query algorithm.
\begin{lemma}
With high probability, the subgraph $H \subseteq G$ induced by the adjacency oracle  has $\tO(n + \frac{n^2}{Dr})$ edges. 
\end{lemma}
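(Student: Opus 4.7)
The plan is to partition the edges of $H$ according to which clause of the query algorithm causes the oracle to answer \yes{}, and bound each class separately. There are three sources of \yes{} answers: edges recorded in the cluster-assignment step (from a non-center vertex to its cluster center), edges recorded in the edge-sampling step (the first sampled edge between each pair of adjacent clusters), and non-recorded edges $(s,t)$ whose endpoints lie in distinct clusters with no recorded edge between them.

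First I would dispose of the two recorded categories, which are straightforward. Each vertex contributes at most one recorded edge in the cluster-assignment step, giving at most $n$ edges in total. For the edge-sampling step, we record at most one edge per pair of clusters; combined with the Chernoff bound that with high probability $\tO(n/D)$ cluster centers are sampled (already established in the construction), this contributes at most $\tO(n^2/D^2)$ edges.

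The main effort goes into bounding the non-recorded edges on which the oracle answers \yes{}. Fix a pair of clusters $(X,Y)$ with no recorded edge between them and suppose $G$ has $x$ edges between $X$ and $Y$. In the approximately regular setting every vertex has degree at most $CD$ and samples $100Cr \log n$ incident edges, so each edge is sampled from each of its endpoints with probability at least $\Theta(r \log n / D)$, making the expected number of sampled edges between $X$ and $Y$ at least $\Theta(xr \log n / D)$. Once $x \geq D/r$, a Chernoff bound for negatively correlated indicators (as in Claim \ref{cla: 3-spanner approx reg edges}) gives that with probability at least $1 - 1/n^{25}$ some edge between $X$ and $Y$ would have been sampled and recorded, contradicting the hypothesis on $(X,Y)$. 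Union bounding over the $\tO(n^2/D^2)$ cluster pairs shows that, with high probability, every unrecorded cluster pair has at most $O(D/r)$ edges between it, contributing $\tO(n^2/(Dr))$ edges in total.

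Summing the three contributions and using $r \leq CD$ (so that $n^2/D^2 = O(n^2/(Dr))$) gives $\tO(n + n^2/(Dr))$, as desired. The main obstacle will be the concentration step: edges sampled at a single vertex are drawn without replacement, so their indicators are merely negatively correlated rather than independent; as in the $3$-spanner analysis this is resolved by the corresponding version of the Chernoff bound, and the rest of the argument is a union bound over clusters and cluster pairs.
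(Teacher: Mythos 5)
Your proposal is correct and follows essentially the same route as the paper: the same case split over the \yes{} clauses, the $O(n)$ bound on cluster-center edges, the $\tO(n^2/D^2)$ bound on recorded intercluster edges, and the Chernoff-plus-union-bound argument (for negatively correlated samples) showing every unrecorded cluster pair has at most $O(D/r)$ edges, summed using $r \le CD$. The only cosmetic difference is that the paper explicitly separates the case $r \ge D/(100\log n)$ (where all incident edges are sampled), which your argument absorbs implicitly.
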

\begin{proof}
With high probability, by the Chernoff bound, there are $\Oish(n/D)$ clusters. Then, since each vertex has a unique center and there are $\tO(n^2/D^2)$ pairs of clusters and only one edge recorded per pair, the first \yes{} case adds at most $\tO(n + n^2/D^2)$ edges.

The final \yes{} case is more complicated.  Suppose that $(C_1, C_2)$ is a pair of clusters. As before, if $r \geq \frac{D}{100 \log n}$, we iterate over all edges in the graph in the edge sampling step, so we will record an edge for each pair of clusters with edges between them, so this case will not apply. Otherwise, each edge from a vertex in $C_1$ to a vertex in $C_2$ is sampled with probability at least $\frac{100 C r \log n}{CD} = \frac{100 r \log n }{D}$. Hence, if there are more than $D/r$ edges between the two clusters, then, by the Chernoff bound for negatively correlated random variables, with  probability at least $1 - n^{-25}$, one of the edges will be sampled. Union bounding over all pairs of clusters, this holds for all pairs of clusters with probability at least $1 - n^{-23}$. Then, since in this case the endpoints of the queried edges are in non-adjacent, distinct clusters, the number of total edges added in this case is at most $\tO(n/D \cdot n/D \cdot D/r) = \tO(\frac{n^2}{Dr})$. This yields the desired result since we can assume without loss of generality that $r \leq CD$.
\end{proof}

Combining the above, we have the following.
\begin{theorem} \label{thm: regular 5-spanner result}
With high probability, the above construction yields an adjacency oracle of a $5$-spanner of $G$ with $\tO(n + \frac{n^2}{Dr})$ edges, in $\tO(rn)$ total preprocessing time. Here, the parameter $r$ satisfies $0 \leq r \leq CD$.
\end{theorem}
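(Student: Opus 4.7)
The plan is to assemble this theorem by stitching together the three components that have just been established in this subsection: the correctness lemma (Lemma \ref{lem: 5-spanner approx reg spanning} that $H$ is a $5$-spanner), the size bound (the lemma showing $|E(H)| = \tO(n + n^2/(Dr))$), and an analysis of the preprocessing runtime. Each of the first two holds with high probability, so a single union bound at the end will give the claimed simultaneous guarantees.

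First I would handle correctness and size. These are immediate from the two preceding lemmas, so the only task is to invoke them and note that the assumption $D \ge n^{1/3}$ (which triggers this nontrivial branch of the construction) is baked in. The allowed range $0 \le r \le CD$ is natural: sampling more than $CD$ edges per vertex would exceed the maximum degree and is pointless, since once $r \gtrsim D/\log n$ the edge-sampling step effectively iterates over all incident edges of every vertex and cannot be improved further.

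Next I would bound the preprocessing runtime by analyzing each of the three algorithmic steps using the data structures already set up (the cluster-assignment array, the set of cluster pairs, and the set of recorded edges, all supporting $\tO(1)$ insertions and queries). Step one picks each vertex as a cluster center independently with probability $100C \log n / D$, which takes $O(n)$ time and, by a Chernoff bound, yields $\tO(n/D)$ cluster centers with high probability. Step two iterates over the neighborhoods of the cluster centers; since each vertex has degree $O(D)$, the total work is $\tO(n/D) \cdot O(D) = \tO(n)$. Step three samples $\tO(r)$ incident edges at each of the $n$ vertices and, for each sample, does $\tO(1)$ work to check cluster membership, test whether the pair of clusters already has a recorded edge, and insert into the appropriate set data structures; this contributes $\tO(nr)$ in total. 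Summing the three bounds gives the claimed $\tO(nr)$ preprocessing time.

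The only mildly delicate step is the final union bound: we need the Chernoff-based high-probability events (the cluster-center count, the clustering of every vertex, the edge-sampling concentration used inside the size lemma, and the correctness lemma) all to occur simultaneously. Each failure event has probability $n^{-\Omega(1)}$ for the implicit absolute constant $c$ chosen sufficiently large, and there are only $\poly(n)$ of them in total once we unfold the union bounds already taken inside the two preceding lemmas, so a single final union bound gives the theorem. I do not anticipate any real obstacle here: the theorem is essentially a bookkeeping step that packages the preceding lemmas with the runtime calculation.
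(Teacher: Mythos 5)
Your proposal is correct and matches the paper's treatment: the theorem is stated as a "combining the above" step, assembling Lemma \ref{lem: 5-spanner approx reg spanning}, the edge-count lemma, and the runtime analysis (cluster-center count via Chernoff, $\tO(n)$ for cluster assignment, $\tO(nr)$ for edge sampling) with a final union bound. No issues.
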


Setting the parameter $r$, we have the following corollary.
\begin{corollary}
By setting $r = n^{1/3}$, with high probability the above construction yields an adjacency oracle for a $5$-spanner of $G$ with $\Oish(n^{4/3})$ edges in $\Oish(n^{4/3})$ preprocessing time. 
\end{corollary}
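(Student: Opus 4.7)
The plan is to obtain this corollary by direct substitution into Theorem \ref{thm: regular 5-spanner result}, which already provides the whole tradeoff curve $\tO(n + n^2/(Dr))$ edges and $\tO(rn)$ preprocessing time, valid for any $0 \le r \le CD$. First I would verify that $r = n^{1/3}$ is an admissible choice of the parameter: the preprocessing algorithm in this subsection assumed $D \ge n^{1/3}$ (otherwise it simply returns $G$ itself as the $5$-spanner), so $r = n^{1/3} \le D \le CD$ and the hypothesis of the theorem is met.

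Next I would plug $r = n^{1/3}$ into each of the two bounds. The preprocessing time becomes $\tO(n \cdot n^{1/3}) = \tO(n^{4/3})$ immediately. For the edge count, the substitution yields $\tO(n + n^{5/3}/D)$, and the assumption $D \ge n^{1/3}$ gives $n^{5/3}/D \le n^{5/3}/n^{1/3} = n^{4/3}$, so the total is $\tO(n^{4/3})$. The high-probability guarantee is inherited verbatim from Theorem \ref{thm: regular 5-spanner result}.

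There is no real technical obstacle, since the corollary is just a calibration of the parameter $r$ in the more general result. The only observation worth highlighting is that the choice $r = n^{1/3}$ is exactly the point where the preprocessing term $\tO(rn)$ and the edge term $\tO(n^2/(Dr))$ coincide at the extreme value $D = n^{1/3}$; any other choice would strictly worsen one of the two bounds in the worst case.
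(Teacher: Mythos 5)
Your proposal is correct and matches the paper's (implicit) argument: the corollary is obtained exactly by substituting $r = n^{1/3}$ into Theorem \ref{thm: regular 5-spanner result} and using the standing assumption $D \ge n^{1/3}$ to bound $n^2/(Dr) \le n^{4/3}$. Your check that $r \le CD$ and your closing observation about the parameter choice are fine additions but nothing beyond the paper's direct substitution is needed.
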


\subsection{Extending to General Graphs}
The same bottlenecks for $3$-spanners apply to $5$-spanners: there remains a conflict between sampling enough clusters to cluster every node and adding too many unrecorded edges with the final oracle. The same bucketing trick can resolve this issue. However, a new issue arises. After selecting clusters, we assign nodes to clusters by looping over the edges incident to each cluster center. When the graph is approximately regular, this only takes $\Oish(n)$ time. However, now, if there are a linear number of low degree nodes, say of degree $\tO(n^{1/3 + \varepsilon})$, then we cannot trivially add all the edges of these nodes without losing optimal size. But, to cluster these nodes, we must sample $\tO(n^{2/3 - \varepsilon})$ many cluster centers. If the average degree of the graph is high, say $\Theta(n)$, then looping over the edges incident to the cluster centers can take $\tO(n^{5/3 - \varepsilon})$ time, much longer than the $\tO(n^{4/3})$ preprocessing time achievable in the approximately-regular case. This issue does not arise for $3$-spanners because we do not need to handle nodes of degree less that $n^{1/2}$ in that setting (so the maximum size of the gap between the minimum and average degree is sufficiently restricted).

We can somewhat overcome this bottleneck by observing that, when there are a very large number of cluster centers, we can instead sample random neighbors of nodes and check whether those neighbors are cluster centers. Regardless, for worst case inputs our preprocessing time will increase from $\tO(n^{4/3})$ to $\tO(n^{3/2})$ in extending to general graphs. This is exactly the run-time optimization included in the $3$-spanner algorithm for general graphs in Section \ref{sec: 3-spanner gen graphs}.

Formally, as for the case of $3$-spanners, we will create $O(\log n)$ copies of our data structure to handle graphs that are not approximately regular. We partition the interval $[n^{1/3}, n]$ into $O(\log n)$ buckets of the form $[2^kn^{1/3}, 2^{k+1}n^{1/3})$ for $k \in \{0, 1 ,\ldots, 2(\log n)/3 -1 \}$. As before, for each bucket with lower bound $\ell$, there is a corresponding data structure. We now describe the preprocessing and oracle behavior for these data structures. We let $D$ be the average degree of $G$; this will arise in optimizing the preprocessing time in the analysis.

\paragraph{Preprocessing Algorithm.}
For each non-empty bucket with lower bound $\ell$, we construct the corresponding data structure as follows.


\begin{mdframed}[backgroundcolor = gray!20]
\textbf{$5$-Spanner Adjacency Oracle Preprocessing}
\begin{itemize}
    \item  Iterate over each $v \in V$ and, independently with probability $\frac{c \log n}{\ell}$, initialize a distinct cluster for $v$ and assign it as its cluster's \textit{cluster center} (where $c > 0$ is a sufficiently large absolute constant we will leave implicit).
    \item Now we assign nodes to clusters.
    \begin{itemize}
        \item If $\ell \geq \sqrt{D}$, iterate over each neighbor of each cluster center and assign nodes to the first cluster center they are found to be adjacent to, recording the respective edge.
        \item Otherwise, if $\ell < \sqrt{D}$, for each node, loop over its neighbors in a random order and assign it to the first neighboring cluster center found. Upon finding a neighboring cluster center or exhausting the neighbors of the node, end the loop.
    \end{itemize}
    \item  For each $v \in V$, sample $cr \log n$ incident edges uniformly (again, $r$ is a parameter to the algorithm). For each edge $(v,w)$, if $w$ is in a different cluster than $v$ and there are no edges yet recorded between the clusters of $v$ and $w$, record that the clusters are adjacent and record the edge $(v,w)$. 
\end{itemize}
\end{mdframed}

By the exact same analysis as in Section \ref{sec: 3-spanner gen graphs}, with high probability the total preprocessing time is $\tO(n \min(D/\delta, \sqrt{D}) + nr)$, where $\delta := \max(\delta(G), n^{1/3})$. We can also use the same data structures to implement this.

\paragraph{Query Algorithm.}

Now, on query $(s,t)$, the data structure corresponding to bucket with lower bound $\ell$ responds as follows.
\begin{mdframed}[backgroundcolor = gray!20]
\textbf{$5$-Spanner Adjacency Oracle Query}
\begin{itemize}
    \item If $\min(\deg(s), \deg(t)) < n^{1/3}$, output \yes.
    \item Else if the edge $(s,t)$ was recorded, output \yes.
    \item Else if $s$ or $t$ are unclustered or $\min(\deg(s), \deg(t)) > 2\ell$ output \no.
    \item Else if $s$ and $t$ are in the same cluster or clusters with an edge recorded between them, output \no.
    \item Otherwise, output \yes.
\end{itemize}
\end{mdframed}

In general, on query $(s,t)$, we query all $O(\log n)$ data structures with edge $(s,t)$ and output \yes{} if any outputted \yes. We now analyze this adjacency oracle.

First, observe that we still have that nodes of sufficiently high degree are clustered in the data structure corresponding to lower bound $\ell$.

\begin{claim} \label{cla: clustered 5-spanner gen}
With high probability, for all bucket lower bounds $\ell$, each vertex of degree at least $\ell$ is clustered in the data structure correspond to $\ell$.
\end{claim}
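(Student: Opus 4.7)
The plan is to adapt the argument of Claim \ref{cla: gen 3-spanner clustered} essentially verbatim, since the cluster-center selection step for $5$-spanners is identical to the one used in the $3$-spanner algorithm (each vertex independently becomes a cluster center with probability $c \log n / \ell$). The key structural observation is that, in both variants of the assignment step (whether $\ell \ge \sqrt{D}$, where we loop over neighbors of each cluster center, or $\ell < \sqrt{D}$, where each node loops over a random order of its own neighbors), any vertex $v$ is guaranteed to be clustered as long as at least one vertex of its closed neighborhood $N[v]$ is a cluster center. So it suffices to prove that this event occurs for every sufficiently high-degree $v$ with high probability.

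Fix a bucket lower bound $\ell$ and a vertex $v \in V$ with $\deg(v) \ge \ell$. For each $u \in N[v]$ let $X_u$ be the indicator that $u$ is selected as a cluster center; these indicators are mutually independent, each of mean $c \log n / \ell$, and $|N[v]| \ge \ell + 1$, so the sum has expectation at least $c \log n$. A standard Chernoff lower-tail bound then gives
\[
\Pr\!\left[\,\textstyle\sum_{u \in N[v]} X_u = 0\,\right] \;\le\; \exp(-c \log n / 2) \;=\; n^{-c/2}.
\]
Taking $c$ sufficiently large and applying a union bound over the $n$ vertices, every vertex of degree at least $\ell$ has a cluster center somewhere in its closed neighborhood with probability at least $1 - n^{1 - c/2}$, and is therefore assigned to a cluster by the observation above.

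Finally, a union bound over the $O(\log n)$ distinct bucket lower bounds $\ell$ upgrades this to a simultaneous guarantee across all data structures, completing the claim. There is no real obstacle here: the argument is a direct transcription of the $3$-spanner proof with $n^{1/3}$ replacing $n^{1/2}$ as the smallest bucket threshold, and the only point that needed checking is that neither variant of the cluster-assignment step can fail to assign $v$ when $N[v]$ contains a center — which is immediate from the algorithm description.
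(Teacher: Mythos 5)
Your proposal is correct and follows essentially the same route as the paper, which simply invokes the proof of Claim \ref{cla: gen 3-spanner clustered}: a Chernoff lower-tail bound showing some vertex of $N[v]$ is a sampled cluster center, plus union bounds over vertices and the $O(\log n)$ buckets, together with the (correct) observation that both cluster-assignment variants cluster $v$ whenever $N[v]$ contains a center. No gaps.
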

\begin{proof}
This follows from the exact same proof as Claim \ref{cla: gen 3-spanner clustered} since the preprocessing is nearly identical.
\end{proof}

\begin{lemma}
With high probability, the subgraph $H \subseteq G$ induced by the adjacency oracle is a $5$-spanner of $G$.
\end{lemma}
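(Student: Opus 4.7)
The plan is to mimic Lemma \ref{lem: 5-spanner approx reg spanning} bucket-by-bucket, exactly as Lemma \ref{lem: 3 span} extends the approximately regular $3$-spanner analysis to general graphs. By the usual reduction, it suffices to exhibit, for every edge $(s,t) \in E(G)$ on which the global oracle outputs \no, a path of length at most $5$ in $H$ whose edges all receive a \yes{} answer from the global oracle (equivalently, from at least one of the $O(\log n)$ bucket data structures).

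First, if $\min(\deg(s), \deg(t)) < n^{1/3}$ then every data structure answers \yes{} to $(s,t)$ by its first rule, so the edge itself is in $H$ and we are done. Otherwise let $k$ be the unique integer with $\min(\deg(s), \deg(t)) \in [\ell, 2\ell)$ where $\ell = 2^k n^{1/3}$, and focus on the data structure $D_\ell$ for this bucket. Since $\deg(s), \deg(t) \ge \ell$, Claim \ref{cla: clustered 5-spanner gen} guarantees (with high probability, after union bounding over all buckets and all edges) that both $s$ and $t$ are clustered in $D_\ell$. Also, the degree condition $\min(\deg(s),\deg(t)) > 2\ell$ fails, so the third \no{}-rule of $D_\ell$ does not fire. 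Thus if the global oracle answered \no{} to $(s,t)$, it must be that $D_\ell$ reached its fourth rule and either (i) $s,t$ lie in the same $D_\ell$-cluster, or (ii) they lie in distinct $D_\ell$-clusters with a recorded edge between them.

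In case (i), let $x$ be the common cluster center in $D_\ell$; the edges $(s,x)$ and $(x,t)$ were both recorded during cluster assignment in $D_\ell$, and so $D_\ell$ answers \yes{} to each of them by its second rule (``edge recorded $\Rightarrow$ \yes''), giving the length-$2$ path $s \to x \to t$. In case (ii), let $c_1, c_2$ be the centers of the clusters containing $s$ and $t$ respectively, and let $(x,y)$ be the recorded inter-cluster edge with $x$ in the cluster of $c_1$ and $y$ in the cluster of $c_2$. The four edges $(s,c_1), (c_1,x), (y,c_2), (c_2,t)$ were all recorded at cluster assignment time in $D_\ell$, and $(x,y)$ was recorded during the edge-sampling step of $D_\ell$; hence $D_\ell$ answers \yes{} to each, yielding the (possibly degenerate) path $s \to c_1 \to x \to y \to c_2 \to t$ of length at most $5$ in $H$.

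Crucially, the ``edge recorded'' check in the query algorithm fires \emph{before} any degree or clustering test, so even if some intermediate vertex on these paths (e.g.\ a cluster center) has degree above $2\ell$ or below $n^{1/3}$, $D_\ell$ still returns \yes{} for the path edges, which is all we need. The main subtlety, and the only place where care is required, is to ensure that the high-probability clustering guarantee of Claim \ref{cla: clustered 5-spanner gen} holds simultaneously across all $O(\log n)$ buckets and all $O(n^2)$ potential query edges; this is handled by a union bound, using the $1/\poly(n)$ failure probability already established for each individual bucket.
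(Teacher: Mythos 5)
Your proposal is correct and follows essentially the same route as the paper: reduce to edges on which the bucket data structure $D_\ell$ for $\min(\deg(s),\deg(t)) \in [\ell, 2\ell)$ says \no{}, use Claim \ref{cla: clustered 5-spanner gen} to rule out the unclustered case, and then exhibit the length-$2$ (same cluster) or length-$\le 5$ (adjacent clusters via the recorded edge $(x,y)$) path of recorded edges, exactly as in Lemma \ref{lem: 5-spanner approx reg spanning}. The paper's proof simply defers these two cases to that earlier lemma, whereas you spell them out; the content is the same.
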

\begin{proof}
As before, we only need to consider the cases in which the data structure corresponding to $\ell$ says \no.
Consider querying an edge $(s,t)$ where $\min(\deg(s), \deg(t)) \in [\ell, 2\ell)$ for $\ell \geq n^{1/3}$. Note that when the minimum degree of $s$ and $t$ is less than $n^{1/3}$, all data structures will output \yes{} on a query of $(s,t)$. By Claim \ref{cla: clustered 5-spanner gen},  we can ignore the first \no{} case. Then, the remaining two \no{} cases are handled by the same analysis as in Lemma \ref{lem: 5-spanner approx reg spanning}.
\end{proof}

It remains to bound the number of edges in the graph induced by the adjacency oracle.
\begin{lemma}
With high probability, the underlying graph $H \subseteq G$ induced by the adjacency oracle has $\tO(\frac{n^2}{r \delta} + \frac{n^2}{\delta^2} + n)$ edges. 
\end{lemma}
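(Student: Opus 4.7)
The plan is to bound, for each bucket data structure corresponding to a lower bound $\ell$, the edges it says \yes{} to, and then union-bound over the $O(\log n)$ data structures. Fix a data structure with lower bound $\ell$. I will split the analysis into three parts matching the cases of the query algorithm: the low-degree shortcut (first \yes{} case), the recorded edges (second \yes{} case), and the unrecorded cluster-pair edges (final \yes{} case).

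For the recorded edges, each node records at most one edge to its cluster center, contributing at most $n$ edges per data structure. The sampling step records at most one edge per unordered pair of clusters; a Chernoff bound on cluster-center sampling gives $\tO(n/\ell)$ clusters with high probability, so this contributes $\tO(n^2/\ell^2)$ sampled-and-recorded edges per data structure.

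The main technical part is bounding the unrecorded \yes{} edges in the final case: edges $(s,t)$ with both endpoints clustered, $\min(\deg(s),\deg(t)) \leq 2\ell$, lying in distinct clusters between which no edge was recorded. Following the template of Claim \ref{cla: 3-spanner approx reg edges}, call such an edge a \emph{candidate} for this bucket. Every candidate is sampled during preprocessing from its lower-degree endpoint with probability at least $\Omega(r \log n / \ell)$. For a fixed cluster pair $(C_1,C_2)$, the candidate edges between them partition by lower-degree endpoint into groups within which edge samples are negatively correlated and across which they are independent; applying the Chernoff bound for negatively correlated indicators shows that if a cluster pair has at least $c' \ell / r$ candidates, then at least one is sampled with probability $\geq 1 - n^{-\Omega(c')}$. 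Union-bounding over the $\tO((n/\ell)^2)$ cluster pairs and choosing $c'$ large enough, with high probability every cluster pair lacking a recorded edge has $\tO(\ell/r)$ candidate edges between it. Summing, this data structure contributes at most $\tO(n^2/(r\ell))$ unrecorded \yes{} edges.

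Finally, I sum over buckets. Only buckets with $\ell \geq \delta/2$ have any clustered nodes or non-trivial final-case contributions (since every node has degree $\geq \delta$, so $\min(\deg) \leq 2\ell$ forces $\ell \geq \delta/2$). The geometric sums $\sum_{\ell \geq \delta/2} n^2/\ell^2$ and $\sum_{\ell \geq \delta/2} n^2/(r\ell)$ are dominated by the smallest terms and evaluate to $\tO(n^2/\delta^2)$ and $\tO(n^2/(r\delta))$ respectively, while the cluster-assignment edges sum to $\tO(n)$. The first \yes{} case contributes at most $n \cdot n^{1/3} = n^{4/3}$ edges (zero when $\delta > n^{1/3}$, and equal to $n^2/\delta^2$ when $\delta = n^{1/3}$). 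A final union bound over all these high-probability events yields the claimed $\tO(n^2/(r\delta) + n^2/\delta^2 + n)$ bound. The main obstacle is the sampling analysis in the third step: carefully tracking the lower-degree endpoint of each candidate edge so as to deploy the negative-correlation Chernoff bound cleanly at the cluster-pair level without overcounting edges that are candidates from both endpoints.
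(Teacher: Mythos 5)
Your proposal is correct and follows essentially the same route as the paper's proof: a per-bucket case analysis of the three \yes{} cases, with $\tO(n/\ell)$ clusters, at most one recorded sampled edge per cluster pair, and a negative-correlation Chernoff bound showing each cluster pair without a recorded edge contributes only $\tO(\ell/r)$ unrecorded edges, then a union bound over the $O(\log n)$ data structures. The only differences are cosmetic: you make the summation over buckets and the handling of the lower-degree endpoint slightly more explicit, while the paper additionally notes the degenerate regime $r \geq 2\ell/(c\log n)$ separately, which your sampling-probability bound covers implicitly.
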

\begin{proof}
We first bound the number of edges added in each \yes{} case of the data structure corresponding to bucket with lower bound $\ell$. With high probability, there are $\Oish(n/\ell)$ clusters. We assume this event occurs throughout. The first \yes{} case adds at most $\tO(n^{4/3})$ total edges if $\delta = n^{1/3}$ and $0$ otherwise. 

Since each vertex is assigned at most one cluster center, the first part of the second \yes{} case adds at most  $O(n)$ edges. The second part of the second \yes{} case adds at most $\tO(n^2/\ell^2)$ edges since there are $\tO(n^2/\ell^2)$ pairs of clusters and we add at most one edge per pair from sampling.

As usual, the final \yes{} case is the most complicated one. Fix a pair of clusters, $(C_1, C_2)$. All edges that could be added in this case between $C_1$ and $C_2$ have one endpoint of degree at most $2\ell$. If $r \geq \frac{2\ell}{c \log n}$, we see all edges of vertices of degree at most $2 \ell$ in the edge sampling step, so we record an edge for every pair of adjacent clusters (and this case never arises). Otherwise, all such edges are sampled with probability at least $cr (\log n)/ (2\ell)$. Then, by the Chernoff bound (for negatively correlated random variables) and union bound, with high probability there are at most $\tO(\ell/r)$ edges added in this case per pair of clusters $(C_1, C_2)$ (or else, if there are more edges between $C_1$ and $C_2$ that could be added, with high probability one is added in sampling). Hence, the number of total number of edges added in this case is $\tO(\ell/r \cdot n^2/\ell^2) = \tO(\frac{n^2}{r \ell })$, with high probability.

Union bounding over all data structures, we then have that, with high probability, $|E(H)| = \tO(n + \frac{n^2}{r \delta} + \frac{n^2}{\delta^2})$.
\end{proof}

Combining all of the above  (and the probability bounds via the union bound, choosing a sufficiently large $c$ such as $c = 200$), we have the following theorem.
\begin{theorem}
With high probability, the adjacency oracle is constructed in $\tO(n\min(D/\delta, \sqrt{D}) + nr)$ time, and the subgraph $H \subseteq G$ induced by the adjacency oracle is a $5$-spanner of $G$ with $\tO(\frac{n^2}{r\delta} + \frac{n^2}{\delta^2} + n)$ edges. Here $D$ is the average degree of $G$, $\delta = \max(\delta(G), n^{1/3})$, and $r$ is a parameter.
\end{theorem}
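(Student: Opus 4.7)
The plan is to assemble the theorem from the three immediately preceding lemmas (the clustering claim, the preprocessing-time analysis, the $5$-spanner correctness lemma, and the edge-count lemma) by a union bound. Since each individual statement has already been shown to hold with high probability in a single bucket (using Chernoff bounds for independent or negatively correlated indicators, with failure probability at most $n^{-\Theta(c)}$ for the implicit constant $c$), the main task is to aggregate these guarantees across the $O(\log n)$ buckets and across the polynomially many vertex-cluster and cluster-cluster pairs considered inside each bucket.

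First I would fix $c$ to be a sufficiently large absolute constant (the value $c=200$ suggested in the text suffices) so that each individual bad event has failure probability at most $n^{-\Theta(c)}$. Second, I would enumerate the events that must simultaneously hold: (i) in each of the $O(\log n)$ data structures, every vertex of degree at least $\ell$ is assigned to some cluster (Claim ``clustered 5-spanner gen''); (ii) in each data structure the number of clusters is $\tO(n/\ell)$; (iii) in each data structure and for each vertex-cluster (resp.\ cluster-cluster) pair, whenever more than the analyzed threshold of edges exist, at least one is sampled; (iv) the preprocessing-time bounds (e.g., the Chernoff concentration on the total degree of sampled cluster centers, and the termination-after-$\ell$-random-neighbors bound when $\ell<\sqrt{D}$) hold in each data structure. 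Each event is a poly$(n)$-size conjunction of $n^{-\Theta(c)}$-probability bad events, so a union bound gives overall failure probability $n^{-\Theta(c)+O(1)}$, which is $1/\text{poly}(n)$ for $c$ large enough.

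Third, conditioning on all of these events, I would read off the conclusions: the preprocessing-time lemma gives $\tO(n\min(D/\delta,\sqrt{D}) + nr)$ total construction time across all $O(\log n)$ data structures (the $\log n$ factor is absorbed by the $\tO(\cdot)$); the spanner lemma (Lemma ``5 span'' analogue) shows that for every queried edge $(s,t)$, either all data structures output \yes{} (if $\min(\deg(s),\deg(t))<n^{1/3}$), or the bucket containing $\min(\deg(s),\deg(t))$ supplies a length-$\le 5$ path of \yes{}-edges, so the union of all \yes{}-edges forms a $5$-spanner; and the edge-count lemma, summed over buckets with the $\log n$ absorbed into $\tO(\cdot)$, yields the stated $\tO(n^2/(r\delta)+n^2/\delta^2+n)$ edge bound.

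The ``hard part'' here is not really hard since every nontrivial estimate was done inside the earlier lemmas; the only subtle point to be careful about is that the quantities being union-bounded include $O(n^2/\ell^2)$ cluster-cluster pairs and $O(n^2/\ell)$ vertex-cluster pairs per bucket, so the value of $c$ must be chosen large enough that after multiplying by $\text{poly}(n)\cdot O(\log n)$ the overall failure probability remains negligible. Choosing $c=200$ (or any large enough absolute constant) and applying the union bound over all high-probability events finishes the proof.
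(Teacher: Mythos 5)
Your proposal matches the paper's proof: the theorem is obtained exactly by combining the preceding preprocessing-time analysis, the clustering claim, the $5$-spanner correctness lemma, and the edge-count lemma, and union bounding all high-probability events over the $O(\log n)$ data structures with a sufficiently large constant $c$ (e.g., $c = 200$). No further comments needed.
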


\begin{corollary}
By setting $r = n^{1/3}$, with high probability the above construction yields an adjacency oracle for a $5$-spanner of the input graph $G$ with $\tO(n^{4/3})$ edges, in $\tO(n^{3/2})$ time.
\end{corollary}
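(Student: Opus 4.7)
The plan is straightforward: this corollary is a direct substitution of $r = n^{1/3}$ into the preceding theorem, so the only work is to simplify the two resulting bounds using $\delta \ge n^{1/3}$ and $D \le n$.

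First I would handle the edge count. The theorem gives $\tO(n^2/(r\delta) + n^2/\delta^2 + n)$ edges. Substituting $r = n^{1/3}$ and using $\delta = \max(\delta(G), n^{1/3}) \ge n^{1/3}$, each of the first two terms is at most $n^2 / n^{2/3} = n^{4/3}$, and the additive $n$ term is dominated. So the edge count simplifies to $\tO(n^{4/3})$.

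Next I would handle the preprocessing time. The theorem gives $\tO(n \min(D/\delta, \sqrt{D}) + nr)$. The $nr$ term becomes $n^{4/3}$. For the first term, I would simply bound $\min(D/\delta, \sqrt{D}) \le \sqrt{D} \le \sqrt{n}$ using the trivial bound $D \le n$ on average degree, so $n \min(D/\delta, \sqrt{D}) \le n^{3/2}$. Combining, the total preprocessing time is $\tO(n^{3/2} + n^{4/3}) = \tO(n^{3/2})$.

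There is no real obstacle here; the high-probability guarantees are inherited directly from the theorem, and no new union bound is required. The only subtlety worth flagging is that the $n^{3/2}$ bound is tight in the worst case precisely because $\min(D/\delta, \sqrt{D})$ can be as large as $\sqrt{n}$ when $D = \Theta(n)$ and $\delta = \Theta(n^{1/3})$, which is exactly the regime (discussed just before the general-graph construction) where clustering low-degree nodes forces looping over many neighbors of cluster centers. So the choice $r = n^{1/3}$ balances the sampling cost $nr$ against this inherent clustering cost to yield the stated corollary.
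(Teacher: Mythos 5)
Your proof is correct and is exactly the argument the paper intends: the corollary is a direct substitution of $r = n^{1/3}$ into the preceding theorem, with the edge bound simplified via $\delta \ge n^{1/3}$ and the preprocessing term bounded by $n\min(D/\delta,\sqrt{D}) \le n\sqrt{D} \le n^{3/2}$. Your closing remark about the worst case $D = \Theta(n)$, $\delta = \Theta(n^{1/3})$ also matches the paper's own discussion of why the general-graph bound degrades from $\tO(n^{4/3})$ to $\tO(n^{3/2})$.
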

Unlike for $3$-spanners, we lose somewhat in our preprocessing time when extending from the almost-regular setting to the general graph setting for $5$-spanners. Intuitively, this comes from the fact that assigning clusters by looping over the cluster centers incurs a cost relative to proportion of the average degree $D$ and the data structure lower bound $\ell$. This is because we must sample $\tO(n/\ell)$ cluster centers in order to cluster the low degree nodes with high probability but cluster assignment requires iterating over the adjacency lists of cluster centers of average degree $D$. For especially low $\ell$ (in particular, $\ell \leq \sqrt{D}$), we can mitigate this cost somewhat by instead looping over all vertices and assigning them the first cluster center they see. 

Nonetheless, if $D = \tO(n^{5/3})$ or $\delta = \Omega(n^{5/3})$, we achieve a sublinear preprocessing time of $\tO(n^{4/3})$, matching the time complexity achieved in the approximately regular case.

\section{Higher Stretch Spanners}
In this section, we adapt the classic algorithm of Baswana and Sen \cite{BS07} to the adjacency oracle setting to yield a sublinear time algorithm for close to optimal size $(2k-1)$-spanners for $k = \Oish(1)$. Without modification, the algorithm of Baswana and Sen takes $\Omega(m)$ time on $m$-edge graphs since it traverses the adjacency list of every vertex. However, by sampling edges from adjacency lists and deferring some computation of the $(2k-1)$-spanner to the adjacency oracle similar to how we handle cut-edges in our algorithm for \sss, we can achieve sublinear runtime. However, unlike the algorithms for $3$- and $5$-spanners, we incur an additional cost in the size of our outputted spanner from having multiple rounds of hierarchical clustering. 

As a minor aside, the algorithm of Baswana and Sen \cite{BS07} computes $(2k-1)$-spanners even for weighted graphs. Sampling edges fails dramatically for weighted graphs—having one comparatively high weight edge for each vertex forces any algorithm to scan the whole adjacency list of each node. As such, we only consider the setting of unweighted graphs.

The high-level idea of the algorithm is as follows. Each vertex is initialized to its own cluster. Then there are $k-1$ rounds of selecting clusters, re-clustering vertices, and finalizing vertices that were not re-clustered. Namely, in each round we \textit{select} each remaining cluster with probability $1/n^{1/k}$. The unselected clusters will cease to exist after the round. Each \textit{unselected} vertex $v \in V$ (vertex not in a selected cluster) then samples some fraction of the edges from its adjacency list to check for adjacency to a selected cluster. If it finds an edge to a vertex in a selected cluster, we record that edge and re-cluster the vertex in that selected cluster. Otherwise, the vertex is \textit{finalized} and, for each distinct cluster (not necessarily selected) found to be adjacent to the vertex during the edge sampling, we record one edge. In the following rounds, we ignore sampled edges to finalized nodes  and only retain selected clusters. After $k-1$ rounds are completed, we do the same finalization procedure on all the remaining non-finalized nodes.

Each cluster can be conceived of as the subtree of edges recorded when re-clustering vertices into the cluster. Note that this subtree spans the nodes in the cluster. At initialization, each cluster is diameter $0$. In each subsequent round, each selected cluster grows in diameter by at most $2$. Then, each selected cluster up to the final round has diameter at most $2k-2$. In particular, there is a path of length at most $k-1$ from each node in the cluster to the first node to be initialized to the cluster. Crucially, this means that, for each vertex $v \in V$, it suffices to record only one edge from $v$ to each adjacent cluster as that yields a path of length at most $2k-1$ of recorded edges. Within each cluster, we only need to retain the recorded edges to handle the stretch for edges between intracluster pairs of vertices. The query algorithm can then fill in the gaps at query time, adding all  queried edges not satisfying either case.

Intuitively, when a node is finalized prior to the final round, it is adjacent to few remaining, selected clusters. Otherwise, sampling edges from its adjacency list would result in the node being re-clustered. This holds for nodes that are finalized in the final round since few clusters remain at that point. The edge sampling then ensures that, for each adjacent cluster, either an edge is recorded or the vertex has few edges to that cluster (and adding all of them does not incur too much additional size).

How does this compare to our $3$- and $5$-spanner algorithms? In a sense, the $3$-spanner algorithm is an optimization of the first round of this algorithm. Instead of initializing all nodes to their own clusters, we sample nodes to choose as cluster centers and assign the remaining nodes to clusters by looping over the full adjacency lists of the sampled nodes. This optimization could apply here too but is not useful after the second round (since looping over the adjacency lists of whole clusters could then incur $\Omega(m)$ time). The $5$-spanner algorithm takes advantage of Section 4.3 of \cite{BS07}, applying a similar optimization as the one we used for $3$-spanners to a variant of the Baswana and Sen algorithm. For large stretch, it is not obvious how to extend these optimizations.

\paragraph{Preprocessing Algorithm.}
Formally, for input graph $G(V,E)$ on $n$-nodes and $m$-edges and desired stretch $2k-1$, the preprocessing algorithm is as follows. In the below, $\rho$ is a parameter that is 
 a function of $n$, controlling the fraction of the adjacency lists sampled for each vertex.

\begin{mdframed}[backgroundcolor=gray!20]
\textbf{$(2k-1)$-Spanner Adjacency Oracle Preprocessing}
\begin{itemize}
    \item Initialize the cluster assignment of each $v \in V$ to itself.
    \item Initialize $r \leftarrow 1.$ $r$ denotes the current \textit{round} of the algorithm.
    \item While $r \leq k-1$:
    \begin{itemize}
    \item Select each cluster independently with probability $1/n^{1/k}$.
    \item For each $v \in V$ that is not finalized or in a selected cluster:
    \begin{itemize}
        \item Sample $\frac{c \deg(v) \log n}{\rho}$ edges from the adjacency list of $v$ ($c > 0$ is an absolute constant that will be left implicit).
        \item For the first sampled edge $(v, w)$ with $w$ in a selected cluster (and $w$ not added to the cluster this round), record $(v,w)$ and assign $v$ to the cluster of $w$.
        \item If no such edge is found, loop over the sampled edges again and, for each edge $(v,w)$ with $w$ in a different cluster and not finalized prior to this round, record the edge $(v,w)$ if and only if an edge from $v$ to the cluster of $w$ has not yet been recorded in this step. Mark $v$ as finalized.
    \end{itemize}
    \item Increment $r \leftarrow r + 1.$
    \end{itemize}
    \item For each non-finalized $v \in V$:
    \begin{itemize}
        \item Sample $\frac{c \deg(v) \log n}{\rho}$ edges from the adjacency list of $v$. For each edge $(v,w)$ with  $w$ in a different cluster and not finalized prior to this round, record the edge $(v,w)$ if and only if an edge from $v$ to the cluster of $w$ has not yet been recorded in this step. Mark $v$ as finalized.
    \end{itemize}
\end{itemize}
\end{mdframed}

The first $k-1$ rounds of the algorithm take $\tO(n + m/\rho)$ time each. The $k$\textsuperscript{th} round also takes $\tO(n + m/\rho)$ time. We then have that preprocessing takes $\Oish(kn + km/\rho) = \Oish(n + m/\rho)$ total time, since $k = \Oish(1)$.

We can implement the underlying data structure via:
\begin{itemize}
    \item A set data structure keeping track of the recorded edges.
    \item An array indexed by vertices, storing:
    \begin{itemize}
        \item An array of the cluster assignment of the vertex in each round and whether the vertex was finalized by the beginning of that round.
        \item A set data structure storing the clusters adjacent to the vertex.
    \end{itemize}
\end{itemize}

\paragraph{Query Algorithm.}
On a query of the edge $(s,t)$, the adjacency oracle responds as follows.
\begin{mdframed}[backgroundcolor = gray!20]
\textbf{$(2k-1)$-Spanner Adjacency Oracle Query}
\begin{itemize}
    \item If $(s,t)$ was recorded during preprocessing, output \yes.
    \item Otherwise, assume that $s$ was finalized during round $r_0$ (when $r = r_0$) and $t$ was not finalized before round $r_0$ (otherwise, swap $s$ and $t$). 
    \begin{itemize}
        \item If $s$ and $t$ were in the same cluster in round at the beginning of round $r_0$, output \no.
        \item If $s$ records an edge to the cluster containing $t$ at the beginning of round $r_0$, output \no.
        \item Otherwise, output \yes.
    \end{itemize}
\end{itemize}
\end{mdframed}

Note that the query algorithm can be implemented in $\Oish(1)$ time (independent of $k$) using the data structure described above. Additionally, if $s$ and $t$ were ever in the same cluster and the edge $(s,t)$ was not recorded, we could output \no{}. However, checking this would incur a factor of $k$ in the query time, and we do not make use of this in the analysis.

We begin our analysis of the algorithm with a helpful structural claim.
\begin{claim} \label{cla: 2k-1 cluster diameter}
At the beginning of round $r$, the induced subgraph of recorded edges of each cluster has diameter at most $2r - 2$.
\end{claim}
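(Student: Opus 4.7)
The plan is to prove the claim by induction on the round index $r$. For the base case $r=1$, each vertex is initialized as its own singleton cluster, so the induced subgraph of recorded edges within each cluster trivially has diameter $0 = 2(1) - 2$. This handles the base case cleanly.

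For the inductive step, suppose the claim holds at the beginning of round $r$, and consider any cluster $C$ present at the beginning of round $r+1$. Such a cluster must have been selected in round $r$, and it was obtained by starting from some cluster $C' \subseteq C$ that existed at the start of round $r$ (with diameter at most $2r-2$ by the inductive hypothesis), together with zero or more new vertices $v \in C \setminus C'$ that were re-clustered into $C$ during round $r$. The key structural observation is that each such re-clustered $v$ enters $C$ by recording exactly one edge $(v, w)$ to some $w \in C'$, per the preprocessing algorithm's selection of the first sampled edge to a vertex in a selected cluster.

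Given two arbitrary vertices $u_1, u_2 \in C$ at the beginning of round $r+1$, I would bound $d_C(u_1, u_2)$ in the induced subgraph of recorded edges by case analysis on whether $u_1, u_2 \in C'$ or $u_i \in C \setminus C'$. If both lie in $C'$, the inductive hypothesis gives $d_C(u_1, u_2) \le 2r-2 \le 2r$. If $u_1 \in C \setminus C'$ and $u_2 \in C'$, then there is a recorded edge $(u_1, w_1)$ with $w_1 \in C'$, and we can route $u_1 \to w_1 \leadsto u_2$ for a path of length at most $1 + (2r-2) = 2r - 1$. If both $u_1, u_2 \in C \setminus C'$, with recorded edges $(u_1, w_1)$ and $(u_2, w_2)$ for $w_1, w_2 \in C'$, the path $u_1 \to w_1 \leadsto w_2 \to u_2$ has length at most $1 + (2r-2) + 1 = 2r = 2(r+1) - 2$, completing the induction.

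The main subtlety, and really the only nontrivial point, is verifying that every new vertex added to a cluster in round $r$ connects via a recorded edge to the \emph{old} subcluster $C'$, rather than to another newly added vertex. This follows from the preprocessing algorithm: the condition for $v$ to join cluster $C$ is that $v$ samples an edge $(v, w)$ where $w$ was already in the selected cluster $C'$ at the start of round $r$ (the algorithm explicitly requires $w$ not to be added to the cluster in the same round, via the parenthetical ``and $w$ not added to the cluster this round''). Thus the anchor vertices $w_1, w_2 \in C'$ used above indeed lie in the previous-round cluster whose diameter is controlled by the inductive hypothesis, and no routine calculation beyond what is above is needed.
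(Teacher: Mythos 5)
Your proof is correct and follows essentially the same approach as the paper: induction on rounds, using the fact (guaranteed by the parenthetical ``$w$ not added to the cluster this round'') that each re-clustered vertex attaches by a single recorded edge to a vertex already in the selected cluster. The only cosmetic difference is that the paper maintains the slightly stronger invariant that every clustered vertex has a recorded path of length at most $r-1$ to its cluster center and doubles it, whereas you induct on the diameter directly; both yield the stated bound.
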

\begin{proof}
We prove the result by induction. Label each cluster by the first vertex initialized to that cluster. Call that vertex the \textit{cluster center}. At the beginning of round $1$, no vertices are finalized and all clusters are a single vertex.

Now, suppose that, at the beginning of the $\ell$\textsuperscript{th} round, each vertex in each cluster (of non-finalized vertices) has a path of recorded edges of length at most $\ell - 1$ to its cluster center. Then, each unselected cluster ceases to exist in the following round: all of its nodes are either finalized or re-clustered. For each selected cluster, any node that is re-clustered has an accompanying recorded edge to a node that was in the cluster before this round. Then, by the inductive hypothesis, at the beginning of the $(\ell + 1)$\textsuperscript{th} round, the re-clustered node has a path of length of recorded edges of length at most $\ell$ to its cluster center. The result then follows by induction.
\end{proof}
We can now prove correctness of the query algorithm.
\begin{lemma}
The subgraph $H \subseteq G$ induced by the adjacency oracle is a $(2k-1)$-spanner of the input graph $G$.
\end{lemma}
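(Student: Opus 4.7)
The plan is to invoke the standard spanner reduction (as used in Claim \ref{cla: approx reg 3-spanner prop} and Lemma \ref{lem: 3 span}): it suffices to verify that for every $(s,t) \in E(G)$ on which the oracle answers \no{}, there is an $s \leadsto t$ path of length at most $2k-1$ in $H$. Since every vertex is finalized by the end of preprocessing (either in one of the $k-1$ iterations of the main loop or in the final pass), we may let $r_0$ denote the round in which $s$ is finalized, taking WLOG that $t$ is not finalized before round $r_0$. Both \no{} branches of the query algorithm will then be handled by Claim \ref{cla: 2k-1 cluster diameter}, which guarantees that at the beginning of round $r_0$ the induced subgraph of recorded edges of each cluster has diameter at most $2r_0 - 2$.

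For the first \no{} branch, $s$ and $t$ lie in the same cluster at the beginning of round $r_0$, so Claim \ref{cla: 2k-1 cluster diameter} directly supplies a path of recorded edges from $s$ to $t$ of length at most $2r_0 - 2 \le 2k - 2$. For the second \no{} branch, $s$ records an edge $(s, u)$ during its own finalization in round $r_0$, where $u$ lies in the cluster of $t$ at the beginning of round $r_0$. Concatenating this recorded edge with a path of recorded edges from $u$ to $t$ inside that cluster, whose length is at most $2r_0 - 2$ by Claim \ref{cla: 2k-1 cluster diameter}, yields an $s \leadsto t$ path of length at most $2r_0 - 1 \le 2k - 1$ in $H$.

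The main obstacle is being careful about \emph{which} cluster memberships are relevant: clusters can grow via re-clustering within round $r_0$ and may dissolve in later rounds. The resolution is that Claim \ref{cla: 2k-1 cluster diameter} pins down the cluster structure at the \emph{beginning} of round $r_0$, and every edge of each cluster's induced spanning tree at that moment was recorded in a strictly earlier round and hence persists in $H$ regardless of subsequent re-clustering or finalization. A secondary point to check is that the final pass after the while loop behaves as an additional ``round $k$'' performing the same finalization step, so $r_0$ legitimately ranges over $\{1, \ldots, k\}$ and the $2k - 1$ bound applies uniformly; degenerate cases such as $u = t$ or $r_0 = 1$ only shorten the path.
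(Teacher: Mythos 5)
Your proof is correct and follows essentially the same route as the paper's: the standard reduction to checking a short path for each \no{} edge, plus the two-case analysis (same cluster, or a recorded edge into the cluster of $t$) via Claim \ref{cla: 2k-1 cluster diameter}. Your version is slightly more careful, tracking the round-indexed bounds $2r_0-2$ and $2r_0-1$ and noting that recorded edges persist, but this is a refinement of, not a departure from, the paper's argument.
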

\begin{proof}
As in Claim \ref{cla: approx reg 3-spanner prop}, it suffices to verify that for each edge $(s,t) \in E(G)$ for which the oracle responds \no{} to the query $(s,t)$, there exists an $s \leadsto t$ path of length $\leq 2k-1$ of edges to which the oracle responds \yes. There are two cases to consider.
\begin{itemize}
    \item In the first \no{} case, $s$ and $t$ were in the same cluster in some round. By Claim \ref{cla: 2k-1 cluster diameter}, since there are $k$ rounds of the algorithm (including the final partial round), there exists a path of recorded edges of length at most $2k-2$ between $s$ and $t$.
    \item In the second \no{} case, $s$ records a edge to a node $w$ in the same cluster as $t$ in some round. Then, by Claim \ref{cla: 2k-1 cluster diameter}, since there exists a path of recorded edges of length at most $2k-2$ between $w$ and $t$, there exists a path of recorded edges of length at most $2k-1$ between $s$ and $t$.
\end{itemize}
\end{proof}

\begin{lemma} \label{lem: high stretch edge bd}
With high probability, the subgraph $H \subseteq G$ induced by the adjacency oracle satisfies $|E(H)| = \Oish(n^{1 + 1/k}\rho^2)$.
\end{lemma}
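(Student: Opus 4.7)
The plan is to split the edges of $H$ into two groups: the edges explicitly recorded during preprocessing, and the ``phantom'' edges $(s,t)\in E(G)$ which the query algorithm returns \yes{} on via the second case (i.e., $s$ is finalized in round $r_0$, $s$ and $t$ lie in different clusters at the start of round $r_0$, and $s$ records no edge to the cluster of $t$). The recorded \emph{joining} edges number at most one per vertex per round, contributing $O(nk) = \Oish(n)$ edges in total. The remaining contributions will both be controlled by the same quantity: for each vertex $v$ finalized in round $r_0(v)$, let $C_v := C_v(r_0(v))$ denote the number of distinct clusters adjacent to $v$ at the start of that round.

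The central lemma I would prove is that with high probability, every finalized vertex $v$ satisfies $C_v = \Oish(\rho\, n^{1/k})$. Conditioning on the cluster structure at the start of round $r_0(v)$, suppose $C_v \ge K$. Since clusters are selected independently with probability $p := n^{-1/k}$, a Chernoff bound gives that with probability at least $1 - e^{-pK/8}$, at least $pK/2$ of these adjacent clusters are selected, and since $v$ has at least one edge to each, $v$ has at least $pK/2$ edges going to selected clusters. Then the $c\deg(v)\log n/\rho$ uniform edge samples from $v$ miss every selected-cluster edge with probability at most $e^{-cpK\log n/(2\rho)}$, using $(1-x)^m \le e^{-xm}$. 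Choosing $K = \Theta(\rho n^{1/k})$ (with a sufficiently large implicit constant, adding a $\log n$ factor where needed) makes both failure probabilities at most $1/n^{c'}$, and a union bound over all $n$ vertices yields the claim. (If $\rho < c\log n$, we simply sample every edge, which reduces to the standard Baswana--Sen bound $\Oish(n^{1+1/k})$, itself within our target.)

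Next I would show that for each finalized $v$ and each cluster $C$ adjacent to $v$ via $e \ge \rho$ edges at round $r_0(v)$, with high probability at least one of the $c\deg(v)\log n/\rho$ sampled edges falls on an edge to $C$; the probability this fails is at most $e^{-ce\log n/\rho} \le 1/n^c$, and a union bound over pairs $(v,C)$ handles all cases simultaneously. Consequently, for each adjacent cluster $C$, either $v$ records an edge to $C$ (so no phantom edges from $v$ to $C$), or $v$ has fewer than $\rho$ edges to $C$ (so at most $\rho$ phantom edges from $v$ to $C$). Summing over the at most $C_v = \Oish(\rho n^{1/k})$ adjacent clusters contributes at most $\Oish(\rho^2\, n^{1/k})$ phantom edges per finalized $v$, and summing over all $n$ vertices yields $\Oish(n^{1+1/k}\rho^2)$ phantom edges in total.

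Finally, the finalization-recorded edges from $v$ number at most $C_v = \Oish(\rho n^{1/k})$ per vertex, giving $\Oish(n^{1+1/k}\rho)$ across all vertices, which is dominated by the phantom bound. Adding the three contributions yields $|E(H)| = \Oish(n^{1+1/k}\rho^2)$ as claimed. The main obstacle is the key lemma on $C_v$: the intuition is essentially Baswana--Sen's, but the sampling decouples ``being finalized'' from ``having no edge to a selected cluster,'' so we must track both the randomness of cluster selection and the randomness of edge sampling, and argue via a two-step Chernoff argument that a vertex cannot simultaneously have too many adjacent clusters and escape sampling one into a selected cluster.
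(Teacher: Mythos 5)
Your decomposition (re-clustering edges, finalization-recorded edges, phantom edges), your per-pair bound of at most $\rho$ phantom edges between a finalized vertex and an adjacent cluster, and your ``too many adjacent clusters $\Rightarrow$ whp one is selected $\Rightarrow$ whp an edge to it is sampled, contradicting finalization'' argument are essentially the paper's proof, just with the adjacent-cluster bound phrased as a single contrapositive Chernoff step instead of the paper's two-step version. However, there is one genuine gap: your central lemma's proof only covers vertices finalized during rounds $1,\dots,k-1$. A vertex can also be finalized in the final step after the while loop (this happens to every vertex whose cluster was selected in round $k-1$), and in that step there is \emph{no} cluster-selection randomness and no possibility of re-clustering, so the argument ``if $C_v \ge K$ then with probability $1-e^{-pK/8}$ at least $pK/2$ adjacent clusters are selected and $v$ would have been re-clustered'' simply does not apply. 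Since these vertices contribute both finalization-recorded edges and phantom edges (the query algorithm's $r_0$ can be the final round), the bound $C_v = \Oish(\rho n^{1/k})$ must be established for them too, and as written your proof does not do so.

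The fix is short and is exactly what the paper does for this case: each of the $n$ initial clusters survives to the final step only if it is selected in all $k-1$ rounds, which happens independently with probability $n^{-(k-1)/k}$, so by a Chernoff bound at most $\Oish(n^{1/k})$ clusters remain at the start of the final step with high probability; hence $C_v = \Oish(n^{1/k}) \le \Oish(\rho n^{1/k})$ trivially for every vertex finalized there, and the rest of your accounting goes through unchanged. (Your parenthetical about $\rho < c\log n$ is unnecessary---taking $K = \Theta(\rho n^{1/k}\log n)$ already makes both failure probabilities polynomially small, and the extra $\log n$ is absorbed by $\Oish$---and note that sampling $c\deg(v)\log n/\rho \ge \deg(v)$ edges with replacement does not literally ``sample every edge,'' though nothing in the main argument relies on that claim.)
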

\begin{proof}
We consider each \yes{} case individually. We can bound the number of edges added in preprocessing by handling them in two cases. Each vertex is re-clustered at most $k-1$ times. Hence, at most $O(nk) = \Oish(n)$ edges are recorded from re-clustering vertices. 

The remaining recorded edges are those recorded upon finalizing a vertex. Fix a vertex $v$ and suppose it is finalized in round $\ell < k$. When $v$ is finalized, with high probability it is adjacent to at most $\rho$ selected clusters. This is because, if $v$ is adjacent to at least $\rho$ selected clusters, then, by the Chernoff bound (for negatively correlated random variables), the probability that an edge to a node in a selected cluster is sampled is at least $1 - n^{-c/4}.$

Conditioned on the above event, with probability at least  $1 - n^{-c/4}$ we have that $v$ is adjacent to at most $c \rho n^{1/k} \log n = \Oish(\rho n^{1/k})$ clusters in round $\ell$. This follows from the Chernoff bound, since each cluster is selected independently with probability $1/n^{1/k}$ in each round.

Now suppose that $v$ is finalized in the $k$\textsuperscript{th} round. Note that the cluster each vertex is initialized into remains in the $k$\textsuperscript{th} round with probability $1/n^{(k-1)/k}$. All clusters are chosen from these initial clusters, so, by the Chernoff bound, there are at most $c n^{1/k} \log n$ clusters remaining in the $k$\textsuperscript{th} round with probability at least $1 - n^{-c/4}$. Hence, in this case $v$ is adjacent to $\Oish(n^{1/k})$ clusters with probability at least $1 - n^{-c/4}$ in this case.

Finally, in both cases, for each cluster adjacent to $v$ via at least $\rho$ edges, one is recorded with probability at least $1 - n^{-c/4}$, by the Chernoff bound (for negatively correlated random variables).

Union bounding each of these events (and the final event over each adjacent cluster) and over every $v \in V$, we have that, with probability at least $1 - n^{(16 -c)/4}$, every $v \in V$ is adjacent to at most $\Oish(\rho n^{1/k})$ clusters when finalized and has an edge recorded to each cluster adjacent via at least $\rho$ edges. Hence, the total number of edges recorded in preprocessing is $\Oish(n^{1 + 1/k}\rho)$ with probability at least $1 - n^{(16 -c)/4}$.

Now, we consider the final \yes{} case. Suppose that the edge $(s,t)$ was queried, $s$ was finalized in round $r_0$, and $t$ is not finalized in a round earlier than $s$. At the beginning of round $r_0$ we know that $s$ and $t$ were not in the same cluster. So, the edge from $s$ to $t$ is an edge from $s$ to a different cluster. Moreover, no edge from $s$ to the cluster of $t$ was recorded. Hence, by the above, with probability at least $1 - n^{(16 -c)/4}$ at most $\Oish(n^{1 + 1/k}\rho^2)$ edges can be added in this case. Setting $c$ sufficiently high then yields the desired result.

\end{proof}

Combining the above, we have the following.
\begin{theorem}
Let $G$ be an $n$-node, $m$-edge graph, and let $k = \Oish(1)$. Then, there exists an $\Oish(n + m/\rho)$ time algorithm that with high probability computes a $(2k-1)$-spanner $H \subseteq G$ of size $\Oish(n^{1 + 1/k}\rho^2)$, outputting $H$ as an adjacency oracle.
\end{theorem}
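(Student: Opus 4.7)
The plan is essentially to package together the three components already proved in the preceding lemmas and the supporting claim. The preprocessing and query algorithms have been fully specified; what remains is to verify (a) the runtime bound, (b) that the query algorithm provides an $\Oish(1)$-time adjacency oracle interface to a $(2k-1)$-spanner $H$, and (c) the size bound on $H$. Each ingredient already exists, so the theorem is a combine-and-union-bound step.

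First I would confirm the runtime. In each of the $k - 1$ main rounds and the final finalization pass, every vertex $v$ is visited at most once and we sample $\Oish(\deg(v)/\rho)$ of its incident edges, doing $\Oish(1)$ work per sampled edge using the array/set data structures described (cluster lookup, finalization flag, and adjacent-cluster set). Summing $\deg(v)$ over $V$ gives $\Oish(n + m/\rho)$ per round, and since $k = \Oish(1)$, total preprocessing is $\Oish(n + m/\rho)$. The query algorithm performs a constant number of these same $\Oish(1)$-time data structure lookups and is deterministic given the preprocessed state, so it satisfies the adjacency oracle definition.

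Next I would invoke the already-established correctness lemma to conclude that the induced subgraph $H$ is a $(2k-1)$-spanner, recalling that its proof rests on Claim~\ref{cla: 2k-1 cluster diameter} that any surviving cluster at the start of round $r$ has recorded-edge diameter at most $2r-2$. Hence any edge $(s,t) \in E(G)$ receiving a \no{} answer is witnessed by a recorded-edge $s \leadsto t$ path of length at most $2k-1$, establishing the stretch guarantee.

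Finally I would invoke Lemma~\ref{lem: high stretch edge bd} for the size bound $|E(H)| = \Oish(n^{1+1/k}\rho^2)$. Combining these three ingredients yields the theorem after a union bound over the polynomially many high-probability events used in the size lemma (each round having $\Oish(n^{1/k})$ surviving clusters, each finalized vertex being adjacent to $\Oish(\rho)$ selected clusters, and each vertex-cluster pair with $\geq \rho$ edges sampling at least one of them); choosing the implicit constant $c$ in the sampling rate sufficiently large makes the failure probability $n^{-\Omega(1)}$. The only mildly delicate point is that the per-vertex edge samples inside a single adjacency list are not independent, but they are negatively correlated, so the Chernoff bounds invoked in Lemma~\ref{lem: high stretch edge bd} still apply; this is the step I would expect to be the main (albeit minor) obstacle, and it has already been handled in that lemma.
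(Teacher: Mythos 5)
Your proposal matches the paper's own argument: the theorem is indeed just the combination of the stated preprocessing-time analysis ($\Oish(n+m/\rho)$ per round, $k=\Oish(1)$ rounds), the correctness lemma relying on Claim \ref{cla: 2k-1 cluster diameter}, and the size bound of Lemma \ref{lem: high stretch edge bd}, with a union bound over the high-probability events. No gaps; this is essentially the same proof as in the paper.
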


\section*{Acknowledgments}

We are grateful to Merav Parter, Nathan Wallheimer, Amir Abboud, Ron Safier, and Oded Goldreich for references and technical discussions on the paper.
We are also grateful to an anonymous reviewer for exceptionally helpful and thorough comments.

\bibliographystyle{plain}
\bibliography{references}
\end{document}